\numberwithin{equation}{section}
\newtheorem{definition}{Definition}[section]
\newtheorem{proposition}{Proposition}[section]
\begin{document}
	\begin{frontmatter}
		\title{Generalized higher connections and Yang-Mills} 
		\author{Danhua Song}
		\ead{danhua_song@163.com}
		\author{Kai Lou}
		\author{Ke Wu}
		\ead{wuke@cnu.edu.cn}
		\author{Jie Yang}
		\address{School of Mathematical Sciences, Capital Normal University, Beijing 100048, China}
		\date{}
		
		\begin{abstract} 
			We first extend Generalized Differential Calculus (GDC) to higher structures and create generalized $G$-invariant bilinear forms. In addition, we also focus on developing generalized 2- and 3-connection theories in the framework of GDC. Then, we derive the higher Bianchi-Identities and study the gauge transformations for those generalized higher connections.  Finally, we establish the generalized 2- and 3-form Yang-Mills theories based on GDC and obtain the corresponding fields equations.
			
		\end{abstract}
		
		\begin{keyword}
			generalized differential calculus, generalized forms, higher connections, gauge transformations, higher Bianchi-Identities, higher Yang-Mills. 
		\end{keyword}	
	\end{frontmatter}

	\tableofcontents

	\section{Introduction}
	In resent years, generalized differential forms have been discussed in a number of literatures and they are employed in many different geometrically and physically contexts. The algebra and calculus of ordinary differential forms are extended to an algebra and calculus of different type of generalized differential forms, namely GDC.  In this paper, we consider the extensions of the generalized differential forms to higher structures. We develop the GDC based on higher groups and construct generalized higher connection theories to develop higher Yang-Mills.
	 
	 The generalized differential forms are based on forms of negative degree which were first introduced by Sparling \cite{Sparling1, Sparling2} in order to extend twistor results to half-flat space-times and to associate an abstract twistor space with general analytic solutions of Einstein's vacuum field equations. 
	 Subsequently, Nurowski and Robinson developed the generalized forms by using the minus $1$-forms \cite{PNR, pnr}.  They established Hodge star operator, co-differential and Laplacian operators on the space of generalized differential forms. These definitions are similar to those for ordinary forms, but there are some significant differences.  
	  Based on those structures above, they found that the generalized forms can be applied in mechanical and physical field theories, such as generalized form of Hamiltonian systems, scalar fields,  Einstein's vacuum field equations, Maxwell and Yang-Mills fields. It is understood that generalized forms can
provide a natural formalism within which to study potentials and gauge
conditions and the extension of zero rest mass systems to systems with nonzero rest mass occurs naturally.
	  Besides, Guo et al. introduced and studied the generalized topological field theories \cite{GHY} in 2002.  They found that a direct relation between Chern-Simons theory and BF theory, as two typical sorts of  topological field theories, is presented by using GDC. They established the generalized Chern-Weil homomorphism for generalized curvature invariant polynomials, and showed that BF gauge theory can be obtained from the generalized second Chern class.  
	  In 2003, they presented a general approach to construct a class of generalized topological field theories with constraints  \cite{ghy}. They found that the ordinary BF formulations of general relativity, Yang-Mills theories, and $\mathscr{N}=1, 2$ chiral super-gravity can  be  reformulations as generalized topological fields with constraints. 
	   Recently, the generalized differential forms were developed by Robinson in a number of works  \cite{Robbinson0, Robinson2, Robbinson4, Robbinson5, Robinson1, Robinson3}.
	   In the mean time, analogous ideas have been explored in the dual context, and generalized vector fields are introduced and studied \cite{Robinson6, gv}.

	By using the GDC, the connection theory on the principle bundle $P(M, G)$ with gauge group $G$ is extended to the generalized connection theory \cite{Robinson1, Robinson2, Robinson3, Robbinson4, Robbinson5}, where the ordinary $\mathcal{g}$-valued connection $1$-form and a curvature $2$-form  are extended to a generalized $\mathcal{g}$-valued connection $1$-form  and a curvature $2$-form.
	It has been shown in \cite{GHY} that the generalized curvature $2$-form satisfies the generalized Bianchi-Identity. As one of the applications of generalized connection theory,  a generalized form version of the Yang-Mills theory, which is the simplest nonabelian gauge theory, is presented in \cite{pnr}. Namely, the ordinary Lagrangian of Yang-Mills theory may be simply written as a generalized term. 
	
	Recently, there are some strong connections coming from physics and from higher category theory,  where 2-gauge theory  \cite{Baez.2010, Baez2005HigherGT, Bar, JBUS} and 3-gauge theory \cite{Faria_Martins_2011, Saemann:2013pca, doi:10.1063/1.4870640, Fiorenza:2012mr} have been developed. Roughly speaking, the 2-gauge theory is about 2-connections on 2-bundles and 3-gauge theory is about 3-connections on 3-bundles. 
	In the related work \cite{2002hep.th....6130B}, Baez generalize the Yang-Mills theory to the 2-form Yang-Mills theory based on the 2-connection theory.
	Based on our previous paper \cite{sdh}, we extend the Yang-Mills theory to the 3-form Yang-Mills theory based on the 3-connection theory.
	In this paper, we intend to develop generalized higher connection theories from GDC view to reformulate the higher Yang-Mills theories. 
	
	What we do in our article can be summarized in the following outline:
		\begin{enumerate}[1. ]
			\item In the section \ref{sec2}, we review the extension of the ordinary forms to the generalized forms. In addition, we introduce the definitions of the generalized exterior product, generalized exterior derivative, hodge star operator and integration for generalized forms. Finally, we give a new integration definition to be used in this paper.
			
			\item  In the section \ref{sec3}, we generalize the Lie algebra valued differential forms to the generalized case. To make sense of all calculations in the generalized higher connection theories, we shall need to establish some new regulations and give properties for the Lie algebra valued generalized differential forms.
			
			\item In the section \ref{sec4}, we shall give the definition of generalized $G$-invariant forms, and give some new generalized maps for the Lie algebra valued generalized differential forms.
			
			\item In the section \ref{sec5}, we develop the generalized higher connections as extensions of ordinary higher connections, and introduce generalized gauge transformations. Finally, we get generalized higher Bianchi-Identities based on GDC.
			
			\item In the section \ref{sec6}, we generalize the higher Yang-Mills to the generalized higher Yang-Mills.
			
		\end{enumerate}
 
In order to make the calculations and arguments of the generalized forms  reasonable, a selection of salient results from previous papers and the detailed calculation processes are given in Appendixes.

%	These include generalized form analogues of Hamiltonian systems, scalar fields, Maxwell and Yang-Mills fields,
%and Einstein’s vacuum field equations. It is seen that generalized forms
%provide a natural formalism within which to consider potentials and gauge
%conditions.

	   \section{Generalized differential forms}\label{sec2}
	   In this section, we recall generalized differential forms of type $N$, which is represented by a number of different ways, on an $n$ dimensional manifold $M$ \cite{PNR, Robinson1, Robinson3}.
	   However, we will focus on the type $1$ generalized forms in this paper.
	   	A generalized  $p$-form of type $N=1$ on $M$ is defined by an ordered pair of an ordinary $p$-form and an ordinary $p+1$-form on the same manifold. Correspondingly, the ordinary exterior product $\wedge$ is replaced by the generalized exterior product $\boldsymbol{\wedge}$, and the ordinary exterior differential $d$ is replaced by the generalized exterior differential $\textbf{d}$ that  is also nilpotent.
	   A Hodge star operation $\star$ and the notion of duality for generalized forms are defined first in \cite{pnr}, where the star operation is used in the definition of an inner product of, and co-differential and Laplacian operators for, generalized p-forms. 
	   The integration of type $N=1$ generalized $p$-form over poly-chains is defined in \cite{Robinson3}, and another definition of the integration of type $N=1$ generalized $p$-form is presented in \cite{GHY}.
	   Although these definitions are similar for ordinary forms, there are some significant differences.
	   There are a small number of obvious changes of notation from  reference \cite{pnr}, but most of the conventions are retained.
	   Ordinary forms are denoted by roman letters, and the degree of a form is indicated above it. In particular generalized forms are denoted by  script roman letters to distinguish them from ordinary forms.
	   
       Before that, let's introduce another representation. A basis for type $N$ generalized forms consists of any basis for ordinary forms on $M$ augmented by $N$ linearly independent minus $1$-forms \{$\xi^i$\}, ($i = 1, 2, ... , N$).  Minus $1$-forms are assumed to satisfy the ordinary distributive and associative laws of exterior algebra and the exterior product rule
  $$\xi^i \xi^j = - \xi^j \xi^i, \ \ \  \overset{p}{a} \, \xi^i = (-1)^p  \xi^i  \,\overset{p}{a} ,$$
  together with a condition of linear independence, $\xi^1 \xi^2 ... \xi^N \neq 0 $ and $\xi ^2 =0$, and where $\overset{p}{a}$ is any ordinary $p$-form.
  In order to ensure that their exterior derivatives are zero-forms and that $d^2 = 0$, they are required to satisfy the condition $d \, \xi^i = k^i$, where $k^i$ is a constant for any $i=1, ..., N$.
	    Consequently, a generalized $p$-form of type $N$ will be written as
	    \begin{align}\label{gpf}
	    \mathscr{M} = \overset{p}{a} +  \overset{p+1}{a}_{i_1} \, \xi^{i_1} + \dfrac{1}{2!} \overset{p+2}{a}_{i_1 i_2} \, \xi^{i_1}\xi^{i_2} + ... + \dfrac{1}{j!} \overset{p+j}{a}_{i_1 ... i_j} \, \xi^{i_1} ... \xi^{i_j} + \dfrac{1}{N!} \overset{p+N}{a}_{i_1 ... i_N} \, \xi^{i_1} ... \xi^{i_N}.
	    \end{align}
 Here $\overset{p}{a}$, $\overset{p+1}{a}_{i_1}$, ..., $\overset{p+j}{a}_{i_1 ... i_j} = \overset{p+j}{a}_{[i_1 ... i_j]}$, ..., $\overset{p+N}{a}_{i_1 ... i_N} $ are ,respectively, ordinary $p$-, $(p+1)$-, ..., $(p+j)$-, ..., $(p+N)$- ordinary forms; $j = 1, 2, ..., N$ and $i_1, ..., i_j, ..., i_N$ range and sum over $1, 2, ..., N$. It then follows that generalized forms satisfy the usual distributive and associative laws of exterior algebra, together with the product rule $\mathscr{M}\boldsymbol{\wedge} \mathscr{N} = (-1)^{pq} \mathscr{N} \boldsymbol{\wedge} \mathscr{M}$, where $\mathscr{N}$ is a generalized $q$-form of type $N$.
  From $\eqref{gpf}$, we can see that a generalized $p$-form of type $N = 0$ is an ordinary differential $p$-form. 
  Further discussion of type $N$ generalized forms can be found in \cite{Robinson1}.
  
  In our work, we consider the type $N=1$ generalized $p$-form $\mathscr{M}$. Let
           \begin{align}
              \mathscr{M} = \overset{p}{a} +  \overset{p+1}{a}\xi,
          \end{align}
where $\overset{p}{a}$ and $\overset{p+1}{a}$ are, respectively, ordinary $p$- and $(p+1)$-forms and $p$ can take integer values from $-1$ to $n$. When $p=-1$, $\overset{-1}{a} = 0$.
Hence, given a minus $1$-form $\xi$, the generalized $p$-form of type $N=1$ $\mathscr{M}$ can be written as an ordered pair of ordinary $p$- and $p+1$-forms, that is
	   \begin{align}
	    \mathscr{M} = ( \overset{p}{a} ,  \overset{p+1}{a}) \in \Lambda^p (M)\times \Lambda^{p+1}(M),
	   \end{align}
where $\Lambda^p(M)$ denotes the module of ordinary $p$-forms on the differentiable manifold $M$. A minus $1$-form is defined to be an ordered pair
       \begin{align}
      p = -1: \ (0, \overset{0}{a}),
       \end{align}   
where $\overset{0}{a}$ is a function on $M$.  A function on $M$ will be identified with the generalized $0$-form 
        \begin{align}
     p =0:\ (\overset{0}{a}, 0).
       \end{align}

	  In this paper, we will represent the generalized forms by the ordered pairs.  If $ \mathscr{M} = ( \overset{p}{a} ,  \overset{p+1}{a})$ and $\mathscr{N}= ( \overset{q}{b} ,  \overset{q+1}{b})$, then the generalized exterior product and the generalized exterior derivative are rewritten as
	  \begin{align}
	  	 \mathscr{M} \boldsymbol{\wedge} \mathscr{N} = (\overset{p}{a} \wedge \overset{q}{b},  \overset{p}{a} \wedge \overset{q+1}{b} + (-1)^q \,\overset{p+1}{a} \wedge  \overset{q}{b}),
	  \end{align}
	   \begin{align}
	   	\textbf{d} \mathscr{M} = (d \, \overset{p}{a} + (-1)^{p+1} k \, \overset{p+1}{a},  d \, \overset{p+1}{a}),
	   \end{align}
where $k$ is a constant which is assumed to be non-zero.	   
	   
	   A hodge star operator for generalized forms \cite{pnr} is defined by
	   \begin{align}\label{star}
	   \star \mathscr{M} = ((-1)^{n - p -1} \ast \overset{p+1}{a}, \ast \overset{p}{a}),
	   \end{align}
	   where $\ast$ is  the hodge star operator for ordinary forms.
	   Then the duality of a generalized $p$-form is a $(n-p-1)$-form
 rather than a $(n-p)$-form.
	   Hence,
	   \begin{align}
	   	\mathscr{M}\boldsymbol{\wedge} \star \mathscr{N}= ((-1)^{n - q -1} \, \overset{p}{a} \wedge \ast \overset{q+1}{b}, \overset{p}{a} \wedge \ast \overset{q}{b} + \overset{p+1}{a} \wedge \ast \overset{q+1}{b}).
	   \end{align}
	   
	   An integration of  the type $N$ generalized form is defined on a $p$-poly-chain of type $N$ \cite{Robinson3, Robbinson5}. The type $N=1$ poly-chains are defined to be ordered pairs of ordinary $p$- and $p+1$-chains in $M$
	   \begin{align}
	  \bm{ \tilde{c}}_p = (c_p, c_{p+1}).
	   \end{align}
	   For a generalized $p$-form $ \mathscr{M} = ( \overset{p}{a} ,  \overset{p+1}{a})$, the integration
on  $\bm{ \tilde{c}}_p = (c_p, c_{p+1})$ can be defined as usual by
	   \begin{align}\label{in1}
	   	   \int_{\bm{ \tilde{c}}_p} \mathscr{M} =  \int_{\bm{ \tilde{c}}_p}( \overset{p}{a} ,  \overset{p+1}{a}) = \int_{c_p}  \overset{p}{a} + \int_{c_{p+1}}  \overset{p+1}{a}.
	   \end{align}
   In related work \cite{GHY}, there is a different definition
      \begin{align}\label{in2}
   	\int_{\bm{ \tilde{c}}_p} \mathscr{M} =  \int_{\bm{ \tilde{c}}_p}( \overset{p}{a} ,  \overset{p+1}{a}) = \int_{c_p}  \overset{p}{a}.
   \end{align}
As we shall see, the following definition will be naturally associated to \eqref{in1} and similar to \eqref{in2}
    \begin{align}\label{inter}
   	\int_{\bm{ \tilde{c}}_p} \mathscr{M} =  \int_{\bm{ \tilde{c}}_p}( \overset{p}{a} ,  \overset{p+1}{a}) = \int_{c_{p+1}}  \overset{p+1}{a}.
   \end{align}
 The above definition opens up a possibility to use it as a way of integrating the generalized differential forms on the $n$ dimensional manifold $M$.  
Given a generalized $(n-1)$-form $\mathscr{M}= (\overset{n-1}{a}, \overset{n}{a})$ on $M$,  we  define its integration on $(\partial M, M)$ as
\begin{align}\label{in3}
\int_{(\partial M, M)} \mathscr{M} = \int_{(\partial M, M)} (\overset{n-1}{a}, \overset{n}{a}) =  \int_{M} \overset{n}{a},
\end{align}
hopefully ending up with some desired results in this paper.

The method of calculation used in this work can be extended to the case of higher type, but the computations will be increasingly lengthy, and detail of this case are beyond the scope of the present article.
Besides, it is clear that these results can extend straightforwardly to the case where the generalized forms have values in the Lie algebra of a Lie group. In the next section, we shall construct the generalized forms with values in differential (2-)crossed module, which is also one of the main results of this paper.
 It can easily be seen directly that generalized higher connections can be constructed by replacing ordinary forms with generalized forms valued in the algebras of higher groups in the usual definitions.
In this paper, we restrict our attention to Lie $2$-algebras and Lie $3$-algebras, which are essentially the same as differential crossed modules and differential $2$-crossed modules respectively \cite{Faria_Martins_2011, 2002hep.th....6130B, Crans} and references therein.  

\section{Generalized forms valued in differential crossed module and 2-crossed module}\label{sec3}
Let us consider a Lie crossed module $(H,G; \alpha,\vartriangleright)$ \cite{ Beaz, Crans, Brown}, which  is an algebraic structure specified by two Lie groups $G$ and $H$ , together with the homomorphism $\alpha$ and an action $\vartriangleright$ of the group $G$ on $H$, and $(\mathcal{h},\mathcal{g}; \alpha,\vartriangleright)$ is the
associated differential crossed module. 
In addition, we consider a Lie 2-crossed module $(L,H,G; \beta, \alpha, \vartriangleright, \left\{,\right\})$ \cite{Radenkovic:2019qme, Martins:2010ry, TRMV1, YHMNRY, Mutlu1998FREENESSCF}, together with the homomorphisms $\alpha$ and $\beta$, where $\vartriangleright$ is a smooth action of the Lie group $G$ on all three Lie groups $G$, $H$ and $L$ by automorphisms and $\{, \} : H \times H \longrightarrow L$ is a $G$-equivariant map, and $(\mathcal{l},\mathcal{h},\mathcal{g};\beta,\alpha,\vartriangleright,\left\{,\right\})$ is the associated differential 2-crossed module.  See Appendix \ref{CM} for more details.
The definitions and properties of the ordinary differential forms with values in differential (2-)crossed module have been introduced in our previous paper \cite{SDH} and  related work \cite{doi:10.1063/1.4870640}. We shall give a small taste of these reviews to the literature, seeing more details in Appendix \ref{LAVDF}.

In this section, we shall develop the type $N=1$ generalized forms valued in differential (2-)crossed module, which will be given by the same formulas as in ordinary forms. All related maps in differential (2-)crossed module, such as $\alpha$, $\beta$, and so on, can be first extended in a obvious way to the generalized case. Hence, the generalized calculations shall be developed in this case. 

We can see from the introduction in the previous section that a Lie algebra $\mathcal{g}$-valued generalized $p$-form $\mathscr{A}$ is defined as a $\mathcal{g}$-valued ordered pair of a $\mathcal{g}$-valued ordinary $p$-form  $\overset{p}{A} \in \Lambda^p(M, \mathcal{g})$ and a $\mathcal{g}$-valued ordinary $p+1$-form  $\overset{p+1}{A} \in \Lambda^{p+1}(M, \mathcal{g})$
\begin{align}
\mathscr{A}= (\overset{p}{A}, \overset{p+1}{A}).
\end{align}
 Let $\Lambda^p (M, \mathcal{g},N)$ be a vector space of $\mathcal{g}$-valued generalized differential $p$-forms of type $N$. Note that, $\Lambda^p(M, \mathcal{g},0) =\Lambda^p(M, \mathcal{g})$ is the ordinary case. In addition, it is a straightforward matter to show that there is an identity
 \begin{align}
 	\mathscr{A} \boldsymbol{\wedge}^{\bm{\left[,\right]}} \mathscr{A'} = (\overset{p}{A}\wedge^{[,]}\overset{q}{A'}, \overset{p}{A}\wedge^{[,]}\overset{q+1}{A'} + (-1)^q \overset{p+1}{A}\wedge^{[,]}\overset{q}{A'}),
 \end{align}
 due to $\mathscr{A} \boldsymbol{\wedge}^{\bm{\left[,\right]}} \mathscr{A'}  =  \mathscr{A} \boldsymbol{\wedge}\mathscr{A'} + (-1)^{pq+1}\mathscr{A'} \boldsymbol{\wedge}\mathscr{A} $ and where $\mathscr{A'}=(\overset{q}{A'}, \overset{q+1}{A'})$.
 
 \begin{definition}
Given a crossed module $(H,G; \alpha,\vartriangleright)$ with the underlying differential structure $(\mathcal{h},\mathcal{g}; \alpha,\vartriangleright)$, generalized maps  $\bm{\alpha}: \Lambda^t(M, \mathcal{h}, 1) \longrightarrow  \Lambda^t(M, \mathcal{g}, 1) $ and $\blacktriangleright : \Lambda^p(M, \mathcal{g}, 1) \times \Lambda^t(M, \mathcal{h}, 1) \longrightarrow   \Lambda^{p+t}(M, \mathcal{h}, 1) $ are defined to be 
\begin{align}\label{alpha}
	\bm{\alpha}(\mathscr{B}) := (\alpha(\overset{t}{B}), \alpha(\overset{t+1}{B})),
\end{align}
 and 
\begin{align}\label{action1}
	\mathscr{A} \boldsymbol{\wedge}^{\blacktriangleright} \mathscr{B}  := (\overset{p}{A} \wedge^{\vartriangleright} \overset{t}{B}, \overset{p}{A}\wedge^{\vartriangleright} \overset{t+1}{B} + (-1)^t \overset{p+1}{A}\wedge^{\vartriangleright} \overset{t}{B}),
\end{align}
where $\mathscr{A}= (\overset{p}{A}, \overset{p+1}{A}) \in  \Lambda^p (M, \mathcal{g}, 1) $ and $\mathscr{B} = (\overset{t}{B}, \overset{t+1}{B}) \in  \Lambda^t (M, \mathcal{h}, 1) $.
 \end{definition}
 
 \begin{definition}
 	Given a 2-crossed module $(L,H,G; \beta, \alpha, \vartriangleright, \left\{,\right\})$ with the underlying differential structure $(\mathcal{l},\mathcal{h},\mathcal{g};\beta,\alpha,\vartriangleright,\left\{,\right\})$, generalized maps $\bm{\alpha}$ and $\blacktriangleright$ are, respectively, same as \eqref{alpha} and \eqref{action1}, besides there are generalized maps $\bm{\beta}: \Lambda^q(M, \mathcal{l}, 1) \longrightarrow  \Lambda^q(M, \mathcal{h}, 1) $, $\blacktriangleright : \Lambda^p(M, \mathcal{g}, 1) \times \Lambda^q(M, \mathcal{l}, 1) \longrightarrow   \Lambda^{p+q}(M, \mathcal{l}, 1) $ and $\bm{\{, \}}: \Lambda^{t_1}(M, \mathcal{h}, 1) \times \Lambda^{t_2}(M, \mathcal{h}, 1) \longrightarrow   \Lambda^{t_1 + t_2}(M, \mathcal{l}, 1)$  defined as follows
 	\begin{align}
 		\bm{\beta}(\mathscr{C}) := (\beta(\overset{q}{C}), \beta(\overset{q+1}{C})),
 	\end{align}
 \begin{align}
 	\mathscr{A} \boldsymbol{\wedge}^{\blacktriangleright} \mathscr{C}  := (\overset{p}{A} \wedge^{\vartriangleright} \overset{q}{C}, \overset{p}{A}\wedge^{\vartriangleright} \overset{q+1}{C} + (-1)^q \overset{p+1}{A}\wedge^{\vartriangleright} \overset{q}{C}),
 \end{align}
and 
\begin{align}
	\mathscr{B}_1\boldsymbol{\wedge}^{\bm{\{, \}}}\mathscr{B}_2:= (\overset{t_1}{B_1}\wedge^{\{,\}}\overset{t_2}{B_2}, \overset{t_1}{B_1}\wedge^{\{,\}}\overset{t_2+1}{B_2}+ (-1)^{t_2}\overset{t_1+1}{B_1}\wedge^{\{,\}}\overset{t_2}{B_2}),
\end{align}
where $\mathscr{A}= (\overset{p}{A}, \overset{p+1}{A}) \in  \Lambda^p (M, \mathcal{g}, 1) $, $\mathscr{C} = (\overset{q}{C}, \overset{q+1}{C}) \in  \Lambda^q (M, \mathcal{l}, 1) $ and $\mathscr{B}_i = (\overset{t_i}{B_i}, \overset{t_i +1}{B_i}) \in \Lambda^{t_i} (M, \mathcal{h}, 1)(i = 1, 2)$.
 \end{definition}

The following propositions shall give some properties for those generalized maps which are similar to the corresponding ordinary maps in the formalism.

\begin{proposition}
	For $\mathscr{A} \in  \Lambda^k (M, \mathcal{g}, 1)$, $\mathscr{A}_1\in  \Lambda^{k_1} (M, \mathcal{g}, 1)$, $\mathscr{A}_2\in  \Lambda^{k_2} (M, \mathcal{g}, 1)$, $\mathscr{B} \in \Lambda^t (M, \mathcal{h},1)$, $\mathscr{B}_1 \in \Lambda^{t_1} (M, \mathcal{h}, 1)$ and $ \mathscr{B}_2 \in \Lambda^{t_2} (M, \mathcal{h}, 1)$, have
	\begin{enumerate}[1)]
		\setlength{\itemsep}{6pt}
		\item $\bm{ \alpha} (\mathscr{A}\boldsymbol{\wedge}^{\blacktriangleright} \mathscr{B})=\mathscr{A}\boldsymbol{\wedge}^{\bm{\left[,\right]}}\bm{ \alpha}(\mathscr{B})$;
		
		\setlength{\itemsep}{6pt}
		\item $\bm{ \alpha} (\mathscr{B}_1)\boldsymbol{\wedge}^\blacktriangleright \mathscr{B}_2= \mathscr{B}_1\boldsymbol{\wedge}^{\bm{\left[,\right]}}\mathscr{B}_2$;
		
		\setlength{\itemsep}{6pt}
		\item $\mathscr{A}\boldsymbol{\wedge}^{\blacktriangleright} (\mathscr{B}_1\boldsymbol{\wedge}^{\bm{\left[,\right]}} \mathscr{B}_2)=(\mathscr{A}\boldsymbol{\wedge}^{\blacktriangleright} \mathscr{B}_1)\boldsymbol{\wedge}^{\bm{\left[,\right]}}\mathscr{B}_2+ (-1)^{kt_1}\mathscr{B}_1\boldsymbol{\wedge}^{\bm{\left[,\right]}}(\mathscr{A}\boldsymbol{\wedge}^{\blacktriangleright} \mathscr{B}_2)$;
		
		\setlength{\itemsep}{6pt}
		\item $(\mathscr{A}_1\boldsymbol{\wedge}^{\bm{\left[,\right]}}\mathscr{A}_2)\boldsymbol{\wedge}^{\blacktriangleright} \mathscr{B}=\mathscr{A}_1 \boldsymbol{\wedge}^{\blacktriangleright}(\mathscr{A}_2\boldsymbol{\wedge}^{\blacktriangleright} \mathscr{B})+(-1)^{k_1 k_2 +1} \mathscr{A}_2\boldsymbol{\wedge}^{\blacktriangleright}(\mathscr{A}_1\boldsymbol{\wedge}^{\blacktriangleright} \mathscr{B})$.
	\end{enumerate}
\end{proposition}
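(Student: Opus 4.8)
The plan is to exploit the fact that every generalized map occurring here --- $\bm{\alpha}$, $\boldsymbol{\wedge}^{\blacktriangleright}$ and $\boldsymbol{\wedge}^{\bm{\left[,\right]}}$ --- is defined componentwise on ordered pairs by the \emph{same} formula as its ordinary counterpart, decorated only with the signs forced by the $\xi$-calculus. Writing a type $N=1$ form as $\mathscr{A}=\overset{k}{A}+\overset{k+1}{A}\,\xi$ with $\xi^{2}=0$ and $\overset{p}{A}\,\xi=(-1)^{p}\xi\,\overset{p}{A}$ as in \eqref{gpf}, each generalized product is exactly the $\xi$-bilinear extension of the corresponding ordinary product (for example $\mathscr{A}\boldsymbol{\wedge}^{\blacktriangleright}\mathscr{B}=(\overset{k}{A}+\overset{k+1}{A}\xi)\wedge^{\vartriangleright}(\overset{t}{B}+\overset{t+1}{B}\xi)$ reproduces \eqref{action1} once $\xi^{2}=0$ is used and the $\xi$'s are moved to the right), while $\bm{\alpha}$ is just $\alpha$ applied $\xi$-linearly. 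Consequently I would prove each of 1)--4) by reducing it to the corresponding ordinary identity for $\mathcal{g}$- and $\mathcal{h}$-valued ordinary forms established in Appendix \ref{LAVDF}: the equivariance of $\alpha$, the Peiffer identity, the derivation (Leibniz) property of $\vartriangleright$ over $[,]$, and the fact that $\vartriangleright$ realises the bracket action.

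Concretely, for each identity I expand both sides via the definitions into a single ordered pair and compare the two slots separately. Take 1) as the model: the left side is $\bm{\alpha}(\mathscr{A}\boldsymbol{\wedge}^{\blacktriangleright}\mathscr{B})=\big(\alpha(\overset{k}{A}\wedge^{\vartriangleright}\overset{t}{B}),\,\alpha(\overset{k}{A}\wedge^{\vartriangleright}\overset{t+1}{B})+(-1)^{t}\alpha(\overset{k+1}{A}\wedge^{\vartriangleright}\overset{t}{B})\big)$, using linearity of $\alpha$ in the second slot, while the right side is $\mathscr{A}\boldsymbol{\wedge}^{\bm{\left[,\right]}}\bm{\alpha}(\mathscr{B})=\big(\overset{k}{A}\wedge^{[,]}\alpha(\overset{t}{B}),\,\overset{k}{A}\wedge^{[,]}\alpha(\overset{t+1}{B})+(-1)^{t}\overset{k+1}{A}\wedge^{[,]}\alpha(\overset{t}{B})\big)$. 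Matching the first slots is precisely the ordinary equivariance $\alpha(A\wedge^{\vartriangleright}B)=A\wedge^{[,]}\alpha(B)$, and the two terms in the second slot are the same identity applied to $(\overset{k}{A},\overset{t+1}{B})$ and $(\overset{k+1}{A},\overset{t}{B})$; the $(-1)^{t}$ signs already agree on both sides. Property 2) is even shorter: both slots collapse to the Peiffer identity $\alpha(B_{1})\wedge^{\vartriangleright}B_{2}=B_{1}\wedge^{[,]}B_{2}$ applied to the relevant components, with matching $(-1)^{t_{2}}$ signs.

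Properties 3) and 4) follow the same template but carry the real arithmetic: the first slots reduce, respectively, to the ordinary Leibniz rule $\overset{k}{A}\wedge^{\vartriangleright}(\overset{t_1}{B_1}\wedge^{[,]}\overset{t_2}{B_2})=(\overset{k}{A}\wedge^{\vartriangleright}\overset{t_1}{B_1})\wedge^{[,]}\overset{t_2}{B_2}+(-1)^{kt_1}\overset{t_1}{B_1}\wedge^{[,]}(\overset{k}{A}\wedge^{\vartriangleright}\overset{t_2}{B_2})$ and to the action identity $(\overset{k_1}{A_1}\wedge^{[,]}\overset{k_2}{A_2})\wedge^{\vartriangleright}\overset{t}{B}=\overset{k_1}{A_1}\wedge^{\vartriangleright}(\overset{k_2}{A_2}\wedge^{\vartriangleright}\overset{t}{B})+(-1)^{k_1k_2+1}\overset{k_2}{A_2}\wedge^{\vartriangleright}(\overset{k_1}{A_1}\wedge^{\vartriangleright}\overset{t}{B})$. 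The second slots, however, are sums of several ordinary terms in which each factor can carry either its base degree or that degree shifted by $1$, according to whether it came from the ordinary part or the $\xi$-part. The main obstacle is therefore purely the sign bookkeeping: one must check that the component-level signs produced by the definitions of $\boldsymbol{\wedge}^{\blacktriangleright}$ and $\boldsymbol{\wedge}^{\bm{\left[,\right]}}$ recombine exactly into the single generalized signs $(-1)^{kt_1}$ in 3) and $(-1)^{k_1k_2+1}$ in 4), where now $k,t_1,k_1,k_2$ denote \emph{generalized} degrees. The cleanest way to discharge this --- and how I would actually finish --- is to bypass the term-by-term expansion and argue through the $\xi$-calculus: since $\xi$ is a graded variable of degree $-1$ with $\xi^{2}=0$ and all four ordinary identities are graded-multilinear relations respecting the Koszul sign rule, they are preserved verbatim under $\xi$-(multi)linear extension, the total degree of $\mathscr{A}=\overset{k}{A}+\overset{k+1}{A}\xi$ being $k$ throughout. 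This base-change observation makes the generalized signs automatic and reduces all four statements to their ordinary originals at once, with the explicit two-slot comparison above serving as the concrete verification.
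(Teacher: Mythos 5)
Your proposal is correct, and it overlaps with the paper's proof only in part. For identities 1) and 2) your explicit two-slot comparison is exactly what the paper does: expand both sides into ordered pairs via the definitions and invoke the ordinary identities (\ref{C1}) and (\ref{C2}) componentwise. Where you diverge is on 3) and 4): the paper grinds through the full term-by-term expansion of both slots and matches all the component-level signs against (\ref{C3}) and (\ref{C4}), whereas you replace that bookkeeping by the observation that the generalized operations are precisely the Koszul-signed $\xi$-bilinear extensions of the ordinary ones, that $\mathscr{A}=\overset{k}{A}+\overset{k+1}{A}\xi$ is homogeneous of total degree $k$ (since $\deg\xi=-1$), and that graded multilinear identities with Koszul signs are stable under such an extension. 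That argument is sound --- one can check, e.g., that applying (\ref{C3}) to $\overset{k+1}{A}\xi$ and pushing $\xi$ to the right reproduces the generalized sign $(-1)^{kt_1}$ --- and it buys conceptual clarity plus immunity to sign errors, and it explains \emph{why} the paper's expansions had to close up. Its cost is that, as stated, it leans on a general transfer principle you assert rather than prove; to make it self-contained you should note explicitly that the proofs of (\ref{C1})--(\ref{C4}) use only the graded-commutativity of the coefficient algebra $\Lambda^{\bullet}(M)$, hence remain valid verbatim over $\Lambda^{\bullet}(M)\otimes\Lambda[\xi]$, which is again graded-commutative because $\xi$ is odd and $\xi^{2}=0$. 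The paper's brute-force route avoids any such meta-argument at the price of length. Either way the result stands, and your hybrid presentation (explicit check for 1)--2), structural argument for 3)--4)) is a legitimate, arguably cleaner, proof.
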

              \begin{proof}
                  	\begin{enumerate}[1)]
		               	\setlength{\itemsep}{6pt}
		                \item 	Let $\mathscr{A}= (\overset{k}{A}, \overset{k+1}{A})$, and $\mathscr{B} = (\overset{t}{B}, \overset{t+1}{B})$, then
		                \begin{align*}
		               \bm{ \alpha} (\mathscr{A}\boldsymbol{\wedge}^{\blacktriangleright} \mathscr{B}) & =   \bm{ \alpha}(\overset{k}{A}\wedge^{\vartriangleright}\overset{t}{B}, \overset{k}{A}\wedge^{\vartriangleright}\overset{t+1}{B} + (-1)^t \overset{k+1}{A}\wedge^{\vartriangleright}\overset{t}{B})\\
		               &= (\alpha(\overset{k}{A}\wedge^{\vartriangleright}\overset{t}{B}), \alpha(\overset{k}{A}\wedge^{\vartriangleright}\overset{t+1}{B} + (-1)^t \overset{k+1}{A}\wedge^{\vartriangleright}\overset{t}{B}))\\
		               &= (\overset{k}{A}\wedge^{[,]}\alpha(\overset{t}{B}), \overset{k}{A}\wedge^{[,]}\alpha(\overset{t+1}{B}) + (-1)^t \overset{k+1}{A}\wedge^{[,]}\alpha(\overset{t}{B}))\\
		               & = \mathscr{A}\boldsymbol{\wedge}^{\bm{\left[,\right]}}\bm{ \alpha}(\mathscr{B}),
		                \end{align*}
		by using (\ref{C1}).
                 \item 	Let $\mathscr{B}_1 = (\overset{t_1}{B_1}, \overset{t_1+1}{B_1})$ and $\mathscr{B}_2 = (\overset{t_2}{B_2}, \overset{t_2+1}{B_2})$, then
                \begin{align*}
                	\bm{ \alpha} (\mathscr{B}_1)\boldsymbol{\wedge}^\blacktriangleright \mathscr{B}_2 & =  	\bm{ \alpha} (\overset{t_1}{B_1}, \overset{t_1+1}{B_1})\wedge^\blacktriangleright (\overset{t_2}{B_2}, \overset{t_2+1}{B_2})\\
%                	&=(\alpha (\overset{t_1}{B_1}), \alpha(\overset{t_1+1}{B_1}))\wedge^\blacktriangleright (\overset{t_2}{B_2}, \overset{t_2+1}{B_2})\\
                	&= (\alpha (\overset{t_1}{B_1})\wedge^{\vartriangleright} \overset{t_2}{B_2}, \alpha (\overset{t_1}{B_1})\wedge^{\vartriangleright} \overset{t_2+1}{B_2} + (-1)^{t_2}\alpha(\overset{t_1+1}{B_1}) \wedge^{\vartriangleright}\overset{t_2}{B_2})\\
                	&= (\overset{t_1}{B_1}\wedge^{[,]}\overset{t_2}{B_2},  \overset{t_1}{B_1}\wedge^{[,]}\overset{t_2+1}{B_2} + (-1)^{t_2}\overset{t_1+1}{B_1} \wedge^{[,]}\overset{t_2}{B_2})\\
                	&= \mathscr{B}_1\boldsymbol{\wedge}^{\bm{\left[,\right]}}\mathscr{B}_2,
                \end{align*}
                by using (\ref{C2}).
                \item Let $\mathscr{A}$, $\mathscr{B}_1$ and $\mathscr{B}_2$ be as above, then
                \begin{align*}
                \mathscr{A}\boldsymbol{\wedge}^{\blacktriangleright} (\mathscr{B}_1\boldsymbol{\wedge}^{\bm{\left[,\right]}} \mathscr{B}_2)&= (\overset{k}{A}, \overset{k+1}{A})\boldsymbol{\wedge}^{\blacktriangleright}(\overset{t_1}{B_1}\wedge^{[,]}\overset{t_2}{B_2},  \overset{t_1}{B_1}\wedge^{[,]}\overset{t_2+1}{B_2} + (-1)^{t_2}\overset{t_1+1}{B_1} \wedge^{[,]}\overset{t_2}{B_2})\\
                &=( \overset{k}{A}\wedge^{\vartriangleright}(\overset{t_1}{B_1}\wedge^{[,]}\overset{t_2}{B_2}), \overset{k}{A}\wedge^{\vartriangleright}(\overset{t_1}{B_1}\wedge^{[,]}\overset{t_2+1}{B_2} +(-1)^{t_2}\overset{t_1+1}{B_1}\wedge^{[,]}\overset{t_2}{B_2}) \\
                &+ (-1)^{t_1 + t_2} \overset{k+1}{A}\wedge^{\vartriangleright}(\overset{t_1}{B_1}\wedge^{[,]}\overset{t_2}{B_2}))\\
                &= ((\overset{k}{A}\wedge^{\vartriangleright} \overset{t_1}{B_1})\wedge^{[,]}\overset{t_2}{B_2} +(-1)^{k t_1}\overset{t_1}{B_1}\wedge^{[,]}(\overset{k}{A}\wedge^{\vartriangleright}\overset{t_2}{B_2}),  (\overset{k}{A}\wedge^{\vartriangleright} \overset{t_1}{B_1})\wedge^{[,]}\overset{t_2+1}{B_2} \\
                &+(-1)^{k t_1} \overset{t_1}{B_1}\wedge^{[,]}(\overset{k}{A}\wedge^{\vartriangleright}\overset{t_2+1}{B_2}) + (-1)^{t_2}(\overset{k}{A}\wedge^{\vartriangleright}\overset{t_1 +1}{B_1})\wedge^{[,]}\overset{t_2}{B_2} \\
                &  + (-1)^{k(t_1 +1)+ t_2} \overset{t_1+1}{B_1}\wedge^{[,]}(\overset{k}{A}\wedge^{\vartriangleright}\overset{t_2}{B_2})+ (-1)^{t_1 +t_2}(\overset{k+1}{A}\wedge^{\vartriangleright}\overset{t_1}{B_1}) \wedge^{[,]}\overset{t_2}{B_2} \\
                & + (-1)^{t_1(k+1) + t_1 + t_2}\overset{t_1}{B}\wedge^{[,]}(\overset{k+1}{A}\wedge^{\vartriangleright}\overset{t_2}{B_2})   )\\
                &= (\mathscr{A}\boldsymbol{\wedge}^{\blacktriangleright} \mathscr{B}_1)\boldsymbol{\wedge}^{\bm{\left[,\right]}}\mathscr{B}_2+ (-1)^{kt_1}\mathscr{B}_1\boldsymbol{\wedge}^{\bm{\left[,\right]}}(\mathscr{A}\boldsymbol{\wedge}^{\blacktriangleright} \mathscr{B}_2),
                \end{align*} 
         by using (\ref{C3}).
                
                  \item 
                  Let $\mathscr{A}_1 = (\overset{k_1}{A_1}, \overset{k_1+1}{A_1})$ and $\mathscr{A}_2 = (\overset{k_2}{A_2}, \overset{k_2+1}{A_2})$ and $\mathscr{B}$ be as above, then
                \begin{align*}
                (\mathscr{A}_1\boldsymbol{\wedge}^{\bm{\left[,\right]}}\mathscr{A}_2)\boldsymbol{\wedge}^{\blacktriangleright} \mathscr{B}&= (\overset{k_1}{A_1}\wedge^{[,]}\overset{k_2}{A_2}, \overset{k_1}{A_1}\wedge^{[,]}\overset{k_2 +1}{A_2}+ (-1)^{k_2}\overset{k_1+1}{A_1}\wedge^{[,]}\overset{k_2}{A_2})\boldsymbol{\wedge}^{\blacktriangleright}(\overset{t}{B}, \overset{t+1}{B})\\
                &= ( (\overset{k_1}{A_1}\wedge^{[,]}\overset{k_2}{A_2} )\wedge^{\vartriangleright}\overset{t}{B},  (\overset{k_1}{A_1}\wedge^{[,]}\overset{k_2}{A_2} )\wedge^{\vartriangleright}\overset{t+1}{B}+(-1)^t (\overset{k_1}{A_1}\wedge^{[,]}\overset{k_2+1}{A_2} )\wedge^{\vartriangleright}\overset{t}{B} \\
                &+ (-1)^{t+k_2}(\overset{k_1+1}{A_1}\wedge^{[,]}\overset{k_2}{A_2} )\wedge^{\vartriangleright}\overset{t}{B})\\
                &=(\overset{k_1}{A_1}\wedge^{\vartriangleright}(\overset{k_2}{A_2} \wedge^{\vartriangleright}\overset{t}{B}) + (-1)^{k_1 k_2 +1} \overset{k_2}{A_2}\wedge^{\vartriangleright}(\overset{k_1}{A_1}\wedge^{\vartriangleright}\overset{t}{B}),
                \overset{k_1}{A_1}\wedge^{\vartriangleright}(\overset{k_2}{A_2} \wedge^{\vartriangleright}\overset{t+1}{B}) \\
                &+ (-1)^{k_1 k_2 +1} \overset{k_2}{A_2}\wedge^{\vartriangleright}(\overset{k_1}{A_1}\wedge^{\vartriangleright}\overset{t+1}{B})+ (-1)^t (\overset{k_1}{A_1}\wedge^{\vartriangleright}(\overset{k_2 +1}{A_2} \wedge^{\vartriangleright}\overset{t}{B}) \\
                &+ (-1)^{k_1 (k_2+1) +1} \overset{k_2+1}{A_2}\wedge^{\vartriangleright}(\overset{k_1}{A_1}\wedge^{\vartriangleright}\overset{t}{B})) + (-1)^{t + k_2}(\overset{k_1+1}{A_1}\wedge^{\vartriangleright}(\overset{k_2}{A_2} \wedge^{\vartriangleright}\overset{t}{B}) \\
                & + (-1)^{(k_1+1 )k_2 +1} \overset{k_2}{A_2}\wedge^{\vartriangleright}(\overset{k_1+1}{A_1}\wedge^{\vartriangleright}\overset{t}{B})))\\
                &= \mathscr{A}_1 \boldsymbol{\wedge}^{\blacktriangleright}(\mathscr{A}_2\boldsymbol{\wedge}^{\blacktriangleright} \mathscr{B})+(-1)^{k_1 k_2 +1} \mathscr{A}_2\boldsymbol{\wedge}^{\blacktriangleright}(\mathscr{A}_1\boldsymbol{\wedge}^{\blacktriangleright} \mathscr{B}),
                \end{align*}  
                by using (\ref{C4}).

            \end{enumerate}
            \end{proof}
        
\begin{proposition}
		\begin{enumerate}[1)]
		\setlength{\itemsep}{6pt}
		\item 
	For $\mathscr{A} \in \Lambda^k(M, \mathcal{g}, 1)$, $\mathscr{A}' \in \Lambda^{k'}(M, \mathcal{g}, 1)$ and $\mathscr{C} \in \Lambda^q(M, \mathcal{l}, 1)$, then
	\begin{align}
		& \bm{\beta}(\mathscr{A} \boldsymbol{\wedge}^{\blacktriangleright}\mathscr{C}) = \mathscr{A} \boldsymbol{\wedge}^{\blacktriangleright}\bm{\beta}(\mathscr{C});\\
		& \mathscr{A} \boldsymbol{\wedge}^{\blacktriangleright}\mathscr{A}' = \mathscr{A}\boldsymbol{\wedge} \mathscr{A}' + (-1)^{k k' +1}\mathscr{A}' \boldsymbol{\wedge} \mathscr{A}.
	\end{align}
	    \item
	For $\mathscr{A} \in \Lambda^k(M, \mathcal{g}, 1)$, $\mathscr{B}_1 \in \Lambda^{t_1} (M, \mathcal{h}, 1)$, $ \mathscr{B}_2 \in \Lambda^{t_2} (M, \mathcal{h}, 1)$ and $\mathscr{W}\in \Lambda^q(M, \mathcal{w}, 1)$ $(\mathcal{w}=\mathcal{g}, \mathcal{h}, \mathcal{l})$, then
	\begin{align}
		& \bm{d}(\mathscr{A}\boldsymbol{\wedge}^{\blacktriangleright}\mathscr{W}) =d \mathscr{A} \boldsymbol{\wedge}^{\blacktriangleright} \mathscr{W} +(-1)^k \mathscr{A}\boldsymbol{\wedge}^{\blacktriangleright}d \mathscr{W};\\
		& \bm{d} (\mathscr{B}_1 \boldsymbol{\wedge}^{\bm{\{,\}}}\mathscr{B}_2) =  d \mathscr{B}_1 \boldsymbol{\wedge}^{\bm{\{,\}}}\mathscr{B}_2 + (-1)^{t_1}\mathscr{B}_1 \boldsymbol{\wedge}^{\bm{\{,\}}}d \mathscr{B}_2;\\
		& \mathscr{A}\boldsymbol{\wedge}^{\blacktriangleright}(\mathscr{B}_1 \boldsymbol{\wedge}^{\bm{\{,\}}}\mathscr{B}_2)= (\mathscr{A}\boldsymbol{\wedge}^{\blacktriangleright} \mathscr{B}_1 )\boldsymbol{\wedge}^{\bm{\{,\}}}\mathscr{B}_2  +(-1)^{k t_1}\mathscr{B}_1\boldsymbol{\wedge}^{\bm{\{,\}}}(\mathscr{A}\boldsymbol{\wedge}^{\blacktriangleright} \mathscr{B}_2).
	\end{align}
	  \begin{proof}
		\begin{enumerate}[1)]
			\setlength{\itemsep}{6pt}
			\item 	
			Let $\mathscr{A} = (\overset{k}{A}, \overset{k+1}{A})$, $\mathscr{A'} = (\overset{k'}{A}, \overset{k'+1}{A})$ and $\mathscr{C} = (\overset{q}{C}, \overset{q+1}{C})$, then
			\begin{align*}
			\bm{\beta}(\mathscr{A} \boldsymbol{\wedge}^{\blacktriangleright}\mathscr{C})&=	\bm{\beta} (\overset{k}{A} \wedge^{\vartriangleright} \overset{q}{C}, \overset{k}{A}\wedge^{\vartriangleright} \overset{q+1}{C} + (-1)^q \overset{k+1}{A}\wedge^{\vartriangleright} \overset{q}{C})\\
			&= (\beta(\overset{k}{A} \wedge^{\vartriangleright} \overset{q}{C}), \beta(\overset{k}{A}\wedge^{\vartriangleright} \overset{q+1}{C}) + (-1)^q\beta (\overset{k+1}{A}\wedge^{\vartriangleright} \overset{q}{C} )\\
			&=(\overset{k}{A} \wedge^{\vartriangleright} \beta(\overset{q}{C}), \overset{k}{A}\wedge^{\vartriangleright} \beta(\overset{q+1}{C}) + (-1)^q\overset{k+1}{A}\wedge^{\vartriangleright} \beta (\overset{q}{C} ))\\
			&=\mathscr{A} \boldsymbol{\wedge}^{\blacktriangleright}\bm{\beta}(\mathscr{C}),
				\end{align*}
			by using (\ref{C5}), and
			\begin{align*}
			\mathscr{A} \boldsymbol{\wedge}^{\blacktriangleright}\mathscr{A}' &= (\overset{k}{A}\wedge^{\vartriangleright}\overset{k'}{A}, \overset{k}{A}\wedge^{\vartriangleright}\overset{k'+1}{A}+ (-1)^{k'}\overset{k+1}{A}\wedge^{\vartriangleright}\overset{k'}{A})\\
			&=( \overset{k}{A}\wedge \overset{k'}{A} + (-1)^{k k' +1}  \overset{k'}{A}\wedge \overset{k}{A},\overset{k}{A}\wedge \overset{k'+1}{A}+ (-1)^{k (k' +1)+1}  \overset{k'+1}{A}\wedge \overset{k}{A}\\
			&+ (-1)^{k'}\overset{k+1}{A}\wedge \overset{k'}{A} +(-1)^{k'(k +1)+k'+1} \overset{k'}{A}\wedge \overset{k+1}{A})\\
			&= \mathscr{A}\boldsymbol{\wedge} \mathscr{A}' + (-1)^{k k' +1}\mathscr{A}' \boldsymbol{\wedge} \mathscr{A},
		\end{align*}
		by using (\ref{C6}).
		 \item
		 Let $\mathscr{A} $ be as above, and  $\mathscr{B}_1 = (\overset{t_1}{B_1}, \overset{t_1+1}{B_1})$, $\mathscr{B}_2 = (\overset{t_2}{B_2}, \overset{t_2+1}{B_2})$ and $\mathscr{W}=(\overset{q}{W},\overset{q+1}{W})$, then
		 \begin{align*}
		 \bm{d} (\mathscr{A}\boldsymbol{\wedge}^{\blacktriangleright}\mathscr{W}) &= \bm{d}(\overset{k}{A} \wedge^{\vartriangleright}\overset{q}{W}, \overset{k}{A} \wedge^{\vartriangleright}\overset{q+1}{W}+ (-1)^q \overset{k+1}{A} \wedge^{\vartriangleright}\overset{q}{W})\\
		 &=(d (\overset{k}{A} \wedge^{\vartriangleright}\overset{q}{W})+ (-1)^{k+q+1}k \overset{k}{A} \wedge^{\vartriangleright}\overset{q+1}{W} + (-1)^{k+1}k \overset{k+1}{A} \wedge^{\vartriangleright}\overset{q}{W}, d(\overset{k}{A} \wedge^{\vartriangleright}\overset{q+1}{W}) \\
		 & + (-1)^q d (\overset{k+1}{A} \wedge^{\vartriangleright}\overset{q}{W}))\\
		 &= (d \overset{k}{A} \wedge^{\vartriangleright}\overset{q}{W} +(-1)^k \overset{k}{A} \wedge^{\vartriangleright}d \overset{q}{W} +(-1)^{k+q+1}k \overset{k}{A} \wedge^{\vartriangleright}\overset{q+1}{W}+(-1)^{k+1}k \overset{k+1}{A} \wedge^{\vartriangleright}\overset{q}{W}, \\
		 & d\overset{k}{A} \wedge^{\vartriangleright}\overset{q+1}{W} +(-1)^k \overset{k}{A} \wedge^{\vartriangleright}d \overset{q+1}{W} +(-1)^q d \overset{k+1}{A} \wedge^{\vartriangleright}\overset{q}{W} +(-1)^{q + k +1}\overset{k+1}{A} \wedge^{\vartriangleright}d \overset{q}{W})\\
		 &=\bm{d} \mathscr{A} \boldsymbol{\wedge}^{\blacktriangleright} \mathscr{W} +(-1)^k \mathscr{A}\boldsymbol{\wedge}^{\blacktriangleright}\bm{d} \mathscr{W},
		 \end{align*}
		by using (\ref{C7}).
			\begin{align*}
		\bm{d} (\mathscr{B}_1 \boldsymbol{\wedge}^{\bm{\{,\}}}\mathscr{B}_2) &=  \bm{d} (\overset{t_1}{B_1}\wedge^{\{,\}}\overset{t_2}{B_2}, \overset{t_1}{B_1}\wedge^{\{,\}}\overset{t_2+1}{B_2}+ (-1)^{t_2}\overset{t_1+1}{B_1}\wedge^{\{,\}}\overset{t_2}{B_2})\\
		&= (d( \overset{t_1}{B_1}\wedge^{\{,\}}\overset{t_2}{B_2}) + (-1)^{t_1 +t_2 +1} k \overset{t_1}{B_1}\wedge^{\{,\}}\overset{t_2+1}{B_2} + (-1)^{t_1 +1}k \overset{t_1+1}{B_1}\wedge^{\{,\}}\overset{t_2}{B_2}, \\
		&d (\overset{t_1}{B_1}\wedge^{\{,\}}\overset{t_2+1}{B_2})+ (-1)^{t_2}d (\overset{t_1+1}{B_1}\wedge^{\{,\}}\overset{t_2}{B_2}))\\
		&=( d \overset{t_1}{B_1}\wedge^{\{,\}}\overset{t_2}{B_2}  + (-1)^{t_1} \overset{t_1}{B_1}\wedge^{\{,\}}d \overset{t_2}{B_2}+ (-1)^{t_1 +t_2 +1} k \overset{t_1}{B_1}\wedge^{\{,\}}\overset{t_2+1}{B_2} \\
		& + (-1)^{t_1 +1}k  \overset{t_1+1}{B_1}\wedge^{\{,\}}\overset{t_2}{B_2},  d \overset{t_1}{B_1}\wedge^{\{,\}}\overset{t_2+1}{B_2}+ (-1)^{t_1} \overset{t_1}{B_1}\wedge^{\{,\}}d \overset{t_2+1}{B_2} \\
		& + (-1)^{t_2}d \overset{t_1+1}{B_1} \wedge^{\{,\}}\overset{t_2}{B_2} +  (-1)^{t_1+ t_2+1} \overset{t_1+1}{B_1}\wedge^{\{,\}}d\overset{t_2}{B_2})\\
		&=\bm{d} \mathscr{B}_1 \boldsymbol{\wedge}^{\bm{\{,\}}}\mathscr{B}_2 + (-1)^{t_1}\mathscr{B}_1 \boldsymbol{\wedge}^{\bm{\{,\}}}\bm{d} \mathscr{B}_2,
			\end{align*}
	by using (\ref{C8}).
	\begin{align*}
		\mathscr{A}\boldsymbol{\wedge}^{\blacktriangleright}(\mathscr{B}_1 \boldsymbol{\wedge}^{\bm{\{,\}}}\mathscr{B}_2)&= (\overset{k}{A}, \overset{k+1}{A})\boldsymbol{\wedge}^{\blacktriangleright} (\overset{t_1}{B_1}\wedge^{\{,\}}\overset{t_2}{B_2}, \overset{t_1}{B_1}\wedge^{\{,\}}\overset{t_2+1}{B_2}+ (-1)^{t_2}\overset{t_1+1}{B_1}\wedge^{\{,\}}\overset{t_2}{B_2})\\
		&= (\overset{k}{A}\wedge^{\vartriangleright}( \overset{t_1}{B_1}\wedge^{\{,\}}\overset{t_2}{B_2}), \overset{k}{A}\wedge^{\vartriangleright} (\overset{t_1}{B_1}\wedge^{\{,\}}\overset{t_2+1}{B_2})+ (-1)^{t_2} \overset{k}{A}\wedge^{\vartriangleright} (\overset{t_1+1}{B_1}\wedge^{\{,\}}\overset{t_2}{B_2})\\
		&+ (-1)^{t_1 +t_2 }\overset{k+1}{A}\wedge^{\vartriangleright}( \overset{t_1}{B_1}\wedge^{\{,\}}\overset{t_2}{B_2}) )\\
		&=( (\overset{k}{A}\wedge^{\vartriangleright} \overset{t_1}{B_1})\wedge^{\{,\}}\overset{t_2}{B_2} +(-1)^{k t_1}\overset{t_1}{B_1}\wedge^{\{,\}}(\overset{k}{A}\wedge^{\vartriangleright}\overset{t_2}{B_2}),  (\overset{k}{A}\wedge^{\vartriangleright} \overset{t_1}{B_1})\wedge^{\{,\}}\overset{t_2+1}{B_2} \\
		&  +(-1)^{k t_1} \overset{t_1}{B_1}\wedge^{\{,\}}(\overset{k}{A}\wedge^{\vartriangleright}\overset{t_2+1}{B_2}) + (-1)^{t_2}(\overset{k}{A}\wedge^{\vartriangleright}\overset{t_1 +1}{B_1})\wedge^{\{,\}}\overset{t_2}{B_2} \\
		& + (-1)^{k(t_1 +1)+ t_2} \overset{t_1+1}{B_1}\wedge^{\{,\}}(\overset{k}{A}\wedge^{\vartriangleright}\overset{t_2}{B_2}) + (-1)^{t_1 +t_2}(\overset{k+1}{A}\wedge^{\vartriangleright}\overset{t_1}{B_1}) \wedge^{\{,\}}\overset{t_2}{B_2} \\
		&+ (-1)^{t_1(k+1) + t_1 + t_2}\overset{t_1}{B}\wedge^{\{,\}}(\overset{k+1}{A}\wedge^{\vartriangleright}\overset{t_2}{B_2}) )\\
		&= (\mathscr{A}\boldsymbol{\wedge}^{\blacktriangleright} \mathscr{B}_1 )\boldsymbol{\wedge}^{\bm{\{,\}}}\mathscr{B}_2  +(-1)^{k t_1}\mathscr{B}_1\boldsymbol{\wedge}^{\bm{\{,\}}}(\mathscr{A}\boldsymbol{\wedge}^{\blacktriangleright} \mathscr{B}_2),
	\end{align*}
	by using (\ref{C9}).
	   \end{enumerate}
\end{proof}
	
	   \end{enumerate}

\end{proposition}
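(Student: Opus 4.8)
The plan is to prove every identity by working in the ordered-pair representation $\mathscr{X} = (\overset{p}{X}, \overset{p+1}{X})$ and reducing each generalized statement to its ordinary analogue acting on the two component forms. Since $\bm{\beta}$, $\boldsymbol{\wedge}^{\blacktriangleright}$, $\boldsymbol{\wedge}^{\bm{\{,\}}}$ and $\bm{d}$ are all defined slotwise in terms of the ordinary maps $\beta$, $\wedge^{\vartriangleright}$, $\wedge^{\{,\}}$ and $d$, the uniform recipe is: expand the left-hand side with the definitions, apply the corresponding ordinary identities from the appendix (cited here as \eqref{C5}--\eqref{C9}) in each slot, and recognize the resulting pair as the right-hand side. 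This is exactly the pattern already used in the previous proposition, so the argument is systematic rather than ad hoc.

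For the two algebraic identities of part~1) I would set $\mathscr{A} = (\overset{k}{A}, \overset{k+1}{A})$, $\mathscr{A}' = (\overset{k'}{A'}, \overset{k'+1}{A'})$ and $\mathscr{C} = (\overset{q}{C}, \overset{q+1}{C})$ and compute directly. The $\bm{\beta}$-equivariance $\bm{\beta}(\mathscr{A}\boldsymbol{\wedge}^{\blacktriangleright}\mathscr{C}) = \mathscr{A}\boldsymbol{\wedge}^{\blacktriangleright}\bm{\beta}(\mathscr{C})$ follows in each slot from the ordinary equivariance \eqref{C5} of $\beta$ under $\wedge^{\vartriangleright}$, using only that $\beta$ is linear so it passes through the sign $(-1)^q$ attached to the cross term in the definition of $\boldsymbol{\wedge}^{\blacktriangleright}$. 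The relation $\mathscr{A}\boldsymbol{\wedge}^{\blacktriangleright}\mathscr{A}' = \mathscr{A}\boldsymbol{\wedge}\mathscr{A}' + (-1)^{kk'+1}\mathscr{A}'\boldsymbol{\wedge}\mathscr{A}$ reduces slotwise to the ordinary graded-commutator identity \eqref{C6}; here the only real work is verifying that the four separately signed terms produced in the $(k+1)$-degree slot collapse to the single prefactor $(-1)^{kk'+1}$ demanded on the right.

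The differential identities of part~2) are where the genuine subtlety appears, because $\bm{d}(\overset{p}{a}, \overset{p+1}{a}) = (d\,\overset{p}{a} + (-1)^{p+1}k\,\overset{p+1}{a},\, d\,\overset{p+1}{a})$ carries the $k$-correction absent from the ordinary $d$. For the two Leibniz rules I would first expand $\bm{d}$ on the product pair, apply the ordinary Leibniz rules \eqref{C7} and \eqref{C8} to the $d(\cdot)$-parts, and then check that the leftover $k$-linear terms reassemble precisely into the $k$-corrections hidden inside $\bm{d}\mathscr{A}$ and $\bm{d}\mathscr{W}$ (respectively $\bm{d}\mathscr{B}_1$ and $\bm{d}\mathscr{B}_2$) on the right-hand side. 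Concretely, one must confirm that the degree-dependent signs $(-1)^{k+q+1}$ and $(-1)^{k+1}$ thrown off when $\bm{d}$ acts on a product of factors of degrees $k$ and $q$ match the signs $(-1)^{k+1}$, $(-1)^{q+1}$ carried by $\bm{d}$ on each factor individually. The remaining interplay identity $\mathscr{A}\boldsymbol{\wedge}^{\blacktriangleright}(\mathscr{B}_1 \boldsymbol{\wedge}^{\bm{\{,\}}}\mathscr{B}_2) = (\mathscr{A}\boldsymbol{\wedge}^{\blacktriangleright}\mathscr{B}_1)\boldsymbol{\wedge}^{\bm{\{,\}}}\mathscr{B}_2 + (-1)^{kt_1}\mathscr{B}_1\boldsymbol{\wedge}^{\bm{\{,\}}}(\mathscr{A}\boldsymbol{\wedge}^{\blacktriangleright}\mathscr{B}_2)$ is purely algebraic and follows slotwise from \eqref{C9}, in direct parallel with the $\bm{\left[,\right]}$-version established in the preceding proposition.

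The hard part will therefore not be conceptual but the sign-and-$k$ bookkeeping in the two Leibniz rules: one has to track how the correction term $(-1)^{p+1}k\,\overset{p+1}{a}$ propagates through a product of factors of unequal degree and confirm that $\bm{d}$ remains a graded derivation for each of the new wedge operations. Once the ordinary identities \eqref{C5}--\eqref{C9} are granted, every remaining step is a mechanical but sign-sensitive component computation.
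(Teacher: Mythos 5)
Your proposal is correct and follows essentially the same route as the paper: expand everything in the ordered-pair representation, apply the ordinary identities \eqref{C5}--\eqref{C9} in each slot, and verify that the $(-1)^{p+1}k$-correction terms from the generalized $\bm{d}$ and the cross-term signs in $\boldsymbol{\wedge}^{\blacktriangleright}$, $\boldsymbol{\wedge}^{\bm{\{,\}}}$ regroup into the right-hand sides. The paper's proof is exactly this slotwise computation carried out explicitly, including the sign bookkeeping you flag as the only delicate point.
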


\section{Generalized  $G$-invariant bilinear forms}\label{sec4}
In order to construct generalized higher Yang-Mills theories, the corresponding generalized actions need to be developed. In this paper,  there is a natural expression of the corresponding lagrangian using generalized  $G$-invariant bilinear forms in the framework of GDC.
In this section, we first recall some of the definitions and propositions which
apply to ordinary non-degenerate $G$-invariant bilinear forms in Lie algebras of a Lie ($2$-)crossed module. Then, analogous results, applicable to the type $1$ generalized forms, are
formulated. We shall do those starting with a appropriate definition of a generalized $G$-invariant form.

Symmetric non-degenerate $G$-invariant bilinear forms in  $(\mathcal{h},\mathcal{g}; \alpha,\vartriangleright)$ and $(\mathcal{l},\mathcal{h},\mathcal{g};\beta,\alpha,\vartriangleright,\left\{,\right\})$ are given by a triple of symmetric non-degenerate  bilinear forms $\langle - , - \rangle_\mathcal{g}$ in $\mathcal{g}$, $\langle - , - \rangle_\mathcal{h}$ in $\mathcal{h}$ and $\langle - , - \rangle_\mathcal{l}$ in $\mathcal{l}$, satisfying
\begin{enumerate}
	\item  $\langle - , - \rangle_\mathcal{g}$ is $G$-invariant, i.e.
	$$ \langle gXg^{-1}, gX'g^{-1} \rangle_\mathcal{g}= \langle X,X'\rangle_\mathcal{g}, \ \ \ \ \ \ \ \ \forall g \in G, X, X' \in \mathcal{g};$$
	\item  $\langle - , - \rangle_\mathcal{h}$ is $G$-invariant, i.e.
	$$ \langle g\vartriangleright Y,  g\vartriangleright Y' \rangle_\mathcal{h}= \langle Y,Y'\rangle_\mathcal{h}, \ \ \ \ \ \ \ \forall g\in G,Y,Y'\in \mathcal{h};$$
	\item  $\langle - , - \rangle_\mathcal{l}$ is $G$-invariant, i.e.
	$$ \langle g\vartriangleright Z,  g\vartriangleright Z' \rangle_\mathcal{l}= \langle Z,Z'\rangle_\mathcal{l}, \ \ \ \ \ \ \ \forall g\in G,Z,Z'\in \mathcal{l}.$$
\end{enumerate}
We will pay our attention to the case
when the base group $G$ is compact in the real case, or has a compact real form in
the complex case, so that $G$-invariant non-degenerate bilinear forms
in $\mathcal{g}$, $\mathcal{h}$ and $\mathcal{l}$ will always exist \cite{SDH, Martins:2010ry}.
From these invariance conditions, one immediately sees that 
\begin{align}\label{XXX}
\langle [X,X'], X''\rangle_\mathcal{g}=-\langle X',[X,X'']\rangle_\mathcal{g},
\end{align}
\begin{align}\label{YYY}
\langle [Y,Y'], Y''\rangle_\mathcal{h}=-\langle Y',[Y,Y'']\rangle_\mathcal{h},
\end{align}
\begin{align}\label{ZZZ}
\langle [Z,Z'],Z''\rangle_\mathcal{l}=-\langle Z',[Z,Z'']\rangle_\mathcal{l}.
\end{align}

One can define bilinear anti-symmetric maps $\sigma:\mathcal{h} \times \mathcal{h} \longrightarrow \mathcal{g}$ by the rule:
\begin{align}
	\langle \sigma(Y,Y'), X\rangle_\mathcal{g}=-\langle Y, X\vartriangleright Y'\rangle_\mathcal{h},
\end{align}
for $X \in \mathcal{g}$, $Y,Y' \in \mathcal{h}$, and $\kappa:\mathcal{l} \times \mathcal{l} \longrightarrow \mathcal{g}$ by the rule:
\begin{align}
	\langle \kappa(Z,Z'), X\rangle_\mathcal{g}=-\langle Z, X\vartriangleright Z'\rangle_\mathcal{l},
\end{align}
for $X \in \mathcal{g}$, $Z,Z' \in \mathcal{l}$.
From the anti-symmetric properties of $\sigma$ and $\kappa$, one immediately sees that
\begin{align}\label{YXY'}
	\langle Y, X\vartriangleright Y'\rangle_\mathcal{h}=-\langle Y', X\vartriangleright Y \rangle_\mathcal{h}=-\langle X\vartriangleright Y, Y' \rangle_\mathcal{h},
\end{align}
\begin{align}\label{ZXZ'}
	\langle Z, X\vartriangleright Z'\rangle_\mathcal{l}=-\langle Z', X\vartriangleright Z \rangle_\mathcal{l}=-\langle X\vartriangleright Z, Z' \rangle_\mathcal{l}.
\end{align}

Further, there are two bilinear maps $\eta_1: \mathcal{l} \times \mathcal{h} \longrightarrow \mathcal{h}$ and $\eta_2: \mathcal{l} \times \mathcal{h} \longrightarrow \mathcal{h}$ by the rule:
%To obtain the 3-form Yang-Mills equations, let's construct an additional condition:
\begin{align}\label{YY'Z}
	\langle \left\{ Y, Y' \right\}, Z \rangle_\mathcal{l} = -\langle Y', \eta_1(Z, Y) \rangle_\mathcal{h} = -\langle Y, \eta_2(Z, Y') \rangle_\mathcal{h} ,
\end{align}
for each $Y$, $Y' \in \mathcal{h}$, and $Z \in \mathcal{l}$.

Finally, one can define maps $\alpha^*:\mathcal{g} \longrightarrow \mathcal{h} $ defined by the rule:
\begin{align}
	\langle Y, \alpha^*(X)\rangle_\mathcal{h}=\langle \alpha(Y), X\rangle_\mathcal{g}, \ \ \ \ \ \ \ \forall X \in \mathcal{g}, Y \in \mathcal{h},
\end{align}
and $\beta^*:\mathcal{h} \longrightarrow \mathcal{l}$ defined by the rule:
\begin{align}
	\langle Z, \beta^*(Y)\rangle_\mathcal{l}=\langle \beta(Z), Y\rangle_\mathcal{h}, \ \ \ \ \ \ \ \forall Y \in \mathcal{h}, Z \in \mathcal{l}.
\end{align}
See  \cite{Radenkovic:2019qme} for more properties of those maps.
And most of the conventions of reference \cite{SDH} are retained.

To define generalized higher Yang-Mills actions, one has to first define the generalized symmetric non-degenerate $G$-invariant bilinear form. Encouraged by these constructions of generalized differential forms, we develop a generalized form of a Lie algebra $\mathcal{g}$ denoted by
\begin{align}
(X_1, X_2) \in \mathcal{g} \times \mathcal{g},\ \ \ \ \ \ \ X_i \in \mathcal{g}, i=1, 2.
\end{align}
Then we construct the generalized bilinear forms $\ll -, - \gg : ( \mathcal{g} \times \mathcal{g})\times ( \mathcal{g} \times \mathcal{g}) \longrightarrow \mathbb{C}\times  \mathbb{C}$ by 
\begin{align}\label{gbf}
\ll (X_1, X_2), (X_1', X_2') \gg := (\langle X_1, X_1'\rangle, \langle X_1, X_2' \rangle + \langle X_2, X_1' \rangle ),
\end{align}
where $\mathbb{C}$ is a number field.
From  these properties of the symmetric non-degenerate $G$-invariant bilinear form in $\mathcal{g}$, one can apparently note that $\ll -, - \gg$ is also a symmetric non-degenerate $G$-invariant bilinear form.
This structure will lead us to some unexpected and novel results.

Starting with \eqref{gbf}, we define 
\begin{align}
	\ll\mathscr{A}_1 ,\mathscr{A}_2\gg := \ll ( \overset{p}{A_1} ,  \overset{p+1}{A_1}), ( \overset{q}{A_2} ,  \overset{q+1}{A_2}) \gg= (\langle \overset{p}{A_1}, \overset{q}{A_2} \rangle, \langle \overset{p}{A_1}, \overset{q+1}{A_2} \rangle + (-1)^q \langle \overset{p+1}{A_1},  \overset{q}{A_2} \rangle),
\end{align}
where  $\mathscr{A}_1 \in \Lambda^p(M, \mathcal{g}, 1)$ and $\mathscr{A}_2\in \Lambda^q(M, \mathcal{g}, 1)$.
There is an identity
\begin{align}
	\ll\mathscr{A}_1 , \mathscr{A}_2\gg = (-1)^{pq} \ll \mathscr{A}_2, \mathscr{A}_1 \gg.
\end{align}
In a similar way, we can define the generalized forms of Lie algebra $\mathcal{h}$ and $\mathcal{l}$.

Given generalized $G$-invariant symmetric non-degenerate bilinear forms in the generalized forms of Lie algebras $\mathcal{g}$, $\mathcal{h}$ and $\mathcal{l}$, one can define maps $\bm{\overline{\sigma}}: \Lambda^{t_1} (M, \mathcal{h}, 1) \times \Lambda^{t_2} (M, \mathcal{h}, 1)\longrightarrow  \Lambda^{t_1+t_2}(M, \mathcal{g}, 1)$ by the rule
\begin{align}
	\bm{\overline{\sigma}}(\mathscr{B}_1, \mathscr{B}_2):= (\overline{\sigma} (\overset{t_1}{B_1}, \overset{t_2}{B_2}), \overline{\sigma} (\overset{t_1}{B_1}, \overset{t_2+1}{B_2})+ (-1)^{t_2}\overline{\sigma} (\overset{t_1+1}{B_1}, \overset{t_2}{B_2})),
\end{align}
and $\bm{\overline{\kappa}}:\Lambda^{q_1} (M, \mathcal{l}, 1) \times \Lambda^{q_2} (M, \mathcal{l}, 1)\longrightarrow   \Lambda^{q_1+q_2}(M, \mathcal{g}, 1)$ by the rule
\begin{align}
	\bm{\overline{\kappa}}(\mathscr{C}_1, \mathscr{C}_2):= (\overline{\kappa}(\overset{q_1}{C_1}, \overset{q_2}{C_2}),\overline{\kappa} (\overset{q_1}{C_1}, \overset{q_2+1}{C_2})+ (-1)^{q_2}\overline{\kappa} (\overset{q_1+1}{C_1}, \overset{q_2}{C_2})),
\end{align}
where $ \mathscr{A} = (\overset{k}{A}, \overset{k+1}{A})$, $\mathscr{B}_i = (\overset{t_i}{B_i}, \overset{t_i+1}{B_i})$ and $\mathscr{C}_i= (\overset{q_i}{C_i}, \overset{q_i+1}{C_i})(i = 1, 2)$, and the map $\overline{\sigma} $ is a bilinear map $ \Lambda^{t_1}(M, \mathcal{h}) \times \Lambda^{t_2}(M, \mathcal{h}) \longrightarrow \Lambda^{t_1+t_2}(M, \mathcal{g})$ 
and $\overline{\kappa}$  is a bilinear map $\Lambda^{q_1}(M, \mathcal{l}) \times \Lambda^{q_2}(M, \mathcal{l}) \longrightarrow \Lambda^{q_1+q_2}(M, \mathcal{g})$, seeing Appendix \ref{LAVDF} for more details.

In addition, one define maps $\bm{\overline{\eta_i}}: \Lambda^{q} (M, \mathcal{l}, 1) \times \Lambda^{t} (M, \mathcal{h}, 1)\longrightarrow   \Lambda^{q+t}(M, \mathcal{h}, 1)$, $i = 1, 2$, by the rule
\begin{align}
	\bm{\overline{\eta_i}}(\mathscr{C}, \mathscr{B}):= (\overline{\eta_i} (\overset{q}{C}, \overset{t}{B}), \overline{\eta_i}(\overset{q}{C}, \overset{t+1}{B})+ (-1)^{t}\overline{\eta_i} (\overset{q+1}{C}, \overset{t}{B})),
\end{align}
where $\mathscr{C}= (\overset{q}{C}, \overset{q+1}{C})$ , $\mathscr{B}= (\overset{t}{B}, \overset{t+1}{B})$ and the maps $\eta_i$, $i = 1, 2$, are bilinear maps $\Lambda^q (M,\mathcal{l}) \times \Lambda^t(M, \mathcal{h}) \longrightarrow \Lambda^{q+t}(M,\mathcal{h})$ in Appendix \ref{LAVDF}. 

Finally, one define maps $\bm{\alpha^{\ast}}: \Lambda^{k} (M, \mathcal{g}, 1) \longrightarrow   \Lambda^{k}(M, \mathcal{h}, 1)$ by 
\begin{align}
	\bm{\alpha^{\ast}}(\mathscr{A}):=(\alpha^{\ast}(\overset{k}{A}), \alpha^{\ast}(\overset{k+1}{A})),
\end{align}
and $\bm{\beta^{\ast}}: \Lambda^{t} (M, \mathcal{h}, 1) \longrightarrow  \Lambda^{t}(M, \mathcal{l}, 1)$ by 
\begin{align}
	\bm{\beta^{\ast}}(\mathscr{B}):=(\beta^{\ast}(\overset{t}{B}), \beta^{\ast}(\overset{t+1}{B})),
\end{align}
where $\mathscr{A}$ and $\mathscr{B}$ are as above, and see  Appendix \ref{LAVDF}  for $\alpha^{\ast}$ and $\beta^{\ast}$.

The ordinary non-degenerate $G$-invariant forms induce the  properties of the generalized $G$-invariant forms embodied in the following propositions.
\begin{proposition}
For $\mathscr{W}_1  \in \Lambda^{q_1} (M, \mathcal{w}, 1)$, $ \mathscr{W}_2 \in \Lambda^{q_2} (M, \mathcal{w}, 1)$ and $\mathscr{W}_3 \in \Lambda^{q_3}(M, \mathcal{w}, 1)$ $(\mathcal{w}=\mathcal{g}, \mathcal{h}, \mathcal{l})$, then
\begin{align}
\ll \mathscr{W}_1 \boldsymbol{\wedge}^{\bm{[,]}}\mathscr{W}_2, \mathscr{W}_3 \gg = (-1)^{q_1 q_2 +1} \ll \mathscr{W}_2, \mathscr{W}_1 \boldsymbol{\wedge}^{\bm{[,]}}\mathscr{W}_3 \gg.
\end{align}
\end{proposition}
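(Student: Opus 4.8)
The plan is to reduce the claim to the ordinary $G$-invariance identity for Lie-algebra-valued forms by expanding both sides as ordered pairs and then matching them slot by slot. The key input is the form-valued analogue of \eqref{XXX}--\eqref{ZZZ}, namely
$$\langle \overset{a}{X}\wedge^{[,]}\overset{b}{Y}, \overset{c}{Z}\rangle = (-1)^{ab+1}\langle \overset{b}{Y}, \overset{a}{X}\wedge^{[,]}\overset{c}{Z}\rangle,$$
valid for ordinary $\mathcal{w}$-valued forms (see Appendix \ref{LAVDF}); it follows from the invariance of $\langle-,-\rangle_\mathcal{w}$ recorded in \eqref{XXX} together with the Koszul sign $(-1)^{ab}$ generated when the degree-$b$ factor is reordered past the degree-$a$ factor. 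I would invoke this identity repeatedly.

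First I would set $\mathscr{W}_i=(\overset{q_i}{W_i},\overset{q_i+1}{W_i})$ and compute $\mathscr{W}_1\boldsymbol{\wedge}^{\bm{[,]}}\mathscr{W}_2$ and $\mathscr{W}_1\boldsymbol{\wedge}^{\bm{[,]}}\mathscr{W}_3$ from the definition of $\boldsymbol{\wedge}^{\bm{[,]}}$, yielding generalized forms of degrees $q_1+q_2$ and $q_1+q_3$. Feeding these into the definition of $\ll-,-\gg$ writes each side of the asserted equation as an ordered pair whose two slots are explicit ordinary-form expressions built from the components $\overset{q_i}{W_i}$, $\overset{q_i+1}{W_i}$ through $\wedge^{[,]}$ and $\langle-,-\rangle$. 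The first slot of the left-hand side is $\langle \overset{q_1}{W_1}\wedge^{[,]}\overset{q_2}{W_2},\overset{q_3}{W_3}\rangle$ and that of the right-hand side (before the prefactor) is $\langle \overset{q_2}{W_2},\overset{q_1}{W_1}\wedge^{[,]}\overset{q_3}{W_3}\rangle$; a single application of the displayed identity with $(a,b,c)=(q_1,q_2,q_3)$ matches them and produces exactly the factor $(-1)^{q_1q_2+1}$.

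The only genuine computation is in the second slot. On the left it is a sum of three pairings $\langle(\text{bracket}),\overset{q_3}{W_3}\rangle$ in which exactly one component is degree-raised, carrying the degree patterns $(q_1,q_2,q_3{+}1)$, $(q_1,q_2{+}1,q_3)$, $(q_1{+}1,q_2,q_3)$ with prefactors $1$, $(-1)^{q_3}$, $(-1)^{q_2+q_3}$. I would apply the displayed identity to each term, pull out the common factor $(-1)^{q_1q_2+1}$, and simplify the residual exponents (using $2q_2\equiv 0$). The expectation is that the remaining bracket matches, term for term, the second slot of $\ll\mathscr{W}_2,\mathscr{W}_1\boldsymbol{\wedge}^{\bm{[,]}}\mathscr{W}_3\gg$, whose three terms are the pairings with $\overset{q_3+1}{W_3}$, $\overset{q_1+1}{W_1}$, $\overset{q_2+1}{W_2}$ raised, carrying signs $1$, $(-1)^{q_3}$, $(-1)^{q_1+q_3}$.

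The main obstacle is purely this sign bookkeeping. The natural matching of the six terms is by which argument carries the raised degree: the left-hand term in which $\overset{q_i+1}{W_i}$ appears must be sent, via the invariance identity, to the right-hand term containing the same $\overset{q_i+1}{W_i}$, and this correspondence permutes the terms rather than preserving their listed order. One must therefore check that, after extracting $(-1)^{q_1q_2+1}$ and simplifying, the residual sign attached to each left-hand term equals the sign $1$, $(-1)^{q_3}$, or $(-1)^{q_1+q_3}$ that the definition of $\ll-,-\gg$ assigns to the matching right-hand term (those on the $\overset{q_3+1}{W_3}$-, $\overset{q_1+1}{W_1}$-, and $\overset{q_2+1}{W_2}$-raised pairings respectively). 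Granting these three sign identities, both slots of the two ordered pairs coincide and the proposition follows. Since only the invariance of $\langle-,-\rangle_\mathcal{w}$ is used, and this holds verbatim for each of $\mathcal{g}$, $\mathcal{h}$, $\mathcal{l}$, no case distinction on $\mathcal{w}$ is needed and the three algebras are handled simultaneously.
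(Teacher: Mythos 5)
Your proposal is correct and follows essentially the same route as the paper's own proof: expand both sides componentwise via the definitions of $\boldsymbol{\wedge}^{\bm{[,]}}$ and $\ll-,-\gg$, apply the ordinary identity (\ref{C10}) to each of the resulting pairings, and match terms up to sign. The three residual sign checks you defer do hold (the terms with $\overset{q_3+1}{W_3}$, $\overset{q_1+1}{W_1}$, $\overset{q_2+1}{W_2}$ raised acquire residual signs $1$, $(-1)^{q_3}$, $(-1)^{q_1+q_3}$ after extracting $(-1)^{q_1q_2+1}$, exactly as required), so your argument closes without modification.
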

  \begin{proof}
  	Let $\mathscr{W}_1 =( \overset{q_1}{W_1}, \overset{q_1 +1}{W_1})$, $ \mathscr{W}_2= (\overset{q_2}{W_2}, \overset{q_2 +1}{W_2}) $ and $\mathscr{W}_3=( \overset{q_3}{W_3}, \overset{q_3 +1}{W_3}) $, then
  	\begin{align*}
	\ll \mathscr{W}_1 \boldsymbol{\wedge}^{\bm{[,]}}\mathscr{W}_2, \mathscr{W}_3 \gg &=\ll (\overset{q_1}{W_1}\wedge^{[,]}\overset{q_2}{W_2}, \overset{q_1}{W_1}\wedge^{[,]}\overset{q_2 +1}{W_2} + (-1)^{q_2} \overset{q_1 +1}{W_1}\wedge^{[,]}\overset{q_2}{W_2}),( \overset{q_3}{W_3}, \overset{q_3 +1}{W_3})\gg\\
	&= ( \langle \overset{q_1}{W_1}\wedge^{[,]}\overset{q_2}{W_2}, \overset{q_3}{W_3}\rangle, \langle \overset{q_1}{W_1}\wedge^{[,]}\overset{q_2}{W_2}, \overset{q_3 +1}{W_3}\rangle + (-1)^{q_3}\langle \overset{q_1}{W_1}\wedge^{[,]}\overset{q_2 +1}{W_2}, \overset{q_3}{W_3}\rangle \\
	&+ (-1)^{q_2 + q_3} \langle \overset{q_1 +1}{W_1}\wedge^{[,]}\overset{q_2}{W_2}, \overset{q_3}{W_3}\rangle )\\
	&= ( (-1)^{q_1 q_2 +1}\langle \overset{q_2}{W_2}, \overset{q_1}{W_1}\wedge^{[,]}\overset{q_3}{W_3}\rangle, (-1)^{q_1 q_2 +1}\langle \overset{q_2}{W_2}, \overset{q_1}{W_1}\wedge^{[,]} \overset{q_3 +1}{W_1}\rangle \\
	&+ (-1)^{q_3 + q_1 (q_2 +1)+1} \langle \overset{q_2 +1}{W_2}, \overset{q_1}{W_1}\wedge^{[,]}\overset{q_3}{W_3}\rangle + (-1)^{ q_3 + q_1q_2 +1} \langle \overset{q_2}{W_2},\overset{q_1 +1}{W_1}\wedge^{[,]} \overset{q_3}{W_3}\rangle ) \\
	&= (-1)^{q_1 q_2 +1} \ll \mathscr{W}_2, \mathscr{W}_1 \boldsymbol{\wedge}^{\bm{[,]}}\mathscr{W}_3 \gg,
  	\end{align*}
  by using (\ref{C10}).
  \end{proof}

\begin{proposition}
For $\mathscr{A} \in \Lambda^k(M, \mathcal{g}, 1)$, $\mathscr{W}_1  \in \Lambda^{q_1} (M, \mathcal{w}, 1)$ and $ \mathscr{W}_2 \in \Lambda^{q_2} (M, \mathcal{w}, 1)$, we have
\begin{align}
	\ll \mathscr{W}_1, \mathscr{A} \boldsymbol{\wedge}^{\blacktriangleright}\mathscr{W}_2\gg = (-1)^{q_2 (k + q_1) + k q_1 + 1}\ll \mathscr{W}_2, \mathscr{A}\boldsymbol{\wedge}^{\blacktriangleright}\mathscr{W}_1\gg =(-1)^{k q_1 +1}\ll \mathscr{A} \boldsymbol{\wedge}^{\blacktriangleright} \mathscr{W}_1, \mathscr{W}_2 \gg 
\end{align}
\end{proposition}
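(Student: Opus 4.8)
The plan is to argue exactly as in the preceding propositions of this section: expand each generalized form into its ordered pair of ordinary components, reduce the two asserted equalities to identities between ordinary $\mathcal{w}$-valued forms, and then invoke the ordinary-form analogue of the antisymmetry relations \eqref{YXY'} and \eqref{ZXZ'} (recorded in Appendix \ref{LAVDF}) together with the graded commutativity of $\ll -,-\gg$ to finish.

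First I would set $\mathscr{A} = (\overset{k}{A}, \overset{k+1}{A})$, $\mathscr{W}_1 = (\overset{q_1}{W_1}, \overset{q_1+1}{W_1})$ and $\mathscr{W}_2 = (\overset{q_2}{W_2}, \overset{q_2+1}{W_2})$. By the definition \eqref{action1} of $\boldsymbol{\wedge}^{\blacktriangleright}$, the form $\mathscr{A}\boldsymbol{\wedge}^{\blacktriangleright}\mathscr{W}_2$ is the degree-$(k+q_2)$ ordered pair with first slot $\overset{k}{A}\wedge^{\vartriangleright}\overset{q_2}{W_2}$ and second slot $\overset{k}{A}\wedge^{\vartriangleright}\overset{q_2+1}{W_2}+(-1)^{q_2}\overset{k+1}{A}\wedge^{\vartriangleright}\overset{q_2}{W_2}$. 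Substituting this into the definition of $\ll-,-\gg$, with the cross-term carrying the sign $(-1)^{k+q_2}$ because the second entry now has degree $k+q_2$, writes $\ll \mathscr{W}_1, \mathscr{A}\boldsymbol{\wedge}^{\blacktriangleright}\mathscr{W}_2\gg$ as an explicit ordered pair whose two slots are sums of ordinary pairings, each of the schematic form $\langle \overset{q_1}{W_1}\text{-piece},\ \overset{k}{A}\text{-piece}\wedge^{\vartriangleright}\overset{q_2}{W_2}\text{-piece}\rangle$.

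The crux is to rewrite each such pairing by the ordinary invariance identity, namely the $\mathcal{w}$-valued-form version of \eqref{YXY'} (for $\mathcal{w}=\mathcal{h}$), \eqref{ZXZ'} (for $\mathcal{w}=\mathcal{l}$), and the relation flowing from \eqref{XXX} (for $\mathcal{w}=\mathcal{g}$), which exchanges $\langle \overset{a}{W_1}, \overset{k}{A}\wedge^{\vartriangleright}\overset{b}{W_2}\rangle$ for $\langle \overset{b}{W_2}, \overset{k}{A}\wedge^{\vartriangleright}\overset{a}{W_1}\rangle$ and for $\langle \overset{k}{A}\wedge^{\vartriangleright}\overset{a}{W_1}, \overset{b}{W_2}\rangle$ up to a Koszul sign. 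I would then expand the two proposed right-hand sides $\ll \mathscr{W}_2, \mathscr{A}\boldsymbol{\wedge}^{\blacktriangleright}\mathscr{W}_1\gg$ and $\ll \mathscr{A}\boldsymbol{\wedge}^{\blacktriangleright}\mathscr{W}_1, \mathscr{W}_2\gg$ into ordered pairs in the same way, and verify slot-by-slot that they agree with the transformed left-hand side after multiplication by the stated prefactors $(-1)^{q_2(k+q_1)+kq_1+1}$ and $(-1)^{kq_1+1}$.

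The only real difficulty is the sign bookkeeping: one must confirm that the Koszul signs built into $\ll-,-\gg$, the degree-$(k+q_2)$ versus degree-$(k+q_1)$ cross-term signs, and the signs supplied by the ordinary identity combine to exactly the uniform exponents claimed. Because the three cases $\mathcal{w}=\mathcal{g},\mathcal{h},\mathcal{l}$ use structurally identical identities, I would perform the sign count once and remark that it applies verbatim in all three, closing the proof with a reference to the corresponding Appendix equation.
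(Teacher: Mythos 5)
Your proposal follows exactly the paper's own proof: expand $\mathscr{A}$, $\mathscr{W}_1$, $\mathscr{W}_2$ into their ordered pairs, expand $\boldsymbol{\wedge}^{\blacktriangleright}$ and $\ll-,-\gg$ by their definitions (including the $(-1)^{k+q_2}$ cross-term sign, which you identify correctly), and then apply the ordinary-form invariance identities of Appendix \ref{LAVDF} (equations \eqref{C11} and \eqref{C_1AC_2}) term by term to collect the stated signs. The only difference is cosmetic: you explicitly note the $\mathcal{w}=\mathcal{g}$ case via the adjoint action and \eqref{XXX}, which the paper leaves implicit in its citation of \eqref{C11}.
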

\begin{proof}
   Let $ \mathscr{A}=(\overset{k}{A}, \overset{k+1}{A})$, $\mathscr{W}_1 =( \overset{q_1}{W_1}, \overset{q_1 +1}{W_1})$ and $ \mathscr{W}_2= (\overset{q_2}{W_2}, \overset{q_2 +1}{W_2}) $, then
   \begin{align*}
 	\ll \mathscr{W}_1, \mathscr{A} \boldsymbol{\wedge}^{\blacktriangleright}\mathscr{W}_2\gg &= \ll ( \overset{q_1}{W_1}, \overset{q_1 +1}{W_1}), (\overset{k}{A}\wedge^{\vartriangleright}\overset{q_2}{W_2}, \overset{k}{A}\wedge^{\vartriangleright}\overset{q_2+1}{W_2} +(-1)^{q_2}\overset{k+1}{A}\wedge^{\vartriangleright}\overset{q_2}{W_2} )\gg\\
 	&= (\langle \overset{q_1}{W_1}, \overset{k}{A}\wedge^{\vartriangleright}\overset{q_2}{W_2}\rangle, \langle \overset{q_1}{W_1} , \overset{k}{A}\wedge^{\vartriangleright}\overset{q_2+1}{W_2}+ (-1)^{q_2}\overset{k+1}{A}\wedge^{\vartriangleright} \overset{q_2}{W_2}\rangle \\
 	& + (-1)^{k +q_2}\langle \overset{q_1+1}{W_1}, \overset{k}{A}\wedge^{\vartriangleright}\overset{q_2}{W_2}\rangle )\\
 	&=((-1)^{q_2(k + q_1)+ k q_1 +1}\langle \overset{q_2}{W_2}, \overset{k}{A}\wedge^{\vartriangleright}\overset{q_1}{W_1}\rangle, (-1)^{(q_2 +1)(k + q_1)+ k q_1 +1}\langle \overset{q_2+1}{W_2}, \overset{k}{A}\wedge^{\vartriangleright}\overset{q_1}{W_1}\rangle \\
 	&+ (-1)^{q_2(k + q_1)+ (k+1)q_1 +1 }\langle \overset{q_2}{W_2}, \overset{k+1}{A}\wedge^{\vartriangleright}\overset{q_1}{W_1}\rangle  + (-1)^{q_2 (k+q_1 )+ kq_1+1} \langle \overset{q_2}{W_2}, \overset{k}{A}\wedge^{\vartriangleright}\overset{q_1 +1}{W_1}\rangle )\\
 	&= (-1)^{q_2 (k + q_1) + k q_1 + 1}\ll \mathscr{W}_2, \mathscr{A}\boldsymbol{\wedge}^{\blacktriangleright}\mathscr{W}_1\gg,\\
 		\ll \mathscr{W}_1, \mathscr{A} \boldsymbol{\wedge}^{\blacktriangleright}\mathscr{W}_2\gg &= \ll ( \overset{q_1}{W_1}, \overset{q_1 +1}{W_1}), (\overset{k}{A}\wedge^{\vartriangleright}\overset{q_2}{W_2}, \overset{k}{A}\wedge^{\vartriangleright}\overset{q_2+1}{W_2} +(-1)^{q_2}\overset{k+1}{A}\wedge^{\vartriangleright}\overset{q_2}{W_2} )\gg\\
 	&= (\langle \overset{q_1}{W_1}, \overset{k}{A}\wedge^{\vartriangleright}\overset{q_2}{W_2}\rangle, \langle \overset{q_1}{W_1} , \overset{k}{A}\wedge^{\vartriangleright}\overset{q_2+1}{W_2}+ (-1)^{q_2}\overset{k+1}{A}\wedge^{\vartriangleright} \overset{q_2}{W_2}\rangle \\
 	&+ (-1)^{k +q_2}\langle \overset{q_1+1}{W_1}, \overset{k}{A}\wedge^{\vartriangleright}\overset{q_2}{W_2}\rangle )\\
 	&= ((-1)^{k q_1 +1} \langle \overset{k}{A}\wedge^{\vartriangleright}\overset{q_1}{W_1}, \overset{q_2}{W_2}\rangle, (-1)^{k q_1 +1}\langle  \overset{k}{A}\wedge^{\vartriangleright}\overset{q_1}{W_1}, \overset{q_2 +1}{W_2}\rangle  \\
 	&+(-1)^{q_2 + (k+1)q_1 +1}\langle \overset{k+1}{A}\wedge^{\vartriangleright}\overset{q_1}{W_1},  \overset{q_2}{W_2}\rangle  +  (-1)^{k+q_2 + k(q_1 +1)+1}\langle \overset{k}{A}\wedge^{\vartriangleright}\overset{q_1 +1}{W_1}, \overset{q_2}{W_2}\rangle)\\
 	&= (-1)^{k q_1 +1}\ll \mathscr{A} \boldsymbol{\wedge}^{\blacktriangleright} \mathscr{W}_1, \mathscr{W}_2 \gg,
   \end{align*}
by using (\ref{C11}).
\end{proof}

\begin{proposition}
	Given $G$-invariant symmetric non-degenerate bilinear forms in $\mathcal{g}$, $\mathcal{h}$ and $\mathcal{l}$, the following identities hold
	\begin{align}
		&\ll \bm{\overline{\sigma}}(\mathscr{B}_1, \mathscr{B}_2), \mathscr{A}\gg=(-1)^{k t_2 +1}\ll \mathscr{B}_1, \mathscr{A}\boldsymbol{\wedge}^{\blacktriangleright}\mathscr{B}_2\gg;\\
		&\ll \bm{\mathscr{A}, \overline{\sigma}}(\mathscr{B}_1, \mathscr{B}_2)\gg=(-1)^{t_1 t_2 +1}\ll  \mathscr{A}\boldsymbol{\wedge}^{\blacktriangleright}\mathscr{B}_2, \mathscr{B}_1 \gg;\\
		&\ll \bm{\overline{\kappa}}(\mathscr{C}_1, \mathscr{C}_2), \mathscr{A}\gg=(-1)^{k q_2 +1}\ll \mathscr{C}_1, \mathscr{A}\boldsymbol{\wedge}^{\blacktriangleright}\mathscr{C}_2\gg;\\
	&\ll \bm{\mathscr{A}, \overline{\kappa}}(\mathscr{C}_1, \mathscr{C}_2)\gg=(-1)^{q_1 q_2 +1}\ll  \mathscr{A}\boldsymbol{\wedge}^{\blacktriangleright}\mathscr{C}_2, \mathscr{C}_1 \gg;\\
&\ll \mathscr{B}_1 \boldsymbol{\wedge}^{\bm{\{,\}}}\mathscr{B}_2, \mathscr{C} \gg = (-1)^{t_1 (t_2 +q) +1} \ll \mathscr{B}_2, \bm{\overline{\eta_1}}(\mathscr{C}, \mathscr{B}_1) \gg;\\
&\ll \mathscr{B}_1 \boldsymbol{\wedge}^{\bm{\{,\}}}\mathscr{B}_2, \mathscr{C} \gg = (-1)^{t_2 q +1}\ll \mathscr{B}_1, \bm{\overline{\eta_2}}(\mathscr{C}, \mathscr{B}_2)\gg;\\
	&\ll \mathscr{B}, \bm{\alpha^{\ast}}(\mathscr{A})\gg = \ll\bm{\alpha }(\mathscr{B}), \mathscr{A} \gg;\\
&\ll \mathscr{C}, \bm{\beta^{\ast}}(\mathscr{B}) \gg = \ll\bm{\beta}(\mathscr{C}), \mathscr{B}\gg,
	\end{align}
where $\mathscr{A} \in \Lambda^{k} (M, \mathcal{g}, 1)$,  $\mathscr{B} \in \Lambda^{t} (M, \mathcal{h}, 1)$, $\mathscr{C}\in \Lambda^q(M, \mathcal{l}, 1)$, $\mathscr{B}_i \in \Lambda^{t_i} (M, \mathcal{h}, 1)$ and $\mathscr{C}_i \in \Lambda^{q_i} (M, \mathcal{l}, 1)(i = 1, 2)$.
\end{proposition}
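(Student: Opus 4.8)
The plan is to establish all eight identities by the same mechanical procedure used in the preceding propositions. For each one I would write every generalized form as an ordered pair of ordinary forms, expand both sides using the componentwise definitions of the generalized bilinear form $\ll -,-\gg$ and of the relevant generalized map ($\bm{\overline{\sigma}}$, $\bm{\overline{\kappa}}$, $\bm{\overline{\eta_i}}$, $\bm{\alpha^{\ast}}$, $\bm{\beta^{\ast}}$, together with $\boldsymbol{\wedge}^{\blacktriangleright}$ and $\boldsymbol{\wedge}^{\bm{\{,\}}}$), and then reduce each of the two resulting slots to the corresponding ordinary $G$-invariance relation. The ordinary relations I invoke are the form-level analogues of the defining rules for $\sigma$, $\kappa$, $\eta_1$, $\eta_2$ and of the adjointness of $\alpha^{\ast}$, $\beta^{\ast}$; these are recorded in Appendix \ref{LAVDF} and are exactly what convert a pairing $\langle \overline{\sigma}(\cdot,\cdot),\cdot\rangle$ into a pairing of the shape $\langle \cdot, \cdot\wedge^{\vartriangleright}\cdot\rangle$, and similarly for the others.

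For a representative case take the first identity, $\ll \bm{\overline{\sigma}}(\mathscr{B}_1, \mathscr{B}_2), \mathscr{A}\gg=(-1)^{k t_2 +1}\ll \mathscr{B}_1, \mathscr{A}\boldsymbol{\wedge}^{\blacktriangleright}\mathscr{B}_2\gg$. Writing $\mathscr{A}=(\overset{k}{A},\overset{k+1}{A})$ and $\mathscr{B}_i=(\overset{t_i}{B_i},\overset{t_i+1}{B_i})$, I would expand $\bm{\overline{\sigma}}(\mathscr{B}_1,\mathscr{B}_2)$ by its definition and pair it with $\mathscr{A}$; the first slot is $\langle \overline{\sigma}(\overset{t_1}{B_1},\overset{t_2}{B_2}),\overset{k}{A}\rangle$ and the second slot is a three-term sum pairing $\overline{\sigma}$ with the degree-shifted components. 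On the right I would expand $\mathscr{A}\boldsymbol{\wedge}^{\blacktriangleright}\mathscr{B}_2$ and pair $\mathscr{B}_1$ against it, again obtaining a three-term second slot. Matching the two expansions slot by slot is then precisely the ordinary $\overline{\sigma}$-relation applied once in the first slot and twice in the second. The two $\overline{\kappa}$ identities go through verbatim because $\overline{\kappa}$ obeys the same defining rule; the two $\overline{\eta_i}$ identities use the first or second form of the $\{,\}$-relation respectively; and the final two identities, whose generalized maps carry no internal sign twist, reduce immediately to the componentwise adjointness of $\alpha^{\ast}$ and $\beta^{\ast}$.

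The only genuine difficulty is the sign bookkeeping, and it lives entirely in the second slot of each ordered pair. There the generalized pairing, the generalized map, and the generalized wedge each contribute a Koszul factor, so the second component is a sum of two or three terms, each carrying a product of signs. The point to verify is that, after the ordinary relation is applied to each summand, every term acquires exactly the single global prefactor displayed on the right-hand side, for example $(-1)^{kt_2+1}$ in the first identity or $(-1)^{t_1(t_2+q)+1}$ in the fifth. I expect this to be routine but error-prone, and the safeguard is that the sign conventions for $\ll -,-\gg$, $\boldsymbol{\wedge}^{\blacktriangleright}$ and $\boldsymbol{\wedge}^{\bm{\{,\}}}$ were already fixed in the earlier propositions, so their mutual consistency forces the per-term signs to collapse to the claimed factor. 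In practice I would carry out one identity from each of the four groups in full and obtain the remaining four by relabeling.
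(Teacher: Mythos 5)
Your proposal is correct and follows essentially the same route as the paper: the paper's own proof consists of a single line invoking the ordinary identities (\ref{C12})--(\ref{C18}) from Appendix \ref{LAVDF}, which is precisely your strategy of expanding each generalized form into its ordered-pair components and applying the corresponding ordinary relation slot by slot. Your sign verification in the second slot (e.g.\ the collapse of the three Koszul-twisted terms to the single prefactor $(-1)^{kt_2+1}$ in the first identity) is in fact more detail than the paper supplies.
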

\begin{proof}
These identities can be easily given by using (\ref{C12}, \ref{C13}, \ref{C14},  \ref{C15}, \ref{C16}, \ref{C17}, \ref{C18}).
\end{proof}

	   \section{Generalized (higher) connections}\label{sec5}
	   In this section, we define generalized higher connections as extensions of ordinary higher connections in the frame of GDC. In related works  \cite{GHY, ghy, Robbinson5}, the generalized connection theory has been defined by using type $1$ generalized differential forms. The generalized connection forms of type $N=2$ can be found in \cite{Robbinson4, Robinson2}. In addition, Robinson has defined the type $N$ generalized connections in \cite{Robinson1, Robinson3}.
	   We will restrict ourselves only to the generalized higher connection forms of type $N=1$. There may be some similar results for type $N$ generalized higher connections, which may be more complex.
	   
	    Let us consider a principle bundle P($M$, $G$), where $M$ is the $n$-dimensional base manifold and $G$ is the structure group whose Lie algebra is denoted by $\mathcal{g}$.  A generalized connection $1$-form $\mathscr{A}$ valued in the Lie algebra $\mathcal{g}$ is defined as 
	    \begin{align}
	    	\mathscr{A} = (\overset{1}{A}, \overset{2}{A}),
	    \end{align}
	   where $\overset{1}{A}$ is an ordinary $\mathcal{g}$-valued connection $1$-form and $\overset{2}{A}$ is an ordinary $\mathcal{g}$-valued $2$-form. Then the generalized curvature $2$-form $	\mathscr{F}$ is
	   \begin{align}
	   	\mathscr{F}& = \textbf{d} \mathscr{A} + \mathscr{A} \boldsymbol{\wedge}\mathscr{A} \notag\\
	   	                   & =(d \overset{1}{A} + \overset{1}{A} \wedge \overset{1}{A}  + k \overset{2}{A} , d \overset{1}{A}  + \overset{1}{A}  \wedge \overset{2}{A} - \overset{2}{A} \wedge \overset{1}{A}) \notag\\
	   	                   & = (F + k \overset{2}{A} , D \overset{2}{A}),
   		   \end{align}
	   where $F = d \overset{1}{A} + \overset{1}{A} \wedge \overset{1}{A}$ is  the ordinary curvature $2$-form, and $D$ is the covariant derivative with respect to the connection $\overset{1}{A}$. It is easy to show that the generalized curvature $2$-form $\mathscr{F}$ satisfies the generalized Bianchi Identity
	   \begin{align}\label{BI1}
	   	\mathscr{D} \mathscr{F} := \textbf{d} \mathscr{F} + \mathscr{A} \boldsymbol{\wedge}\mathscr{F} - \mathscr{F} \boldsymbol{\wedge} \mathscr{A} \equiv 0,
	   \end{align}
   where $\mathscr{D}$ is the generalized covariant derivative with respect to the generalized connection $\mathscr{A}$.
   
   In \cite{Robbinson5}, a generalized gauge transformation is introduced for generalized connections. Let $\bm{G}=\{(g, mg)|g \in G, m \in \Lambda^1(M, \mathcal{g})\}$ which is a group under group and exterior multiplication. Calculation shows that the product of $\bm{g_1} = (g_1, m_1g_1)$ and $\bm{g_2} = (g_2, m_2g_2) \in \bm{G}$ is the element of $\bm{G}$ given by
   \begin{align}
   \bm{g_1}\bm{g_2}=(g_1, m_1g_1)(g_2, m_2g_2)=(g_1 g_2, g_1 m_2 g_2 + m_1 g_1 g_2).
   \end{align}
 The identity of $\bm{G}$ is $(1, 0)$, where $1$ is the identity of $G$, and the inverse  of $\bm{g}$ is given by $(g^{-1}, -g^{-1}m)$.
 Consider generalized gauge transformation 
 \begin{align}
\mathscr{A'}&= \bm{g^{-1}}\mathscr{A} \bm{g} + \bm{g^{-1}}\textbf{d}\bm{g}\notag \\
&=(g^{-1}(\overset{1}{A} - k m)g + g^{-1}dg, g^{-1}(dm - k m m + \overset{1}{A}m + m\overset{1}{A} + \overset{2}{A})g),
 \end{align}
 and the generalized curvature changes as 
 \begin{align}
 	\mathscr{F'}= \bm{g^{-1}}\mathscr{F}\bm{g}= (g^{-1}(F + k \overset{2}{A})g, g^{-1}D \overset{2}{A} g +g^{-1} (F + k \overset{2}{A}) \wedge^{[, ]}mg).
 \end{align}
   We note that the generalized gauge transformation is same as the ordinary gauge transformation in the expression. 
	
	\subsection{Generalized 2-connections}
	Assume we are given a Lie crossed module $(G, H; \alpha, \vartriangleright)$ (namely $2$-group) with the corresponding differential crossed module $(\mathcal{g}, \mathcal{h}; \alpha, \vartriangleright)$, we consider an ordinary 2-connection $(\overset{1}{A}, \overset{2}{B})$ on a principle $2$-bundle over $M$
\begin{align}
 \overset{1}{A}\in \Lambda^1(M, \mathcal{g}),\ \ \ \overset{2}{B} \in \Lambda^2(M, \mathcal{h}),
\end{align}
and the corresponding 2-curvature $(\overset{2}{\mathbb{F}}, \overset{3}{\mathbb{G}})$ is given by
\begin{align}
 \overset{2}{\mathbb{F}} = d \overset{1}{A} + \overset{1}{A}\wedge \overset{1}{A} - \alpha(\overset{2}{B}),
 \ \ \ \overset{3}{\mathbb{G}} = d \overset{2}{B} + \overset{1}{A}\wedge^{\vartriangleright} \overset{2}{B}.
\end{align}
The 2-Bianchi-Identities are as follows
\begin{align}
&d \overset{2}{\mathbb{F}} + \overset{1}{A} \wedge^{[,]} \overset{2}{\mathbb{F}} = - \alpha(\overset{3}{\mathbb{G}}),\label{2BI}\\
&d \overset{3}{\mathbb{G}} + \overset{1}{A} \wedge^{\vartriangleright} \overset{3}{\mathbb{G}} = (\overset{2}{\mathbb{F}} + \alpha (\overset{2}{B}))\wedge^{\vartriangleright }\overset{2}{B}\label{22BI}.
\end{align}
	We consider gauge transformations given in \cite{Martins:2010ry}:
	\begin{itemize}
		\item [1)] Thin: 
		\begin{align}\label{thin}
	 \overset{1}{A'}= g^{-1}  \overset{1}{A} g + g^{-1} dg,\ \ \ \ \ \overset{2}{B'} = g^{-1} \vartriangleright \overset{2}{B},
		\end{align}
		then the curvatures change as 
		$$\overset{2}{\mathbb{F'}}=g^{-1}\overset{2}{\mathbb{F}}g,\ \ \ \ \  \overset{3}{\mathbb{G'}}=g^{-1}\vartriangleright \overset{3}{\mathbb{G}},$$
		where $g \in G$.
			\item [2)] Fat: 
			\begin{align}\label{fat}
		 \overset{1}{A'} =  \overset{1}{A} + \alpha(\overset{1}{\phi}),\ \ \ \ \ \overset{2}{B'} = \overset{2}{B} + d \overset{1}{\phi} +  \overset{1}{A}  \wedge^{\vartriangleright} \overset{1}{\phi}+ \overset{1}{\phi} \wedge \overset{1}{\phi},
			\end{align}
			then the curvatures change as 
			$$\overset{2}{\mathbb{F'}}= \overset{2}{\mathbb{F}},\ \ \ \ \  \overset{3}{\mathbb{G'}}=\overset{3}{\mathbb{G}}+ \overset{2}{\mathbb{F}}\wedge^{\vartriangleright}\overset{1}{\phi},$$
			where $\overset{1}{\phi} \in \Lambda^1(M, \mathcal{h})$.
	\end{itemize}
	
We first define a generalized 2-connection $(\mathscr{A}, \mathscr{B})$ by
\begin{align}
\mathscr{A} = (\overset{1}{A}, \overset{2}{A}),\ \ \ \ \ 
\mathscr{B} = (\overset{2}{B}, \overset{3}{B}),
\end{align}
where $(\overset{1}{A}, \overset{2}{B})$ is the ordinary 2-connection, and $\overset{2}{A} \in \Lambda^2(M, \mathcal{g})$ and $\overset{3}{B}\in \Lambda^3 (M, \mathcal{h})$.
And the generalized 2-curvature $(\mathscr{F}, \mathscr{G})$ is given by
\begin{align*}
\mathscr{F} = \textbf{d} \mathscr{A} + \mathscr{A} \boldsymbol{\wedge}\mathscr{A} -  \bm{\alpha}(\mathscr{B}) = (\overset{2}{\mathbb{F}} + k \overset{2}{A}, D \overset{2}{A}- \alpha(\overset{3}{B})),\\
\mathscr{G} = \textbf{d} \mathscr{B} + \mathscr{A} \boldsymbol{\wedge}^{\blacktriangleright}\mathscr{B} = (\overset{3}{\mathbb{G}} - k \overset{3}{B}, D \overset{3}{B} + \overset{2}{A} \wedge^{\vartriangleright} \overset{2}{B}).
\end{align*}
	
	\begin{proposition}
		If $(\mathscr{A}, \mathscr{B})$ is any generalized $2$-connection on a principle 2-bundle over $M$, then its generalized 2-curvature $(\mathscr{F}, \mathscr{G})$ satisfies the generalized  $2$-Bianchi-Identities:
			\begin{align}
			&\textbf{d} \mathscr{F} + \mathscr{A} \boldsymbol{\wedge}^{\bm{[,]}} \mathscr{F} = - \bm{\alpha}(\mathscr{G});\\
			&\textbf{d} \mathscr{G} + \mathscr{A}\boldsymbol{\wedge}^{\blacktriangleright}\mathscr{G} = (\mathscr{F}+ \bm{\alpha}(\mathscr{B}))\boldsymbol{\wedge}^{\blacktriangleright} \mathscr{B}.
		\end{align}
	\end{proposition}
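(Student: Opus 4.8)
The plan is to prove both identities structurally, mirroring the derivation of the ordinary $2$-Bianchi-Identities \eqref{2BI}--\eqref{22BI} but with every ordinary operation replaced by its generalized counterpart, feeding in the algebraic rules of Propositions~3.1 and~3.2 together with the generalized Bianchi-Identity \eqref{BI1}. The observation that makes this efficient is that $\mathscr{F} + \bm{\alpha}(\mathscr{B})$ is exactly the ``naive'' generalized curvature $\mathscr{F}_0 := \textbf{d}\mathscr{A} + \mathscr{A}\boldsymbol{\wedge}\mathscr{A}$ of the generalized $1$-connection $\mathscr{A}$ alone. Since $\mathscr{A}$ is a generalized $\mathcal{g}$-valued $1$-form, \eqref{BI1} applies to it verbatim and, after noting that $\mathscr{A}\boldsymbol{\wedge}\mathscr{F}_0 - \mathscr{F}_0\boldsymbol{\wedge}\mathscr{A} = \mathscr{A}\boldsymbol{\wedge}^{\bm{[,]}}\mathscr{F}_0$, reads $\textbf{d}\mathscr{F}_0 + \mathscr{A}\boldsymbol{\wedge}^{\bm{[,]}}\mathscr{F}_0 = 0$.

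For the first identity I would write $\mathscr{F} = \mathscr{F}_0 - \bm{\alpha}(\mathscr{B})$ and split, using bilinearity of $\boldsymbol{\wedge}^{\bm{[,]}}$ and linearity of $\textbf{d}$,
\begin{align*}
\textbf{d}\mathscr{F} + \mathscr{A}\boldsymbol{\wedge}^{\bm{[,]}}\mathscr{F}
&= \bigl(\textbf{d}\mathscr{F}_0 + \mathscr{A}\boldsymbol{\wedge}^{\bm{[,]}}\mathscr{F}_0\bigr)
- \bigl(\textbf{d}\bm{\alpha}(\mathscr{B}) + \mathscr{A}\boldsymbol{\wedge}^{\bm{[,]}}\bm{\alpha}(\mathscr{B})\bigr).
\end{align*}
The first bracket vanishes by \eqref{BI1}. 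For the second I would use that $\bm{\alpha}$ commutes with $\textbf{d}$ (immediate from the component formula \eqref{alpha} and the definition of $\textbf{d}$, since $\alpha$ is linear and intertwines the ordinary $d$) and that $\mathscr{A}\boldsymbol{\wedge}^{\bm{[,]}}\bm{\alpha}(\mathscr{B}) = \bm{\alpha}(\mathscr{A}\boldsymbol{\wedge}^{\blacktriangleright}\mathscr{B})$ by Proposition~3.1(1). This collapses the second bracket to $\bm{\alpha}\bigl(\textbf{d}\mathscr{B} + \mathscr{A}\boldsymbol{\wedge}^{\blacktriangleright}\mathscr{B}\bigr) = \bm{\alpha}(\mathscr{G})$, leaving $-\bm{\alpha}(\mathscr{G})$.

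For the second identity I would expand $\mathscr{G} = \textbf{d}\mathscr{B} + \mathscr{A}\boldsymbol{\wedge}^{\blacktriangleright}\mathscr{B}$ and compute $\textbf{d}\mathscr{G} + \mathscr{A}\boldsymbol{\wedge}^{\blacktriangleright}\mathscr{G}$. Nilpotency $\textbf{d}^2 = 0$ kills $\textbf{d}\,\textbf{d}\mathscr{B}$; the Leibniz rule of Proposition~3.2(2) (with $k=1$) gives $\textbf{d}(\mathscr{A}\boldsymbol{\wedge}^{\blacktriangleright}\mathscr{B}) = \textbf{d}\mathscr{A}\boldsymbol{\wedge}^{\blacktriangleright}\mathscr{B} - \mathscr{A}\boldsymbol{\wedge}^{\blacktriangleright}\textbf{d}\mathscr{B}$, and the two $\mathscr{A}\boldsymbol{\wedge}^{\blacktriangleright}\textbf{d}\mathscr{B}$ contributions cancel, leaving $\textbf{d}\mathscr{A}\boldsymbol{\wedge}^{\blacktriangleright}\mathscr{B} + \mathscr{A}\boldsymbol{\wedge}^{\blacktriangleright}(\mathscr{A}\boldsymbol{\wedge}^{\blacktriangleright}\mathscr{B})$. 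The quadratic term is handled by Proposition~3.1(4) with $\mathscr{A}_1 = \mathscr{A}_2 = \mathscr{A}$: in degree one the sign is $(-1)^{k_1 k_2 + 1} = +1$, so $(\mathscr{A}\boldsymbol{\wedge}^{\bm{[,]}}\mathscr{A})\boldsymbol{\wedge}^{\blacktriangleright}\mathscr{B} = 2\,\mathscr{A}\boldsymbol{\wedge}^{\blacktriangleright}(\mathscr{A}\boldsymbol{\wedge}^{\blacktriangleright}\mathscr{B})$; combined with $\mathscr{A}\boldsymbol{\wedge}^{\bm{[,]}}\mathscr{A} = 2\,\mathscr{A}\boldsymbol{\wedge}\mathscr{A}$ this yields $\mathscr{A}\boldsymbol{\wedge}^{\blacktriangleright}(\mathscr{A}\boldsymbol{\wedge}^{\blacktriangleright}\mathscr{B}) = (\mathscr{A}\boldsymbol{\wedge}\mathscr{A})\boldsymbol{\wedge}^{\blacktriangleright}\mathscr{B}$. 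Re-summing gives $(\textbf{d}\mathscr{A} + \mathscr{A}\boldsymbol{\wedge}\mathscr{A})\boldsymbol{\wedge}^{\blacktriangleright}\mathscr{B} = \mathscr{F}_0\boldsymbol{\wedge}^{\blacktriangleright}\mathscr{B} = (\mathscr{F} + \bm{\alpha}(\mathscr{B}))\boldsymbol{\wedge}^{\blacktriangleright}\mathscr{B}$.

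The main obstacle, and the only place demanding care, is the sign bookkeeping in the degree-one specialization of Proposition~3.1(4): one must confirm that the (anti)symmetrization \emph{doubles} rather than cancels the term, so that the factor $\tfrac12$ against $\mathscr{A}\boldsymbol{\wedge}^{\bm{[,]}}\mathscr{A} = 2\,\mathscr{A}\boldsymbol{\wedge}\mathscr{A}$ reproduces $\mathscr{A}\boldsymbol{\wedge}\mathscr{A}$ and not its negative. As a cross-check, and as an alternative matching the explicit component style of the earlier proofs, I would evaluate both sides as ordered pairs using the displayed expressions for $\mathscr{F}$ and $\mathscr{G}$: the first slots then reduce verbatim to the ordinary identities \eqref{2BI} and \eqref{22BI}, while the remaining slots are verified by direct computation, the $k$-dependent contributions coming from $\textbf{d}$ organizing themselves against the covariant derivatives of $\overset{2}{A}$ and $\overset{3}{B}$.
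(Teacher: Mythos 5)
Your proof is correct, and it takes a genuinely different route from the paper's. The paper argues entirely in components: it expands $\textbf{d}\mathscr{F}+\mathscr{A}\boldsymbol{\wedge}^{\bm{[,]}}\mathscr{F}$ and $\textbf{d}\mathscr{G}+\mathscr{A}\boldsymbol{\wedge}^{\blacktriangleright}\mathscr{G}$ as ordered pairs of ordinary forms, tracks all the $k$-dependent terms by hand, and then recognizes the two slots using the ordinary $2$-Bianchi-Identities \eqref{2BI} and \eqref{22BI}. You instead argue structurally, never descending to components in the main line of the argument: writing $\mathscr{F}=\mathscr{F}_0-\bm{\alpha}(\mathscr{B})$ with $\mathscr{F}_0=\textbf{d}\mathscr{A}+\mathscr{A}\boldsymbol{\wedge}\mathscr{A}$, the first identity follows from the generalized Bianchi-Identity \eqref{BI1} applied to $\mathscr{A}$, Proposition 3.1(1), and the commutation of $\bm{\alpha}$ with $\textbf{d}$; the second follows from $\textbf{d}^2=0$, the Leibniz rule of Proposition 3.2, and the degree-one case of Proposition 3.1(4). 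Your sign checks are right: for $p=1$, $q=2$ one has $(-1)^{pq+1}=-1$, so $\mathscr{A}\boldsymbol{\wedge}^{\bm{[,]}}\mathscr{F}_0=\mathscr{A}\boldsymbol{\wedge}\mathscr{F}_0-\mathscr{F}_0\boldsymbol{\wedge}\mathscr{A}$ and \eqref{BI1} reads as you claim, and for $k_1=k_2=1$ the sign $(-1)^{k_1k_2+1}=+1$ doubles rather than cancels, giving $\mathscr{A}\boldsymbol{\wedge}^{\blacktriangleright}(\mathscr{A}\boldsymbol{\wedge}^{\blacktriangleright}\mathscr{B})=(\mathscr{A}\boldsymbol{\wedge}\mathscr{A})\boldsymbol{\wedge}^{\blacktriangleright}\mathscr{B}$. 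What your route buys: it is shorter, it never uses the ordinary identities \eqref{2BI}--\eqref{22BI} as input (they are recovered as the special case $\overset{2}{A}=0$, $\overset{3}{B}=0$), it shows the result is a purely formal consequence of the graded algebra of Section \ref{sec3}, and the same skeleton yields the generalized $3$-Bianchi-Identities with no extra work. What the paper's route buys: the explicit slotwise bookkeeping exhibits concretely how the ordinary identities and the $k$-terms reorganize inside the generalized ones, which doubles as a consistency check on the component formulas for $\mathscr{F}$ and $\mathscr{G}$. Two points you should make explicit in a final write-up: (i) $\textbf{d}\bm{\alpha}(\mathscr{B})=\bm{\alpha}(\textbf{d}\mathscr{B})$ is not a numbered statement in the paper, so record the one-line slotwise verification (it holds because $\bm{\alpha}$ acts in each slot by the fixed linear map $\alpha$, which also passes through the $k$-term of $\textbf{d}$); (ii) \eqref{BI1} is asserted in the paper by citation rather than proved there, so for self-containedness you may add the two-line formal derivation $\textbf{d}\mathscr{F}_0+\mathscr{A}\boldsymbol{\wedge}\mathscr{F}_0-\mathscr{F}_0\boldsymbol{\wedge}\mathscr{A}=(\textbf{d}\mathscr{A}\boldsymbol{\wedge}\mathscr{A}-\mathscr{A}\boldsymbol{\wedge}\textbf{d}\mathscr{A})+(\mathscr{A}\boldsymbol{\wedge}\textbf{d}\mathscr{A}-\textbf{d}\mathscr{A}\boldsymbol{\wedge}\mathscr{A})=0$, which needs only the Leibniz rule for $\boldsymbol{\wedge}$, associativity, and $\textbf{d}^2=0$.
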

\begin{proof}
\begin{align*}
\textbf{d} \mathscr{F} + \mathscr{A} \boldsymbol{\wedge}^{\bm{[,]}} \mathscr{F} &= \textbf{d}(\overset{2}{\mathbb{F}} + k \overset{2}{A}, D \overset{2}{A} - \alpha(\overset{3}{B})) + (\overset{1}{A}, \overset{2}{A})\boldsymbol{\wedge} (\overset{2}{\mathbb{F}} + k \overset{2}{A}, D \overset{2}{A} - \alpha (\overset{3}{B})) \\
&- (\overset{2}{\mathbb{F}} + k \overset{2}{A}, D \overset{2}{A} - \alpha (\overset{3}{B}))  \boldsymbol{\wedge} (\overset{1}{A}, \overset{2}{A})\\
&=(d \overset{2}{\mathbb{F}} + k d \overset{2}{A} - k D \overset{2}{A} + k \alpha(\overset{3}{B}), d D \overset{2}{A} - d \alpha (\overset{3}{B})) + (\overset{1}{A} \wedge \overset{2}{\mathbb{F}} + k \overset{1}{A} \wedge \overset{2}{A}, \overset{1}{A} \wedge D \overset{2}{A} \\
&-\overset{1}{A}\wedge \alpha(\overset{3}{B})   + \overset{2}{A} \wedge \overset{2}{\mathbb{F}} + k \overset{2}{A} \wedge \overset{2}{A} ) + (\overset{2}{\mathbb{F}} \wedge \overset{1}{A} + k \overset{2}{A} \wedge \overset{1}{A}, \overset{2}{\mathbb{F}} \wedge \overset{2}{A} + k \overset{2}{A}\wedge \overset{2}{A} - D \overset{2}{A} \wedge \overset{1}{A} \\
&+ \alpha(\overset{3}{B})\wedge \overset{1}{A})\\
&=(D \overset{2}{\mathbb{F}} + k \alpha(\overset{3}{B}), - \overset{2}{A}\wedge^{[,]} \alpha(\overset{2}{B}) - D \alpha(\overset{3}{B}))\\
&= (- \alpha(d \overset{2}{B})- \alpha(\overset{1}{A}\wedge^{\vartriangleright} \overset{2}{B}) + k \alpha(\overset{3}{B}), -\alpha(\overset{2}{A}\wedge^{\vartriangleright}\overset{2}{B} + D \overset{3}{B}))\\
&=(-\alpha(d\overset{2}{B}+ \overset{1}{A}\wedge^{\vartriangleright}\overset{2}{B} - k \overset{3}{B}), -\alpha(\overset{2}{A}\wedge^{\vartriangleright}\overset{2}{B}+ D \overset{3}{B}))\\
&= - \bm{\alpha}(\mathscr{G}),\\
\textbf{d} \mathscr{G} + \mathscr{A}\boldsymbol{\wedge}^{\blacktriangleright}\mathscr{G} &= \textbf{d} (D \overset{2}{B} - k \overset{3}{B}, D \overset{3}{B} + \overset{2}{A} \wedge^{\vartriangleright} \overset{2}{B}) + (\overset{1}{A}\wedge \overset{2}{A} )\boldsymbol{\wedge}^{\vartriangleright}(D \overset{2}{B} - k \overset{3}{B}, D \overset{3}{B} + \overset{2}{A}\wedge^{\vartriangleright}\overset{2}{B})\\
&=(d D \overset{2}{B} - k d \overset{3}{B} + k D \overset{3}{B} + k \overset{2}{A} \wedge^{\vartriangleright}\overset{2}{B}, d D \overset{3}{B} + d \overset{2}{A} \wedge^{\vartriangleright} \overset{2}{B} + \overset{2}{A} \wedge^{\vartriangleright} d \overset{2}{B})+ (\overset{1}{A}\wedge^{\vartriangleright} D \overset{2}{B}\\
&- k \overset{1}{A}\wedge^{\vartriangleright}\overset{3}{B}, \overset{1}{A}\wedge^{\vartriangleright}D \overset{3}{B} +(\overset{1}{A}\wedge \overset{2}{A})\wedge^{\vartriangleright}\overset{2}{B} - \overset{2}{A}\wedge^{\vartriangleright}D \overset{2}{B} + k \overset{2}{A}\wedge^{\vartriangleright}\overset{3}{B})\\
&= ((d \overset{1}{A} + \overset{1}{A}\wedge \overset{1}{A})\wedge^{\vartriangleright}\overset{2}{B} + k \overset{2}{A}\wedge^{\vartriangleright}\overset{2}{B}, (d\overset{1}{A} + \overset{1}{A}\wedge \overset{1}{A})\wedge^{\vartriangleright}\overset{3}{B} + (d \overset{2}{A} + \overset{1}{A} \wedge \overset{2}{A} \\
&-  \overset{2}{A} \wedge  \overset{1}{A})\wedge^{\vartriangleright} \overset{2}{B} + k \overset{2}{A}\wedge^{\vartriangleright} \overset{3}{B})\\
&=(\overset{2}{\mathbb{F}}\wedge^{\vartriangleright}\overset{2}{B} + \alpha(\overset{2}{B})\wedge^{\vartriangleright}\overset{2}{B}+k \overset{2}{A}\wedge^{\vartriangleright}\overset{2}{B}, \overset{2}{\mathbb{F}} \wedge^{\vartriangleright}\overset{3}{B} + \alpha(\overset{2}{B})\wedge^{\vartriangleright}\overset{3}{B} + D \overset{2}{A}\wedge^{\vartriangleright}\overset{2}{B} + k \overset{2}{A}\wedge^{\vartriangleright} \overset{3}{B})\\
&=(\overset{2}{\mathbb{F}} + k \overset{2}{A} + \alpha(\overset{2}{B}), D \overset{2}{A}) \wedge^{\vartriangleright}(\overset{2}{B}, \overset{3}{B})\\
&= (\mathscr{F}+ \bm{\alpha}(\mathscr{B}))\boldsymbol{\wedge}^{\blacktriangleright} \mathscr{B},
\end{align*}
by using \eqref{2BI} and \eqref{22BI}.
\end{proof}
	
	We construct a generalized mixed relation 
	\begin{align}
		\bm{g}\blacktriangleright \mathscr{B}=(g, mg)\blacktriangleright (\overset{2}{B}, \overset{3}{B}):=(g \vartriangleright\overset{2}{B}, g \vartriangleright \overset{3}{B}+m\wedge^{\vartriangleright}(g \vartriangleright \overset{2}{B})),
	\end{align}
	where $\bm{g}$ and $m$ are as above. Then it follows that
	\begin{align}
		\bm{\alpha}(\bm{g}\blacktriangleright \mathscr{B})=\bm{g}\alpha(\mathscr{B})\bm{g^{-1}}.
	\end{align}
	It is clear that these gauge transformations \eqref{thin} and \eqref{fat} can be extended to the generalized case. Thus, we define corresponding generalized gauge transformations:
		\begin{itemize}
		\item [1)] Thin: 
		\begin{align}\label{gthin}
	   \mathscr{A'}&= \bm{g^{-1}}\mathscr{A} \bm{g} + \bm{g^{-1}}\textbf{d}\bm{g}\notag \\ &=(g^{-1}(\overset{1}{A} - k m)g + g^{-1}dg, g^{-1}(Dm - k m\wedge m + \overset{2}{A})g),\\
	   \mathscr{B'}&= \bm{g^{-1}}\blacktriangleright \mathscr{B}\notag\\
	   &=(g^{-1}\vartriangleright \overset{2}{B}, g^{-1}\vartriangleright (\overset{3}{B}- m\wedge^{\vartriangleright}\overset{2}{B})),
		\end{align}
		then the generalized curvatures change as 
		\begin{align}
		\mathscr{F'}&=\bm{g^{-1}}\mathscr{F}\bm{g} \notag \\
		&= (g^{-1}(\overset{2}{\mathbb{F}} + k \overset{2}{A})g, g^{-1}(D \overset{2}{A}- \alpha(\overset{3}{B}) + (\overset{2}{\mathbb{F}} + k \overset{2}{A}) \wedge^{[, ]}m)g), \\
		 \mathscr{G'}&=\bm{g^{-1}\blacktriangleright}\mathscr{G}\notag\\
		 &=(g^{-1}\vartriangleright(\overset{3}{\mathbb{G}} - k \overset{3}{B}), g^{-1}\vartriangleright(D \overset{3}{B} + \overset{2}{A}\wedge^{\vartriangleright}\overset{2}{B}) + (g^{-1}m)\wedge^{\vartriangleright}(D\overset{2}{B}-k \overset{3}{B})),
	 		\end{align}
		where $\bm{g} \in \bm{G}$.
		\item [2)] Fat: 
		\begin{align}\label{gfat}
		&\mathscr{A'}=\mathscr{A}+\bm{\alpha}(\bm{\phi}),\\
		&\mathscr{B'}=\mathscr{B}+ \textbf{d}\bm{\phi}+ \mathscr{A} \boldsymbol{\wedge}^{\blacktriangleright}\bm{\phi} + \bm{\phi} \boldsymbol{\wedge}\bm{\phi},
		\end{align}
		then the generalized curvatures change as 
		\begin{align}
	  	\mathscr{F'}&= \mathscr{F}\notag\\
	  	&=(\overset{2}{\mathbb{F}} + k \overset{2}{A}, D \overset{2}{A}- \alpha(\overset{3}{B})),\\
	  \mathscr{G'}&=\mathscr{G}+ \mathscr{F}\boldsymbol{\wedge}^{\blacktriangleright}\bm{\phi}\notag\\
  &= (\overset{3}{\mathbb{G}} - k \overset{3}{B} + (\overset{2}{\mathbb{F}} + k \overset{2}{A})\wedge^{\vartriangleright}\overset{1}{\phi}, D \overset{3}{B} + \overset{2}{A} \wedge^{\vartriangleright} \overset{2}{B} + (\overset{2}{\mathbb{F}} + k \overset{2}{A})\wedge^{\vartriangleright}\overset{2}{\phi}-(D \overset{2}{A}-\alpha(\overset{3}{B}))\wedge^{\vartriangleright}\overset{1}{\phi}),
		\end{align}
where $\bm{\phi}=(\overset{1}{\phi}, \overset{2}{\phi})\in \Lambda^1(M, \mathcal{h},1)$.
	\end{itemize}

	\subsection{Generalized 3-connections}
	Review the notion of ordinary 3-connection in \cite{SDH}. Let $(L, H, G;\beta, \alpha, \vartriangleright, \left\{,\right\})$  be a Lie 2-crossed module. The corresponding differential 2-crossed module is denoted as $(\mathcal{l}, \mathcal{h}, \mathcal{g};\beta, \alpha, \vartriangleright, \left\{,\right\})$. A 3-connection $(\overset{1}{A}, \overset{2}{B}, \overset{3}{C})$ on a principle $3$-bundle is given by
	\begin{align}
		\overset{1}{A}\in \Lambda^1(M, \mathcal{g}),\ \ \ \ \overset{2}{B} \in \Lambda^2(M, \mathcal{h}),\ \ \ \ \overset{3}{C} \in \Lambda^3(M, \mathcal{l}),
	\end{align}
	and the corresponding 3-curvature $(\overset{2}{\mathbb{F}}, \overset{3}{\mathbb{G}}, \overset{4}{\mathbb{H}})$ is given by
	\begin{align}
		\overset{2}{\mathbb{F}} = d \overset{1}{A} + \overset{1}{A}\wedge \overset{1}{A} - \alpha(\overset{2}{B}),
		\ \ \ \overset{3}{\mathbb{G}} = d \overset{2}{B} + \overset{1}{A}\wedge^{\vartriangleright} \overset{2}{B} - \beta (\overset{3}{C}),\ \ \ \overset{4}{\mathbb{H}} = d \overset{3}{C} + \overset{1}{A}\wedge^{\vartriangleright} \overset{3}{C}+ \overset{2}{B} \wedge^{\{,\}} \overset{2}{B}.
	\end{align}
	The 3-Bianchi-Identities are as follows
	\begin{align}
		&d \overset{2}{\mathbb{F}} + \overset{1}{A}\wedge^{[,]}\overset{2}{\mathbb{F}} = - \alpha(\overset{3}{\mathbb{G}} + \beta(\overset{3}{C})), \label{3BI1}  \\
	&	d \overset{3}{\mathbb{G}} + \overset{1}{A}\wedge^{\vartriangleright}\overset{3}{\mathbb{G}} = (\overset{2}{\mathbb{F}} + \alpha(\overset{2}{B}))\wedge^{\vartriangleright} \overset{2}{B} - \beta (\overset{4}{\mathbb{H}} - \overset{2}{B}\wedge^{\{, \}}\overset{2}{B}), \label{3BI2} \\
		& d \overset{4}{\mathbb{H}} + \overset{1}{A}\wedge^{\vartriangleright}\overset{4}{\mathbb{H}} = (\overset{2}{\mathbb{F}} + \alpha(\overset{2}{B})) \wedge^{\vartriangleright} \overset{3}{C} + (\overset{3}{\mathbb{G}} + \beta(\overset{3}{C}))\wedge^{\{, \}}\overset{2}{B} + \overset{2}{B}\wedge^{\{, \}} (\overset{3}{\mathbb{G}} + \beta (\overset{3}{C})).\label{3BI3}
	\end{align}
	
	We consider three types of gauge transformations given in \cite{TRMV1, SDH}.
	\begin{itemize}
		\item [1)] $G$-gauge transformation: 
	\begin{align}\label{Ggt}
	\overset{1}{A'}= g^{-1}  \overset{1}{A} g + g^{-1} dg,\ \ \ \ \ \overset{2}{B'} = g^{-1} \vartriangleright \overset{2}{B}, \ \ \ \ \ \overset{3}{C'} = g^{-1} \vartriangleright \overset{3}{C},
\end{align}
then the curvatures change as 
$$\overset{2}{\mathbb{F'}}=g^{-1}\overset{2}{\mathbb{F}}g,\ \ \ \ \  \overset{3}{\mathbb{G'}}=g^{-1}\vartriangleright \overset{3}{\mathbb{G}},\ \ \ \ \  \overset{4}{\mathbb{H'}}=g^{-1}\vartriangleright \overset{4}{\mathbb{H}},$$
where $g \in G$.
	\item [2)] $H$-gauge transformation: 
		\begin{align}\label{Hgt}
		\overset{1}{A'} =  \overset{1}{A} - \alpha(\overset{1}{\phi}),\ \ \overset{2}{B'} = \overset{2}{B} - d \overset{1}{\phi} -  \overset{1}{A'}  \wedge^{\vartriangleright} \overset{1}{\phi} - \overset{1}{\phi} \wedge \overset{1}{\phi},\ \ \overset{3}{C'} = \overset{3}{C} + \overset{2}{B'}\wedge^{\{, \}}\overset{1}{\phi} + \overset{1}{\phi}\wedge^{\{, \}}\overset{2}{B},
	\end{align}
	then the curvatures change as 
	$$\overset{2}{\mathbb{F'}}= \overset{2}{\mathbb{F}},\ \ \ \ \  \overset{3}{\mathbb{G'}}=\overset{3}{\mathbb{G}} - \overset{2}{\mathbb{F}}\wedge^{\vartriangleright}\overset{1}{\phi},\ \ \ \ \  \overset{4}{\mathbb{H'}}= \overset{4}{\mathbb{H}}+\overset{3}{\mathbb{G'}}\wedge^{\{, \}}\overset{1}{\phi}-\overset{1}{\phi}\wedge^{\{, \}}\overset{3}{\mathbb{G}},$$
	where $\overset{1}{\phi} \in \Lambda^1(M, \mathcal{h})$.
		\item [3)] $L$-gauge transformation: 
			\begin{align}\label{Lgt}
			\overset{1}{A'} =  \overset{1}{A},\ \ \overset{2}{B'} = \overset{2}{B} + \beta(\overset{2}{\psi}),\ \ \overset{3}{C'} = \overset{3}{C} + d \overset{2}{\psi} + \overset{1}{A}\wedge^{\vartriangleright}\overset{2}{\psi},
		\end{align}
		then the curvatures change as 
		$$\overset{2}{\mathbb{F'}}= \overset{2}{\mathbb{F}},\ \ \ \ \  \overset{3}{\mathbb{G'}}=\overset{3}{\mathbb{G}} ,\ \ \ \ \  \overset{4}{\mathbb{H'}}=\overset{4}{\mathbb{H}} +\overset{2}{\mathbb{F}}\wedge^{\vartriangleright}\overset{2}{\psi},$$
		where $\overset{2}{\psi} \in \Lambda^2(M, \mathcal{l})$.
	\end{itemize}

	Then we first define the generalized 3-connection $(\mathscr{A}, \mathscr{B}, \mathscr{C})$
	\begin{align}
		\mathscr{A} = (\overset{1}{A}, \overset{2}{A}),\ \ \ \ 
		\mathscr{B} = (\overset{2}{B}, \overset{3}{B}),\ \ \ \ 
		\mathscr{C} = (\overset{3}{C}, \overset{4}{C}),
	\end{align}
	where $(\overset{1}{A}, \overset{2}{B}, \overset{3}{C})$ is the ordinary 3-connection, and $\overset{2}{A} \in \Lambda^2(M, \mathcal{g})$, $\overset{3}{B}\in \Lambda^3 (M, \mathcal{h})$ and $\overset{4}{C} \in \Lambda^4(M, \mathcal{l})$.
	The generalized 3-curvature $(\mathscr{F}, \mathscr{G}, \mathscr{H} )$ is given by
	\begin{align*}
	&\mathscr{F} = \textbf{d} \mathscr{A} + \mathscr{A}\boldsymbol{\wedge} \mathscr{A} - \bm{\alpha}(\mathscr{B})= (\overset{2}{\mathbb{F}} + k \overset{2}{A}, D \overset{2}{A}- \alpha(\overset{3}{B})),\\
	&\mathscr{G} = \textbf{d} \mathscr{B} + \mathscr{A}\boldsymbol{\wedge}^{\blacktriangleright}\mathscr{B} - \bm{\beta}(\mathscr{C}) = (\overset{3}{\mathbb{G}} - k \overset{3}{B}, D \overset{3}{B} + \overset{2}{A} \wedge^{\vartriangleright} \overset{2}{B} - \beta(\overset{4}{C})),\\
	&\mathscr{H} = \textbf{d} \mathscr{C} + \mathscr{A}\boldsymbol{\wedge}^{\blacktriangleright} \mathscr{C} + \mathscr{B}\boldsymbol{\wedge}^{\bm{\{, \}}} \mathscr{B} =(\overset{4}{\mathbb{H}} + k \overset{4}{C}, D \overset{4}{C} - \overset{2}{A} \wedge^{\vartriangleright}\overset{3}{C} + \overset{2}{B}\wedge^{\{, \}}\overset{3}{B} +  \overset{3}{B}\wedge^{\{, \}}\overset{2}{B}).	\end{align*}

	\begin{proposition}
If $(\mathscr{A}, \mathscr{B}, \mathscr{C})$ is any generalized $3$-connection on a principle 3-bundle over $M$, then its generalized 3-curvature $(\mathscr{F}, \mathscr{G}, \mathscr{H})$ satisfies the generalized  $3$-Bianchi-Identities
		\begin{align}
		&d \mathscr{F} + \mathscr{A}\boldsymbol{\wedge}^{\bm{[,]}}\mathscr{F} = - \bm{\alpha}(\mathscr{G} + \bm{\beta}(\mathscr{C}));\\
		&d \mathscr{G} + \mathscr{A}\boldsymbol{\wedge}^{\blacktriangleright}\mathscr{G} = (\mathscr{F} + \bm{ \alpha}(\mathscr{B}))\boldsymbol{\wedge}^{\blacktriangleright} \mathscr{B} - \bm{\beta}(\mathscr{H}- \mathscr{B}\boldsymbol{\wedge}^{\bm{\{, \}}}\mathscr{B});\\
		& d \mathscr{H} + \mathscr{A}\boldsymbol{\wedge}^{\blacktriangleright}\mathscr{H} = (\mathscr{F} + \bm{\alpha}(\mathscr{B}))\boldsymbol{\wedge}^{\blacktriangleright}\mathscr{C} +(\mathscr{G} + \bm{\beta}(\mathscr{C}))\boldsymbol{\wedge}^{\bm{\{, \}}}\mathscr{B} + \mathscr{B}\boldsymbol{\wedge}^{\bm{\{, \}}} (\mathscr{G} + \bm{\beta}(\mathscr{C})).
	\end{align}
\end{proposition}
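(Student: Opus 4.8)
The plan is to verify each of the three identities componentwise, in exactly the manner used above for the generalized $2$-Bianchi-Identities. Writing $\mathscr{F}$, $\mathscr{G}$, $\mathscr{H}$ and each of $\mathscr{A}$, $\mathscr{B}$, $\mathscr{C}$ as the ordered pairs furnished by their definitions, I would expand every generalized operation occurring in the three equations — the generalized derivative $\textbf{d}$, the products $\boldsymbol{\wedge}^{\bm{[,]}}$, $\boldsymbol{\wedge}^{\blacktriangleright}$, $\boldsymbol{\wedge}^{\bm{\{, \}}}$, and the maps $\bm{\alpha}$, $\bm{\beta}$ — by means of their component formulas from Sections \ref{sec2} and \ref{sec3}. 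This converts each generalized identity into a pair of ordinary $\mathcal{g}$-, $\mathcal{h}$- or $\mathcal{l}$-valued equations, one per slot of the ordered pair, which can then be checked directly against the ordinary $3$-Bianchi-Identities \eqref{3BI1}, \eqref{3BI2} and \eqref{3BI3}. The advantage of reducing to components is that it avoids having to lift the full set of $2$-crossed-module axioms to the generalized setting: once everything is expressed in ordinary forms, the ordinary structural relations already underlying \eqref{3BI1}--\eqref{3BI3} suffice.

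For the first slot of each identity the computation is short and follows the pattern of the $2$-connection proof. After expanding $\textbf{d}$ one obtains terms $k\,d\,\overset{p+1}{(\cdot)}$ together with terms $-k\,D\,\overset{p+1}{(\cdot)}$ arising from the nilpotency factor $(-1)^{p+1}k$; writing $D=d+\overset{1}{A}\wedge^{\vartriangleright}(\cdot)$ (resp. $\wedge^{[,]}$) and recognising the connection contribution supplied by the relevant generalized product $\mathscr{A}\boldsymbol{\wedge}^{\bm{[,]}}$ or $\mathscr{A}\boldsymbol{\wedge}^{\blacktriangleright}$, these $k$-terms cancel and leave precisely the ordinary Bianchi expression plus a single clean multiple of $k$. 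That residual $k$-term is exactly what reassembles the first slot of the right-hand side. For instance, in the first identity the first slot collapses, after \eqref{3BI1}, to $-\alpha\big(\overset{3}{\mathbb{G}}+\beta(\overset{3}{C})-k\,\overset{3}{B}\big)$, which is minus $\alpha$ applied to the first slot of $\mathscr{G}+\bm{\beta}(\mathscr{C})$. Indeed this first identity is the generalized $2$-connection result already established, augmented only by the $\beta(\overset{3}{C})$ contribution, so it should require no genuinely new idea.

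The bulk of the work, and the principal obstacle, lies in the second slot of each ordered pair, and most acutely in the second and third identities. These slots carry the largest number of terms, intertwining $d$, the connection actions, the Peiffer bracket $\wedge^{\{,\}}$ and the maps $\alpha$, $\beta$, each weighted by a sign dictated by the generalized-product rules. I would first reduce every such term to ordinary forms, then reorganise them using the generalized analogues established in Section \ref{sec3} — the Leibniz rules for $\textbf{d}$ over $\boldsymbol{\wedge}^{\blacktriangleright}$ and $\boldsymbol{\wedge}^{\bm{\{, \}}}$, the compatibility of $\bm{\alpha}$ and $\bm{\beta}$ with the products, and the derivation and Jacobi-type identities — and finally invoke \eqref{3BI2} and \eqref{3BI3} to collapse the surviving ordinary-form combinations. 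The delicate points will be, first, confirming that the $k$-dependent pieces again cancel or regroup cleanly into the intended $\bm{\alpha}$, $\bm{\beta}$ and $\boldsymbol{\wedge}^{\bm{\{, \}}}$ terms, and second, matching every $(-1)$ factor, since the third identity applies $\boldsymbol{\wedge}^{\bm{\{, \}}}$ in both of its arguments and therefore calls on the full collection of $2$-crossed-module compatibility relations at once. Once the signs are pinned down, each second-slot equation reduces to a rearrangement of \eqref{3BI2} or \eqref{3BI3}, which completes the verification of all three generalized $3$-Bianchi-Identities.
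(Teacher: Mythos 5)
Your proposal follows essentially the same route as the paper's own proof: the paper likewise expands $\mathscr{F}$, $\mathscr{G}$, $\mathscr{H}$ and the generalized operations slot-by-slot into ordinary forms and collapses the result using the ordinary $3$-Bianchi-Identities \eqref{3BI1}--\eqref{3BI3}, with the $k$-terms regrouping exactly as you describe (your checkpoint $-\alpha\bigl(\overset{3}{\mathbb{G}}+\beta(\overset{3}{C})-k\,\overset{3}{B}\bigr)$ for the first slot of the first identity is precisely what appears in the paper). The remaining work you defer — the second slots of the second and third identities — is in the paper a purely mechanical, if lengthy, bookkeeping of the same kind, requiring no idea beyond what you have laid out.
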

\begin{proof}
	\begin{align*}
	\textbf{d} \mathscr{F} + \mathscr{A}\boldsymbol{\wedge}^{\bm{[,]}}\mathscr{F} &=  \textbf{d}(\overset{2}{\mathbb{F}} + k \overset{2}{A}, D \overset{2}{A} - \alpha(\overset{3}{B}) )+ (\overset{1}{A}, \overset{2}{A}) \boldsymbol{\wedge}^{[, ]}(\overset{2}{\mathbb{F}} + k \overset{2}{A}, D \overset{2}{A} - \alpha(\overset{3}{B}))\\
	&= (d (\overset{2}{\mathbb{F}} +  \overset{2}{A}) - k (D \overset{2}{A} - \alpha(\overset{3}{B})), d(D \overset{2}{A} - \alpha(\overset{3}{B})))+ (\overset{1}{A} \wedge^{[,]}(\overset{2}{\mathbb{F}} + k \overset{2}{A}), \overset{1}{A}\wedge^{[, ]}(D \overset{2}{A} \\
	&- \alpha(\overset{3}{B}))  + \overset{2}{A} \wedge^{[, ]}(\overset{2}{\mathbb{F}} + k \overset{2}{A}))\\
	&=(d \overset{2}{\mathbb{F}} + \overset{1 }{A} \wedge^{[, ]} \overset{2}{\mathbb{F}} + k \alpha (\overset{3}{B}), - \alpha(d \overset{3}{B} + \overset{1}{A}\wedge^{\vartriangleright} \overset{3}{B} + \overset{2}{A}\wedge^{\vartriangleright} \overset{2}{B}))\\
	&= - \bm{\alpha}(\mathscr{G} + \bm{\beta}(\mathscr{C})),\\
	\textbf{d}\mathscr{G} + \mathscr{A}\boldsymbol{\wedge}^{\blacktriangleright}\mathscr{G} &=  \textbf{d}(\overset{3}{\mathbb{G}} - k \overset{3}{B}, D \overset{3}{B} + \overset{2}{A}\wedge^{\vartriangleright}\overset{2}{B} - \beta(\overset{4}{C})) + (\overset{1}{A}, \overset{2}{A})\boldsymbol{\wedge}^{\blacktriangleright} (\bm{G} - k \overset{3}{B}, D \overset{3}{B} + \overset{2}{A}\wedge^{\vartriangleright}\overset{2}{B} - \beta(\overset{4}{C}))\\
	&= (d \overset{3}{\mathbb{G}} + \overset{1}{A}\wedge^{\vartriangleright}\overset{3}{\mathbb{G}} + k (\overset{2}{A} \wedge^{\vartriangleright}\overset{2}{B} - \beta(\overset{4}{C})), (d \overset{1}{A} + \overset{1}{A}\wedge \overset{1}{A})\wedge^{\vartriangleright} \overset{3}{B} + d \overset{2}{A}\wedge^{\vartriangleright}\overset{2}{B}\\
	&+ (\overset{1}{A}\wedge \overset{2}{A})\wedge^{\vartriangleright} \overset{2}{B} - d (\beta(\overset{4}{C})) - \overset{1}{A} \wedge^{\vartriangleright} (\beta(\overset{4}{C})) - \overset{2}{A}\wedge^{\vartriangleright}(\overset{1}{A} \wedge^{\vartriangleright}\overset{2}{B} - \beta(\overset{3}{C})- k \overset{3}{B}))\\
	&= ( (\overset{2}{\mathbb{F}} + \alpha(\overset{2}{B}))\wedge^{\vartriangleright}\overset{2}{B} - \beta(\overset{4}{\mathbb{H}}- \overset{2}{B}\wedge^{\{, \}}\overset{2}{B}) + k \overset{2}{A}\wedge^{\vartriangleright}\overset{2}{B} - \beta(k \overset{4}{C}), (\overset{2}{\mathbb{F}} + k \overset{2}{A} + \alpha(\overset{2}{B})) \wedge^{\vartriangleright} \overset{3}{B} \\
	&+ D \overset{2}{A}\wedge^{\vartriangleright}\overset{2}{B} - \beta(D \overset{4}{C} - \overset{2}{A}\wedge^{\vartriangleright} \overset{3}{C}))\\
	&= (\mathscr{F} + \bm{ \alpha}(\mathscr{B}))\boldsymbol{\wedge}^{\blacktriangleright} \mathscr{B} - \bm{\beta}(\mathscr{H}- \mathscr{B}\boldsymbol{\wedge}^{\bm{\{, \}}}\mathscr{B}),\\
	 \textbf{d} \mathscr{H} + \mathscr{A}\wedge^{\blacktriangleright}\mathscr{H} &= \textbf{d}(\overset{4}{\mathbb{H}} + k \overset{4}{C}, D \overset{4}{C} - \overset{2}{A}\wedge^{\vartriangleright}\overset{3}{C} + \overset{2}{B} \wedge^{\{, \}}\overset{3}{B}+ \overset{3}{B} \wedge^{\{, \}}\overset{2}{B})  + (\overset{1}{A}, \overset{2}{A})\boldsymbol{\wedge}^{\blacktriangleright} (\overset{4}{\mathbb{H}} + k \overset{4}{C}, D \overset{4}{C} \\
	 &- \overset{2}{A}\wedge^{\vartriangleright}\overset{3}{C} + \overset{2}{B} \wedge^{\{, \}}\overset{3}{B}+ \overset{3}{B} \wedge^{\{, \}}\overset{2}{B})\\
	 &= (d \overset{4}{\mathbb{H}} + \overset{1}{A}\wedge^{\vartriangleright}\overset{4}{\mathbb{H}} + k \overset{2}{A}\wedge^{\vartriangleright}\overset{3}{C} - k (\overset{2}{B} \wedge^{\{, \}}\overset{3}{B}+ \overset{3}{B} \wedge^{\{, \}}\overset{2}{B}), d (D \overset{4}{C} -  \overset{2}{A}\wedge^{\vartriangleright}\overset{3}{C} + \overset{2}{B} \wedge^{\{, \}}\overset{3}{B}\\
	 &+ \overset{3}{B} \wedge^{\{, \}}\overset{2}{B}) + \overset{1}{A}\wedge^{\vartriangleright}(D \overset{4}{C} - \overset{2}{A}\wedge^{\vartriangleright}\overset{3}{C} + \overset{2}{B} \wedge^{\{, \}}\overset{3}{B} +  \overset{3}{B} \wedge^{\{, \}}\overset{2}{B} ) + \overset{2}{A}\wedge^{\vartriangleright}(\overset{4}{\mathbb{H}} + k \overset{4}{C})) \\
	 &= ((\overset{2}{\mathbb{F}} + k \overset{2}{A} + \alpha(\overset{2}{B}))\wedge^{\vartriangleright} \overset{3}{C} + ( \overset{3}{\mathbb{G}} - k \overset{3}{B} + \beta(\overset{3}{C}))\wedge^{\{, \}}\overset{2}{B} + \overset{2}{B}\wedge^{\{, \}}( \overset{3}{\mathbb{G}} - k \overset{3}{B} + \beta(\overset{3}{C})), (\overset{2}{\mathbb{F}} \\
	 &+ k \overset{2}{A} + \alpha(\overset{2}{B}))\wedge^{\vartriangleright} \overset{4}{C} - D \overset{2}{A}\wedge^{\vartriangleright}\overset{3}{C} + (\overset{3}{\mathbb{G}} - k \overset{3}{B} + \beta(\overset{3}{C}))\wedge^{\{, \}}\overset{3}{B} + (D \overset{3}{B} + \overset{2}{A}\wedge^{\vartriangleright}\overset{2}{B}) \wedge^{\{, \}} \overset{2}{B} \\
	 &+ \overset{2}{B}\wedge^{\{, \}} (D \overset{3}{B} + \overset{2}{A}\wedge^{\vartriangleright}\overset{2}{B}) - \overset{3}{B}\wedge^{\{, \}}(\overset{3}{\mathbb{G}} - k \overset{3}{B} + \beta(\overset{3}{C})))\\
	 &=  (\mathscr{F} + \bm{\alpha}(\mathscr{B}))\boldsymbol{\wedge}^{\blacktriangleright}\mathscr{C} +(\mathscr{G} + \bm{\beta}(\mathscr{C})) \boldsymbol{\wedge}^{\bm{\{, \}}}\mathscr{B} + \mathscr{B}\boldsymbol{\wedge}^{\bm{\{, \}}} (\mathscr{G} + \bm{\beta}(\mathscr{C})),
	\end{align*}
by using \eqref{3BI1}, \eqref{3BI2} and \eqref{3BI3}.
\end{proof}

		We construct a generalized mixed relation 
	\begin{align}
		\bm{g}\blacktriangleright \mathscr{C}=(g, mg)\blacktriangleright (\overset{3}{C}, \overset{4}{C}):=(g \vartriangleright\overset{3}{C}, g \vartriangleright \overset{4}{C}+m\wedge^{\vartriangleright}(g \vartriangleright \overset{3}{C})),
	\end{align}
	where $\bm{g}$ and $m$ are as above. Then it follows that
	\begin{align}
		\bm{\beta}(\bm{g}\blacktriangleright \mathscr{C})=\bm{g}\blacktriangleright\beta(\mathscr{C}).
	\end{align}

		It is clear that these gauge transformations \eqref{Ggt}, \eqref{Hgt}and \eqref{Lgt} can be extended to the generalized case. Thus, we define generalized gauge transformations:
		\begin{itemize}
		\item [1)] generalized $G$-gauge transformation: 
		\begin{align}\label{gGgt}
		 \mathscr{A'}&= \bm{g^{-1}}\mathscr{A} \bm{g} + \bm{g^{-1}}\textbf{d}\bm{g}\notag \\ &=(g^{-1}(\overset{1}{A} - k m)g + g^{-1}dg, g^{-1}(Dm - k m\wedge m + \overset{2}{A})g),\\
		\mathscr{B'}&= \bm{g^{-1}}\blacktriangleright \mathscr{B}\notag\\
		&=(g^{-1}\vartriangleright \overset{2}{B}, g^{-1}\vartriangleright (\overset{3}{B}- m\wedge^{\vartriangleright}\overset{2}{B})),\\
		 \mathscr{C'}&=\bm{g^{-1}}\blacktriangleright \mathscr{C}\notag\\
		 &=(g^{-1}\vartriangleright \overset{3}{C}, g^{-1}\vartriangleright(\overset{4}{C} + m\wedge^{\vartriangleright}\overset{3}{C})),
		\end{align}
		then the generalized curvatures change as 
	\begin{align}
		\mathscr{F'}&=\bm{g^{-1}}\mathscr{F}\bm{g} \notag \\
		&= (g^{-1}(\overset{2}{\mathbb{F}} + k \overset{2}{A})g, g^{-1}(D \overset{2}{A}- \alpha(\overset{3}{B}) + (\overset{2}{\mathbb{F}} + k \overset{2}{A}) \wedge^{[, ]}m)g), \\
		\mathscr{G'}&=\bm{g^{-1}\blacktriangleright}\mathscr{G}\notag\\
		&=(g^{-1}\vartriangleright(\overset{3}{\mathbb{G}} - k \overset{3}{B}), g^{-1}\vartriangleright(D \overset{3}{B} + \overset{2}{A}\wedge^{\vartriangleright}\overset{2}{B}) + (g^{-1}m)\wedge^{\vartriangleright}(D\overset{2}{B}-k \overset{3}{B})),\\
		\mathscr{H'}&=\bm{g^{-1}\blacktriangleright}\mathscr{H}\notag\\
		&=(g^{-1}\vartriangleright(\overset{4}{\mathbb{H}} + k \overset{4}{C}), g^{-1}\vartriangleright (D \overset{4}{C}-\overset{2}{A}\wedge^{\vartriangleright}\overset{3}{C}+ \overset{2}{B}\wedge^{\{, \}}\overset{3}{B} + \overset{3}{B}\wedge^{\{, \}}\overset{2}{B} \notag\\
		&- m \wedge^{\vartriangleright}(D\overset{3}{C}+\overset{2}{B}\wedge^{\{, \}}\overset{2}{B} + k \overset{4}{C}))).
	\end{align}
		\item [2)] generalized $H$-gauge transformation: 
		\begin{align}\label{gHgt}
		 \mathscr{A'}&= \mathscr{A}-\bm{\alpha}(\bm{\phi})\notag\\
		 &=(\overset{1}{A}-\alpha(\overset{1}{\phi}), \overset{2}{A}-\alpha(\overset{2}{\phi})),\\
		  \mathscr{B'}&= \mathscr{B}-\textbf{d}\bm{\phi}-\mathscr{A'}\boldsymbol{\wedge}^{\blacktriangleright}\bm{\phi}-\bm{\phi}\boldsymbol{\wedge}\bm{\phi}\notag\\
		  &=(\overset{2}{B}-D\overset{1}{\phi}-\overset{1}{\phi}\wedge \overset{1}{\phi}-k\overset{2}{\phi}+\alpha(\overset{1}{\phi})\wedge^{\vartriangleright}\overset{1}{\phi}, \overset{3}{B}-D\overset{2}{\phi}+\alpha(\overset{1}{\phi})\wedge^{\vartriangleright}\overset{2}{\phi}\notag\\
		  &+(\overset{2}{A}-\alpha(\overset{2}{\phi}))\wedge^{\vartriangleright}\overset{1}{\phi}-\overset{1}{\phi}\wedge^{[, ]}\overset{2}{\phi}),\\
		  \mathscr{C'}&=\mathscr{C}+\mathscr{B'}\boldsymbol{\wedge}^{\bm{\{, \}}}\bm{\phi} + \bm{\phi}\boldsymbol{\wedge}^{\bm{\{, \}}}\mathscr{B}\notag\\
		  &=(\overset{3}{C}+(\overset{2}{B}-D\overset{1}{\phi}-k\overset{2}{\phi}+\alpha(\overset{1}{\phi})\wedge^{\vartriangleright}\overset{1}{\phi}-\overset{1}{\phi}\wedge\overset{1}{\phi})\wedge^{\{, \}}\overset{1}{\phi}+\overset{1}{\phi}\wedge^{\{, \}}\overset{2}{B}, \overset{4}{C}+(\overset{2}{B}\notag\\
		  &-D\overset{1}{\phi}-k\overset{2}{\phi}+\alpha(\overset{1}{\phi})\wedge^{\vartriangleright}\overset{1}{\phi}-\overset{1}{\phi}\wedge\overset{1}{\phi})\wedge^{\{, \}}\overset{2}{\phi}-(\overset{3}{B}- D\overset{2}{\phi}+\alpha(\overset{1}{\phi})\wedge^{\vartriangleright}\overset{2}{\phi}+(\overset{2}{A}\notag\\
		  &-\alpha(\overset{2}{\phi}))\wedge^{\vartriangleright}\overset{1}{\phi}- \overset{1}{\phi}\wedge^{[, ]}\overset{2}{\phi})\wedge^{\{, \}}\overset{1}{\phi}+\overset{1}{\phi}\wedge^{\{, \}}\overset{3}{B}+\overset{2}{\phi}\wedge^{\{, \}}\overset{2}{B}),
		\end{align}
	where $\bm{\phi}=(\overset{1}{\phi}, \overset{2}{\phi})\in \Lambda^1(M, \mathcal{g},1)$, then the generalized curvatures change as 
		\begin{align}
		\mathscr{F'}&=\mathscr{F} \notag \\
		&= (\overset{2}{\mathbb{F}} + k \overset{2}{A}, D \overset{2}{A}- \alpha(\overset{3}{B})), \\
		\mathscr{G'}&=\mathscr{G}-\mathscr{F}\boldsymbol{\wedge}^{\blacktriangleright}\bm{\phi}\notag\\
		&=(\overset{3}{\mathbb{G}}-k \overset{3}{B}-(\overset{2}{\mathbb{F}}+k\overset{2}{A})\wedge^{\vartriangleright}\overset{1}{\phi}, D \overset{3}{B}-\beta(\overset{4}{C})+\overset{2}{A}\wedge^{\vartriangleright}\overset{2}{B}-(\overset{2}{\mathbb{F}}+k \overset{2}{A})\wedge^{\vartriangleright}\overset{2}{\phi}\notag\\
		&+(D\overset{2}{A}-\alpha(\overset{3}{B}))\wedge^{\vartriangleright}\overset{1}{\phi}),\\
		\mathscr{H'}&=\mathscr{H}+\mathscr{G'}\boldsymbol{\wedge}^{\bm{\{, \}}}\bm{\phi} - \bm{\phi}\boldsymbol{\wedge}^{\bm{\{, \}}}\mathscr{G}\notag\\
		&=(\overset{4}{\mathbb{H}}+k\overset{4}{C} + (\overset{3}{\mathbb{G}}-k\overset{3}{B}-(\overset{2}{\mathbb{F}}+k\overset{2}{A})\wedge^{\vartriangleright}\overset{1}{\phi})\wedge^{\{, \}}\overset{1}{\phi}-\overset{1}{\phi}\wedge^{\{, \}}(\overset{3}{\mathbb{G}}-k\overset{3}{B}), D\overset{4}{C}-\overset{2}{A}\wedge^{\vartriangleright}\overset{3}{C}\notag\\
		&+\overset{2}{B}\wedge^{\{, \}}\overset{3}{B}+\overset{3}{B}\wedge^{\{, \}}\overset{2}{B}+ (\overset{3}{\mathbb{G}}-k\overset{3}{B}-(\overset{2}{\mathbb{F}}+k\overset{2}{A})\wedge^{\vartriangleright}\overset{1}{\phi})\wedge^{\{, \}}\overset{2}{\phi}- (D \overset{3}{B}-\beta(\overset{4}{C})+\overset{2}{A}\wedge^{\vartriangleright}\overset{2}{B}\notag\\
		&-(\overset{2}{\mathbb{F}}+k \overset{2}{A})\wedge^{\vartriangleright}\overset{2}{\phi}+(D\overset{2}{A}-\alpha(\overset{3}{B}))\wedge^{\vartriangleright}\overset{1}{\phi})\wedge^{\{, \}}\overset{1}{\phi}- \overset{1}{\phi}\wedge^{\{, \}}(D \overset{3}{B} + \overset{2}{A} \wedge^{\vartriangleright} \overset{2}{B} - \beta(\overset{4}{C})\notag\\
		&+\overset{2}{\phi}\wedge^{\{, \}}(\overset{3}{\mathbb{G}}-k\overset{3}{B})).
	\end{align}
		\item [3)] $L$-gauge transformation: 
		\begin{align}\label{gLgt}
			\mathscr{A'}&=\mathscr{A}=(\overset{1}{A}, \overset{2}{A}),\\
			\mathscr{B'}&=\mathscr{B}+\bm{\beta}(\bm{\psi})=(\overset{2}{B}+\beta(\overset{2}{\psi}), \overset{3}{B}+\beta(\overset{3}{\psi})),\\ \mathscr{C'}&=\mathscr{C}+\textbf{d}\bm{\psi}+\mathscr{A}\boldsymbol{\wedge}^{\blacktriangleright}\bm{\psi}=(\overset{3}{C}+D\overset{2}{\psi}-k\overset{3}{\psi}, \overset{4}{C}+D\overset{3}{\psi}+\overset{2}{A}\wedge^{\vartriangleright}\overset{2}{\psi}),
		\end{align}
		then the generalized curvatures change as 
		\begin{align}
		\mathscr{F'}&=\mathscr{F}= (\overset{2}{\mathbb{F}} + k \overset{2}{A}, D \overset{2}{A}- \alpha(\overset{3}{B})), \\
		\mathscr{G'}&=\mathscr{G}=(\overset{3}{\mathbb{G}} - k \overset{3}{B}, D \overset{3}{B} + \overset{2}{A} \wedge^{\vartriangleright} \overset{2}{B} - \beta(\overset{4}{C})),\\ \mathscr{H'}&=\mathscr{H}+\mathscr{F}\boldsymbol{\wedge}^{\blacktriangleright}\bm{\psi}\notag\\
		&=(\overset{4}{\mathbb{H}} + k \overset{4}{C} + (\overset{2}{\mathbb{F}}+k\overset{2}{A})\wedge^{\vartriangleright}\overset{2}{\psi}, D \overset{4}{C} - \overset{2}{A} \wedge^{\vartriangleright}\overset{3}{C} + \overset{2}{B}\wedge^{\{, \}}\overset{3}{B} +  \overset{3}{B}\wedge^{\{, \}}\overset{2}{B}\notag\\
		&+(\overset{2}{\mathbb{F}}+k\overset{2}{A})\wedge^{\vartriangleright}\overset{3}{\psi}+(D\overset{2}{A}-\alpha(\overset{3}{B}))\wedge^{\vartriangleright}\overset{2}{\psi}),
		\end{align}
		where $\bm{\psi}=(\overset{2}{\psi}, \overset{3}{\psi}) \in  \Lambda^2(M, \mathcal{l},1)$.
	\end{itemize}

\section{Generalized (higher) Yang-Mills}\label{sec6}
	
	\subsection{Generalized Yang-Mills}
The following calculations follow the integration rules in \eqref{in3}. As we shall see, there is a generalized desired formalism of Yang-Mills equation, staying same with the ordinary Yang-Mills equation in form. As a remark, we note that all calculations are carried out using the GDC.  
	
Choosing a Lie group $G$ with Lie algebra $\mathcal{g}$ and $\bm{\tilde{M}}= (\partial M, M)$,  where $M$ is an $n$ dimensional manifold with metric and $\partial M$ is its boundary, let $\mathscr{A} = (\overset{1}{A}, \overset{2}{A}) \in \Lambda^1(M, \mathcal{g}, 1)$ be a generalized connection $1$-form, namely generalized Yang-Mills potential. Then the corresponding generalized curvature $2$-form $\mathscr{F}$, namely generalized Yang-Mills field, can be given by 
 \begin{align}
	\mathscr{F}= (F + k \overset{2}{A} , D \overset{2}{A}).
\end{align}

We can construct the generalized Yang-Mills action  
\begin{align}\label{GYM1}
	\bm{S_{GYM}}= \int_{\bm{\tilde{M}}} \ll \mathscr{F}, \star \mathscr{F} \gg.
\end{align}
Take the variational derivative of \eqref{GYM1}, and the detailed calculation process is shown in the Appendix \ref{YME}. 
  One can see that varying the action with respect to the variables $\overset{1}{A}$ and  $\overset{2}{A}$, one
obtains the generalized Yang-Mills equations 
\begin{align}\label{GYM11}
 D \ast F + k D \ast \overset{2}{A} + \overset{2}{A}\wedge^{[,]}\ast D \overset{2}{A}= 0, \notag \\
k \ast F + k^2 \ast \overset{2}{A} - D \ast D \overset{2}{A} = 0.
\end{align}
We find that the above equations are equal to 
\begin{align}\label{YM}
\mathscr{D} \star \mathscr{F} = \textbf{d} \star \mathscr{F} + \mathscr{A} \boldsymbol{\wedge}\star \mathscr{F} + (-1)^n \star \mathscr{F} \boldsymbol{\wedge} \mathscr{A} = 0,
\end{align}
which is the generalized Yang-Mills equations in \cite{pnr}.  In this notation, it is well-known that the generalized Yang-Mills equation \eqref{YM} has a very similar appearance to the generalized Bianchi-Identity \eqref{BI1}.
The reference \cite{pnr} restricts only to the reconstruction of the generalized form of Yang-Mills equation, but we reconstruct the generalized Yang-Mills action obtaining equation  \eqref{YM}.
Especially, when the connection 2-form $\overset{2}{A} = 0$, the action \eqref{GYM1} will be from the generalized case to the ordinary case, and there is an ordinary Yang-Mills equation $D \ast F =0 $.
On the other hand, one can note that the connection 1-form $\overset{1}{A}$ is flat in this case, i.e. $F =0$ from \eqref{GYM11}.

	\subsection{Generalized 2-form Yang-Mills}
	The related study of ordinary higher Yang-Mills theory is originally proposed in \cite{2002hep.th....6130B} in which Baez constructs the 2-form Yang-Mills  based on a 2-group. Encouraged by his idea, we develop the 3-form Yang-Mills based on a 3-group in our previous article \cite{sdh}. In this section, we shall do the same for the concept of generalized higher connections, establishing a generalized higher Yang-Mills theory. At first, let us develop generalized 2-form Yang-Mills in this subsection.
	
	Given a Lie crossed module $(G, H; \alpha, \vartriangleright)$ with the corresponding differential crossed module $(\mathcal{g}, \mathcal{h}; \alpha, \vartriangleright)$, let $(\mathscr{A}, \mathscr{B})$ be a generalized 2-connection, namely generalized 2-form Yang-Mills potential, where $\mathscr{A} = (\overset{1}{A}, \overset{2}{A}) \in \Lambda^1(M, \mathcal{g}, 1)$ and $\mathscr{B} = (\overset{2}{B}, \overset{3}{B}) \in \Lambda^2(M, \mathcal{h}, 1)$. Then the generalized 2-curvature $(\mathscr{F}, \mathscr{G})$, namely generalized 2-form Yang-Mills field, is given by
	\begin{align*}
		\mathscr{F} = (\overset{2}{\mathbb{F}} + k \overset{2}{A}, D \overset{2}{A}- \alpha(\overset{3}{B})),\\
		\mathscr{G} = (\overset{3}{\mathbb{G}} - k \overset{3}{B}, D \overset{3}{B} + \overset{2}{A} \wedge^{\vartriangleright} \overset{2}{B}),
	\end{align*}
where $\overset{2}{\mathbb{F}}= d \overset{1}{A} + \overset{1}{A} \wedge \overset{1}{A} - \alpha(\overset{2}{B} )$, and  $\overset{3}{\mathbb{G}}= d \overset{2}{B} + \overset{1}{A} \wedge^{\vartriangleright} \overset{2}{B}$.
	
	According to the construction of ordinary 2-form Yang-Mills action \cite{SDH}, we establish generalized 2-form Yang-Mills action 
	\begin{align}\label{GYM2}
		\bm{S_{G2YM}}= \int_{\bm{\tilde{M}}} \ll \mathscr{F}, \star \mathscr{F} \gg +  \ll \mathscr{G}, \star \mathscr{G} \gg.
	\end{align}
Take the variational derivative of \eqref{GYM2}, and see the Appendix \ref{YME} for details.
One can see that varying the action with respect to the variables $\overset{1}{A}$, $\overset{2}{A}$, $\overset{2}{B}$, and $\overset{3}{B}$, one
obtains generalized 2-form Yang-Mills equations 
\begin{align}\label{GYM22}
&D \ast (\overset{2}{\mathbb{F}} + k \overset{2}{A})+\overset{2}{A}\wedge^{[,]}\ast (D \overset{2}{A}- \alpha (\overset{3}{B}))- \overline{\sigma}(\ast(\overset{3}{\mathbb{G}}- k \overset{3}{B}), \overset{2}{B}) + (-1)^{n+1}\overline{\sigma}(\ast(D \overset{3}{B} + \overset{2}{A}\wedge^{\vartriangleright}\overset{2}{B}), \overset{3}{B})=0; \notag\\
&D \ast (D \overset{2}{A} - \alpha (\overset{3}{B}) )- k \ast (\overset{2}{\mathbb{F}} + k \overset{2}{A}) + \overline{\sigma} (\ast (D \overset{3}{B}+ \overset{2}{A}\wedge^{\vartriangleright} \overset{2}{B}), \overset{2}{B})=0;\notag\\
&D \ast (\overset{3}{\mathbb{G}}- k \overset{3}{B} )+ \alpha^{\ast}\ast(\overset{2}{\mathbb{F}} + k \overset{2}{A}) + \overset{2}{A}\wedge^{\vartriangleright}\ast (D \overset{3}{B}+ \overset{2}{A}\wedge^{\vartriangleright}\overset{2}{B})=0;\notag\\
 &D \ast (D \overset{3}{B} + \overset{2}{A}\wedge^{\vartriangleright} \overset{2}{B})- \alpha^{\ast} \ast (D \overset{2}{A} - \alpha (\overset{3}{B})) -k \ast (\overset{3}{\mathbb{G}} - k \overset{3}{B}) =0.
\end{align}
We find that the above equations are equal to 
\begin{align}\label{YM2}
	d \star \mathscr{F} + \mathscr{A} \wedge^{[,]}\star \mathscr{F} = \overline{\sigma}(\star \mathscr{G}, \mathscr{B}); \notag\\
	d \star \mathscr{G} + \mathscr{A}\wedge^{\vartriangleright}\star \mathscr{G}= - \alpha^{\ast}(\star \mathscr{F}).
\end{align}
	   In addition, we can also apply the results of section \ref{sec3} and section \ref{sec4} to compute directly the variational derivative of the generalized action to obtain the equations \eqref{YM2}. 
	  In other wards, all derivation process in \cite{SDH} can be extended straightforwardly to the generalized case where the GDC works merely.
	  Especially, when $\overset{2}{A} = 0$ and $\overset{3}{B}=0$, the action \eqref{GYM2} will become the ordinary case, and there are ordinary 2-form Yang-Mills equations
	  \begin{align}
	  	D \ast \overset{2}{\mathbb{F}}=\overline{\sigma}(\ast \overset{3}{\mathbb{G}}, \overset{2}{B}), \ \ \ \  
	  	D \ast \overset{3}{\mathbb{G}} =-\alpha^*(\ast \overset{2}{\mathbb{F}}).
	  \end{align}
	   Besides, it can easily be seen directly that the 2-connection 1-form $\overset{1}{A}$ and 2-connection 2-form $\overset{2}{B}$ are flat from \eqref{GYM22}, i.e. $\overset{2}{\mathbb{F}}=0$ and $\overset{3}{\mathbb{G}}=0$ in this case.
	  
	  	\subsection{Generalized 3-form Yang-Mills}
	  	Given a Lie 2-crossed module $(L, H, G;\beta, \alpha, \vartriangleright, \left\{,\right\})$ with the differential structure $(\mathcal{l}, \mathcal{h}, \mathcal{g};\beta, \alpha, \vartriangleright, \left\{,\right\})$, we consider a  generalized 3-connection $(\mathscr{A}, \mathscr{B, \mathscr{C}})$, namely generalized 3-form Yang-Mills potential, where  	$\mathscr{A} = (\overset{1}{A}, \overset{2}{A})\in \Lambda^1(M, \mathcal{g}, 1)$, $\mathscr{B} = (\overset{2}{B}, \overset{3}{B}) \in \Lambda^2(M, \mathcal{h}, 1)$, and $\mathscr{C} = (\overset{3}{C}, \overset{4}{C})\in \Lambda^3(M, \mathcal{l}, 1)$. The generalized 3-curvature $(\mathscr{F}, \mathscr{G}, \mathscr{H})$, namely generalized 3-form Yang-Mills field, is given by
	  		\begin{align*}
	  		&	\mathscr{F} = (\overset{2}{\mathbb{F}} + k \overset{2}{A}, D \overset{2}{A}- \alpha(\overset{3}{B}));\\
	  		&	\mathscr{G} = (\overset{3}{\mathbb{G}} - k \overset{3}{B}, D \overset{3}{B} + \overset{2}{A} \wedge^{\vartriangleright} \overset{2}{B} - \beta(\overset{4}{C}));\\
	  		&	\mathscr{H} = (\overset{4}{\mathbb{H}} + k \overset{4}{C}, D \overset{4}{C} - \overset{2}{A} \wedge^{\vartriangleright}\overset{3}{C} + \overset{2}{B}\wedge^{\{, \}}\overset{3}{B} +  \overset{3}{B}\wedge^{\{, \}}\overset{2}{B}).
	  	\end{align*}
	  	
	  	According to the construction of ordinary 3-form Yang-Mills action \cite{sdh}, we establish generalized 3-form Yang-Mills action 
	 \begin{align}\label{GYM3}
	 	\bm{S_{G3YM}}= \int_{\bm{\tilde{M}}} \ll \mathscr{F}, \star \mathscr{F} \gg +  \ll \mathscr{G}, \star \mathscr{G} \gg + \ll \mathscr{H}, \star \mathscr{H} \gg.
	 \end{align}
	 Take the variational derivative $\delta \bm{S_{G3YM}}$ and see the appendix \ref{YME} for details. One can see that varying the action with respect to the variables $\overset{1}{A}$, $\overset{2}{A}$, $\overset{2}{B}$, $\overset{3}{B}$, $\overset{3}{C}$ and $\overset{4}{C}$, one
	 obtains generalized 3-form Yang-Mills equations 
   \begin{align}\label{GYM33}
   	&D \ast (\overset{2}{\mathbb{F}} + k \overset{2}{A}) + \overset{2}{A}\wedge^{[, ]}\ast (D \overset{2}{A} - \alpha(\overset{2}{B}) ) - \overline{\sigma}(\ast (\overset{3}{\mathbb{G}} - k \overset{3}{B}), \overset{2}{B}) + (-1)^{n+1} \overline{\sigma}(\ast (D \overset{3}{B} + \overset{2}{A}\wedge^{\vartriangleright}\overset{2}{B} - \beta(\overset{4}{C})), \overset{3}{B}) \notag\\
   	& + (-1)^{n+1}\overline{\kappa}(\ast (\overset{4}{\mathbb{H}}+ k \overset{4}{C}), \overset{3}{C}) - \overline{\kappa}(\ast (D \overset{4}{C} - \overset{2}{A}\wedge^{\vartriangleright}\overset{3}{C} + \overset{2}{B}\wedge^{\{, \}}\overset{3}{B} + \overset{3}{B}\wedge^{\{, \}}\overset{2}{B}) , \overset{4}{C})  = 0;\notag \\
	&k \ast (\overset{2}{\mathbb{F}} + k \overset{2}{A}) - D \ast (D \overset{2}{A} - \alpha(\overset{3}{B})) - \overline{\sigma}(\ast (D \overset{3}{B} + \overset{2}{A}\wedge^{\vartriangleright}\overset{2}{B} - \beta(\overset{4}{C})), \overset{2}{B}) + (-1)^{n+1} \overline{\kappa}(\ast (D \overset{4}{C}- \overset{2}{A}\wedge^{\vartriangleright}\overset{3}{C} \notag \\
	&+ \overset{2}{B}\wedge^{\{, \}}\overset{3}{B} + \overset{3}{B}\wedge^{\{, \}}\overset{2}{B}) , \overset{3}{C}) = 0;\notag \\
	&\alpha^{\ast}(\ast \overset{2}{\mathbb{F}} + k \ast \overset{2}{A}) + D \ast (\overset{3}{\mathbb{G}} - k \overset{3}{B}) + \overset{2}{A}\wedge^{\vartriangleright}\ast (D \overset{3}{B} + \overset{2}{A}\wedge^{\vartriangleright}\overset{2}{B} - \beta(\overset{4}{C}))+ \overline{\eta_2}(\ast (\overset{4}{\mathbb{H}} + k \overset{4}{C}), \overset{2}{B}) +  \overline{\eta_1}(\ast (\overset{4}{\mathbb{H}} \notag \\
	&+ k \overset{4}{C}), \overset{2}{B}) + (-1)^{n+1}\overline{\eta_1}(\ast (D \overset{4}{C} - \overset{2}{A}\wedge^{\vartriangleright}\overset{3}{C} + \overset{2}{B}\wedge^{\{, \}}\overset{3}{B} + \overset{3}{B}\wedge^{\{, \}}\overset{2}{B}) , \overset{3}{B}) + (-1)^{n+1}\overline{\eta_2}(\ast (D \overset{4}{C} - \overset{2}{A}\wedge^{\vartriangleright}\overset{3}{C} \notag \\
	&+ \overset{2}{B}\wedge^{\{, \}}\overset{3}{B} + \overset{3}{B}\wedge^{\{, \}}\overset{2}{B}) , \overset{3}{B}) =0; \notag \\
   &\alpha^{\ast}(\ast D \overset{2}{A} - \ast \alpha(\overset{3}{B})) + k \ast(\overset{3}{\mathbb{G}}- k \overset{3}{B}) - D\ast(D \overset{3}{B} + \overset{2}{A}\wedge^{\vartriangleright}\overset{2}{B} - \beta(\overset{4}{C})) + \overline{\eta_1}(\ast (D \overset{4}{C} -\overset{2}{A}\wedge^{\vartriangleright}\overset{3}{C} + \overset{2}{B}\wedge^{\{, \}}\overset{3}{B}\notag \\
   &+ \overset{3}{B}\wedge^{\{, \}}\overset{2}{B}) , \overset{2}{B})+ \overline{\eta_2}(\ast (D \overset{4}{C}  - \overset{2}{A}\wedge^{\vartriangleright}\overset{3}{C} + \overset{2}{B}\wedge^{\{, \}}\overset{3}{B} + \overset{3}{B}\wedge^{\{, \}}\overset{2}{B}) , \overset{2}{B})  = 0;  \notag \\
	&D \ast(\overset{4}{\mathbb{H}} + k \overset{4}{C}) + \overset{2}{A}\wedge^{\vartriangleright}\ast (D \overset{4}{C} - \overset{2}{A}\wedge^{\vartriangleright}\overset{3}{C} + \overset{2}{B}\wedge^{\{, \}}\overset{3}{B} + \overset{3}{B}\wedge^{\{, \}}\overset{2}{B}) - \beta^{\ast}(\ast \overset{3}{\mathbb{G}} - k \ast \overset{3}{B})=0;\notag \\
	&k \ast (\overset{4}{\mathbb{H}} + k \overset{4}{C})- D \ast (D \overset{4}{C} - \overset{2}{A}\wedge^{\vartriangleright}\overset{3}{C} + \overset{2}{B}\wedge^{\{, \}}\overset{3}{B} + \overset{3}{B}\wedge^{\{, \}}\overset{2}{B})  -  \beta^{\ast}\ast (D \overset{3}{B} + \overset{2}{A}\wedge^{\vartriangleright} \overset{2}{B} - \beta(\overset{4}{C}))=0.\notag \\
   \end{align}
   We find that the above equations are equal to 
   \begin{align}\label{G3YM}
&d \star \mathscr{F} + \mathscr{A} \wedge^{[, ]}\star \mathscr{F} = \overline{\sigma}(\star \mathscr{G}, \mathscr{B}) + (-1)^n \overline{\kappa}(\star \mathscr{H}, \mathscr{C}); \notag \\
&d \star \mathscr{G} + \mathscr{A}\wedge^{\blacktriangleright}\star \mathscr{G} = - \overline{\eta_1}(\star \mathscr{H}, \mathscr{B})- \overline{\eta_2}(\star \mathscr{H}, \mathscr{B})- \alpha^{\ast}(\star \mathscr{F}); \notag \\
&d \star \mathscr{H} + \mathscr{A}\wedge^{\blacktriangleright}\star \mathscr{H} = \beta^{\ast}(\star\mathscr{G}).
   \end{align}
   In this notation, equations \eqref{G3YM} are also called generalized 3-form Yang-Mills equations, which can be obtained from these results of section \ref{sec3} and section \ref{sec4} by extending in the formalism to the generalized case.
   Specially,  when $\overset{2}{A} = 0$, $\overset{3}{B}=0$ and $\overset{4}{C}=0$, the action \eqref{GYM3} will become the ordinary case, and there are ordinary 3-form Yang-Mills equations
   \begin{align}
   	&d\ast \overset{2}{\mathbb{F}}+ \overset{1}{A}\wedge^{[,]}\ast \overset{2}{\mathbb{F}} =\overline{\sigma}(\ast \overset{3}{\mathbb{G}}, \overset{2}{B})+ (-1)^{n} \overline{\kappa}(\ast\overset{4}{\mathbb{H}}, \overset{3}{C}),\\[3mm]
   	&d \ast \overset{3}{\mathbb{G}} +\overset{1}{A}\wedge^\vartriangleright \ast \overset{3}{\mathbb{G}}=-\overline{\eta_2}(\ast \overset{4}{\mathbb{H}}, \overset{2}{B}) - \overline{\eta_1}(\ast \overset{4}{\mathbb{H}}, \overset{2}{B})-\alpha^*(\ast \overset{2}{\mathbb{F}}),\\[3mm]
   	&d\ast \overset{4}{\mathbb{H}}+\overset{1}{A}\wedge^\vartriangleright \ast\mathcal{H} = \beta^*(\ast \overset{3}{\mathbb{G}}).
   	\end{align}
     Besides, it can be seen directly that the 3-connection 1-form $\overset{1}{A}$, the 3-connection 2-form $\overset{2}{B}$ and  the 3-connection 3-form $\overset{3}{C}$ are flat from \eqref{GYM33}, i.e. $\overset{2}{\mathbb{F}}=0$, $\overset{3}{\mathbb{G}}=0$ and $\overset{4}{\mathbb{H}}=0$ in this case.

	%附录   
	   %\appendix
	    \begin{appendices}
	    	
	  \section{ Crossed modules and 2-crossed modules}\label{CM}
	  We review related concepts of crossed module and 2-crossed module given in \cite{Beaz, Crans, Brown, Radenkovic:2019qme, Martins:2010ry, TRMV1, YHMNRY, Mutlu1998FREENESSCF}.
	  \begin{definition}{(\textbf{pre-crossed module and crossed module})}
	  	A Lie pre-crossed module $\left(H,G;\alpha,\vartriangleright \right)$ is given by a Lie group map $\alpha: H  \longrightarrow G$ together with a smooth left action $\vartriangleright$ of G on H by automorphisms such that:
	  	\begin{equation}
	  		\alpha \left(g \vartriangleright h\right) = g \alpha \left(h\right) g^{-1},
	  	\end{equation}
	  	for each $g \in G $ and $h \in H$. The Peiffer commutators in a pre-crossed module are defined as $\lbrack\lbrack \cdot , \cdot \rbrack\rbrack : H \times H \longrightarrow H $ by
	  	\begin{equation}
	  		\lbrack\lbrack h , h' \rbrack\rbrack = h h' h^{-1}\left(\alpha \left(h\right) \vartriangleright h'^{-1}\right),
	  	\end{equation}
	  	for any $h, h' \in H$.
	  A Lie pre-crossed module is said to be  {\bfseries a Lie crossed module (or a strict Lie 2-group)}, if all of its Peiffer commutators are trivial, which is to say that
	  \begin{equation}
	  	\alpha \left(h\right) \vartriangleright h' = h h' h^{-1} ,
	  \end{equation}
	  for each $h, h' \in H$. 
	    \end{definition}
  
	  \begin{definition}{(\textbf{Differential pre-crossed modules})}
	  	A differential pre-crossed module $(\mathcal{h}, \mathcal{g}; \alpha, \vartriangleright)$ is given by a Lie algebra map $\alpha : \mathcal{h} \longrightarrow \mathcal{g}$ together with a left action $\vartriangleright$ of $\mathcal{g}$ on $\mathcal{h}$ by derivations such that:
	  	\begin{equation}\label{XY}
	  		\alpha(X\vartriangleright Y ) =\left[ X, \alpha(Y) \right],
	  	\end{equation}
	  	for each $X \in \mathcal{g}$ and $ Y \in \mathcal{h}$.
	  	The Peiffer commutators in a differential pre-crossed module are defined as $\lbrack\lbrack , \rbrack\rbrack : \mathcal{h} \times \mathcal{h} \longrightarrow \mathcal{h} $ by
	  	\begin{equation}\label{jkh}
	  		\lbrack\lbrack Y , Y'\rbrack\rbrack = \left[Y,Y'\right] - \alpha(Y)\vartriangleright Y',
	  	\end{equation}
	  	for each $Y, Y' \in \mathcal{h}$.
	  A differential pre-crossed module is said to be  {\bfseries a differential crossed module (or a strict Lie 2-algebra)}, if all of its Peiffer commutators vanish, which is to say that:
	  \begin{equation}\label{ YY'}
	  	\alpha(Y)\vartriangleright Y'=\left[Y,Y'\right],
	  \end{equation}
	  for each $Y,Y'\in \mathcal{h}$.
	    \end{definition}
  
	   \begin{definition}{(\textbf{Lie 2-crossed modules})}
	   	A Lie 2-crossed module $(L, H, G;\beta, \alpha, \vartriangleright, \left\{,\right\})$ is given by a complex of Lie groups:
	   	$$L \stackrel{\beta}{\longrightarrow}H \stackrel{\alpha}{\longrightarrow}G$$
	   	together with smooth left action $\vartriangleright$ by automorphisms of $G$ on $L$ and $H$ (and on $G$ by conjugation), i.e.
	   	\begin{equation}
	   		g \vartriangleright (e_1 e_2)=(g \vartriangleright e_1)(g \vartriangleright e_2), \ \ \ (g_1 g_2)\vartriangleright e = g_1 \vartriangleright (g_2 \vartriangleright e),
	   	\end{equation}
	   	for any $g, g_1, g_2\in G, e, e_1, e_2\in H $ or $L$, and a $G$-equivariant smooth function $ \left\{, \right\} :H \times H \longrightarrow L $, the Peiffer lifting, such that
	   	\begin{equation}
	   		g \vartriangleright \left\{ h_1, h_2 \right\} = \left\{g \vartriangleright h_1, g \vartriangleright h_2\right\},
	   	\end{equation}
	   	for any $g \in G$ and $h_1,h_2\in H$. They satisfy:
	   	\begin{enumerate}[1)]
	   		\item $L \stackrel{\beta}{\longrightarrow}H \stackrel{\alpha}{\longrightarrow}G$ is a complex of $G$-modules (in other words $\beta$ and $\alpha$ are $G$-equivariant and $\beta \circ \alpha$ maps $L$ to $1_G$, the identity of G);
	   		\setlength{\itemsep}{4pt}
	   		\item $\beta \left\{h_1,h_2\right\} =\lbrack\lbrack h_1 , h_2 \rbrack\rbrack $, for each $h_1, h_2\in H$;
	   		
	   		\setlength{\itemsep}{4pt}
	   		\item $ \left[l_1, l_2\right]= \left\{ \beta (l_1), \beta (l_2)\right\}$, for each $l_1,l_2 \in L$, and here $\left[l_1,l_2\right]=l_1 l_2 l_1 ^{-1} l_2 ^{-1}$;
	   		
	   		\setlength{\itemsep}{4pt}
	   		\item $\left\{ h_1 h_2, h_3 \right\}= \left\{h_1, h_2 h_3 h_2 ^{-1}\right\} \alpha (h_1) \vartriangleright \left\{h_2,h_3\right\}$, for each $h_1,h_2, h_3 \in H$;
	   		
	   		\setlength{\itemsep}{4pt}
	   		\item $ \left\{h_1,h_2 h_3\right\}  =  \left\{ h_1 , h_2  \right\} \left\{ h_1 , h_3  \right\} \left\{ \lbrack\lbrack h_1, h_3 \rbrack\rbrack  ^{-1}, \alpha (h_1) \vartriangleright h_2 \right\}$, for each $h_1,h_2, h_3 \in H$;
	   		
	   		\setlength{\itemsep}{4pt}
	   		\item $\left\{ \beta (l) , h\right\} \left\{h, \beta (l)\right\} = l (\alpha (h)\vartriangleright l^{-1})$, for each $h \in H $ and $ l \in L$.
	   	\end{enumerate}
	   	
	   \end{definition}
	   There is a left action of $H$ on $L$ by automorphisms $\vartriangleright' $ which is defined by
	   \begin{equation}
	   	h \vartriangleright' l = l \left\{ \beta (l)^{-1}, h \right\},
	   \end{equation}
	   for each $l \in L$ and $h \in H$. This together with the homomorphism $ \beta :\L \longrightarrow H $ defines a crossed module. In particular, for any $h\in H$,
	   $$h \vartriangleright' 1_L = \left\{ 1_H, h\right\} =\left\{h, 1_H\right\}=1_L,$$
	   where $1_H$ and $1_L$ are the identity of $H$ and $L$, respectively.
	   
	   \begin{definition}{(\textbf{Differential 2-crossed modules})}
	   	A differential 2-crossed module$(\mathcal{l},\mathcal{h}, \mathcal{g}; \beta, \alpha,\vartriangleright, \left\{ , \right\})$ is given by a complex of Lie algebras:
	   	$$\mathcal{l}\stackrel{\beta}{\longrightarrow} \mathcal{h} \stackrel{\alpha}{\longrightarrow} \mathcal{g},$$
	   	together with left action $\vartriangleright $ by derivations of $\mathcal{g}$ on $\mathcal{l}, \mathcal{h}, \mathcal{g}$ (on the latter by the adjoint representation), and a $\mathcal{g}$-equivariant bilinear map $\left\{ , \right\}:\mathcal{h} \times \mathcal{h} \longrightarrow \mathcal{l}$, the Peiffer lifting, such that
	   	\begin{equation}\label{12}
	   		X\vartriangleright \left\{ Y_1,Y_2\right\} = \left\{X\vartriangleright Y_1,Y_2\right\} + \left\{Y_1, X\vartriangleright Y_2\right\},
	   	\end{equation}
	   	for each $X \in \mathcal{g}$ and $Y_1,Y_2 \in \mathcal{h}$. They satisfy:
	   	\begin{enumerate}[1)]
	   		\setlength{\itemsep}{4pt}
	   		\item 	$\mathcal{l}\stackrel{\beta}{\longrightarrow} \mathcal{h} \stackrel{\alpha}{\longrightarrow} \mathcal{g}$ is a complex of $\mathcal{g}$-modules;
	   		
	   		\setlength{\itemsep}{4pt}
	   		\item $\beta \left\{Y_1,Y_2\right\} =\lbrack\lbrack Y_1 , Y_2\rbrack\rbrack $, for each $Y_1, Y_2\in \mathcal{h}$, where $\lbrack\lbrack Y_1 , Y_2\rbrack\rbrack=\left[ Y_1, Y_2\right] - \alpha (Y_1) \vartriangleright Y_2$;
	   		
	   		\setlength{\itemsep}{4pt}
	   		\item $ \left[Z_1, Z_2\right]= \left\{ \beta (Z_1), \beta (Z_2)\right\}$, for   each $Z_1,Z_2 \in L$;
	   		
	   		\setlength{\itemsep}{4pt}
	   		\item $\left\{ \left[Y_1 ,Y_2\right], Y_3 \right\}= \alpha (Y_1) \vartriangleright \left\{Y_2,Y_3\right\}+\left\{Y_1,\left[Y_2,Y_3\right]\right\}-\alpha (Y_2)\vartriangleright \left\{Y_1,Y_3\right\}-\left\{Y_2,\left[Y_1,Y_3\right]\right\}$, for each $Y_1,Y_2, Y_3 \in \mathcal{h}$. This is the same as :
	   		$$\left\{\left[Y_1,Y_2\right],Y_3\right\}= \left\{\alpha (Y_1)\vartriangleright Y_2, Y_3\right\} - \left\{ \alpha (Y_2)\vartriangleright Y_1, Y_3\right\} - \left\{Y_1, \beta \left\{ Y_2, Y_3\right\}\right\} + \left\{Y_2, \beta\left\{Y_1,Y_2 \right\}\right\};$$
	   		
	   		\setlength{\itemsep}{4pt}
	   		\item $ \left\{Y_1,\left[Y_2,Y_3\right]\right\}= \left\{ \beta\left\{Y_1,Y_2 \right\},Y_3  \right\}-\left\{ \beta\left\{Y_1,Y_3 \right\},Y_2 \right\}$ for each $Y_1,Y_2, Y_3 \in \mathcal{h}$;
	   		
	   		\setlength{\itemsep}{4pt}
	   		\item $\left\{ \beta (Z) , Y \right\} +\left\{Y, \beta (Z)\right\} =- (\alpha (Y)\vartriangleright Z)$, for each $Y\in \mathcal{h} $ and $ Z \in \mathcal
	   		{l}$.	
	   	\end{enumerate}
	   	
	   \end{definition}
	   
	   Analogously to the Lie 2-crossed module case, there is a left action of $\mathcal{h}$ on $\mathcal{l}$ which is defined by
	   \begin{equation}\label{YZ}
	   	Y\vartriangleright'Z= -\left\{ \beta(Z),Y\right\},
	   \end{equation}
	   for each $Y\in \mathcal{h}$ and $Z \in \mathcal{l}$. This together with the homomorphism $\beta :\mathcal{l} \longrightarrow \mathcal{h}$ defines a differential crossed module $(L, H; \beta, \vartriangleright')$.

	   \section{Lie algebra valued differential forms}\label{LAVDF}
	   We review the properties of Lie algebra valued differential forms based on higher groups, seeing \cite{SDH} for details.
	   \begin{proposition}
	   	For $A\in \Lambda^k (U, \mathcal{g})$, $A_1 \in \Lambda^{k_1} (U, \mathcal{g})$, $A_2 \in \Lambda^{k_2} (U, \mathcal{g})$, $B \in \Lambda^t (U, \mathcal{h})$, $B_1 \in \Lambda^{t_1} (U, \mathcal{h})$ and $ {B_2} \in \Lambda^{t_2} (U, \mathcal{h})$, have
	   		\begin{align}\label{C1}
	   		\alpha (A\wedge^\vartriangleright B)=A\wedge^{\left[,\right]}\alpha(B);
	   		\end{align}
	
	   		\begin{align}\label{C2}
	   		\alpha (B_1)\wedge^\vartriangleright B_2 = B_1\wedge^{\left[,\right]}B_2;
	   			\end{align}
   			
	   		\begin{align}\label{C3}
	   		A\wedge^\vartriangleright (B_1\wedge^{\left[,\right]} B_2)=(A\wedge^\vartriangleright B_1)\wedge^{\left[,\right]}B_2+ (-1)^{kt_1}B_1\wedge^{\left[,\right]}(A\wedge^\vartriangleright B_2);
	   		\end{align}

	   			\begin{align}\label{C4}
	   		(A_1\wedge^{\left[,\right]}A_2)\wedge^\vartriangleright B=A_1 \wedge^\vartriangleright(A_2\wedge^\vartriangleright B)+(-1)^{k_1 k_2 +1} A_2\wedge^\vartriangleright(A_1\wedge^\vartriangleright B).
	   			\end{align}
   		
	   \end{proposition}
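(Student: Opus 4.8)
The plan is to prove each of the four identities by reducing it, via bilinearity, to a purely pointwise (algebraic) statement in the differential (pre-)crossed module, the Koszul signs being produced solely by transposing the scalar form factors. Every $\mathcal{g}$-valued $k$-form may be written as a finite sum of decomposables $a\,X$ with $a$ an ordinary $k$-form and $X\in\mathcal{g}$, and likewise for $\mathcal{h}$-valued forms; since $\wedge^{\vartriangleright}$, $\wedge^{[,]}$ and $\alpha$ are all bilinear (resp.\ linear) and act on the scalar part by the ordinary exterior product and on the algebra part by $\vartriangleright$, $[\,,\,]$ or $\alpha$, it suffices to verify each identity on such decomposables. I would therefore fix $A=a\,X$, $A_i=a_i\,X_i$, $B=b\,Y$, $B_i=b_i\,Y_i$ throughout, with $|a|=k$, $|a_i|=k_i$, $|b|=t$, $|b_i|=t_i$, and use repeatedly the sign rule $a\wedge b=(-1)^{|a||b|}\,b\wedge a$.

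For \eqref{C1}, applying $\alpha$ to $A\wedge^{\vartriangleright}B=(a\wedge b)\,(X\vartriangleright Y)$ and using linearity of $\alpha$ in the algebra slot gives $(a\wedge b)\,\alpha(X\vartriangleright Y)$; the defining relation \eqref{XY} of a differential pre-crossed module, $\alpha(X\vartriangleright Y)=[X,\alpha(Y)]$, turns this into $(a\wedge b)\,[X,\alpha(Y)]=A\wedge^{[,]}\alpha(B)$. No sign appears because no form factors are transposed. Identity \eqref{C2} is equally immediate: $\alpha(B_1)\wedge^{\vartriangleright}B_2=(b_1\wedge b_2)\,(\alpha(Y_1)\vartriangleright Y_2)$, and the vanishing of the Peiffer commutators, i.e.\ the crossed-module condition \eqref{ YY'}, $\alpha(Y_1)\vartriangleright Y_2=[Y_1,Y_2]$, yields $(b_1\wedge b_2)\,[Y_1,Y_2]=B_1\wedge^{[,]}B_2$.

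The two Leibniz-type identities \eqref{C3} and \eqref{C4} are where the sign bookkeeping is the only real content. For \eqref{C3} I would start from $A\wedge^{\vartriangleright}(B_1\wedge^{[,]}B_2)=(a\wedge b_1\wedge b_2)\,\bigl(X\vartriangleright[Y_1,Y_2]\bigr)$ and invoke that $\vartriangleright$ acts by derivations of the bracket of $\mathcal{h}$, so $X\vartriangleright[Y_1,Y_2]=[X\vartriangleright Y_1,Y_2]+[Y_1,X\vartriangleright Y_2]$. The first resulting term is already $(A\wedge^{\vartriangleright}B_1)\wedge^{[,]}B_2$; in the second term the scalar factor $a\wedge b_1\wedge b_2$ must be reshuffled to $b_1\wedge a\wedge b_2$, which costs $(-1)^{kt_1}$, producing $(-1)^{kt_1}B_1\wedge^{[,]}(A\wedge^{\vartriangleright}B_2)$. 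For \eqref{C4} I would instead use that $\vartriangleright$ is a representation of $\mathcal{g}$, $[X_1,X_2]\vartriangleright Y=X_1\vartriangleright(X_2\vartriangleright Y)-X_2\vartriangleright(X_1\vartriangleright Y)$, applied to $(A_1\wedge^{[,]}A_2)\wedge^{\vartriangleright}B=(a_1\wedge a_2\wedge b)\,([X_1,X_2]\vartriangleright Y)$; the first term gives $A_1\wedge^{\vartriangleright}(A_2\wedge^{\vartriangleright}B)$ directly, while transposing $a_1$ past $a_2$ in the second term contributes $(-1)^{k_1k_2}$ which, together with the explicit minus sign from the commutator, combines to $(-1)^{k_1k_2+1}$, giving the stated second summand.

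The main obstacle is purely clerical: keeping the Koszul signs consistent as the scalar form factors are permuted, and making sure the algebra-level relations invoked are exactly the structural axioms of a differential (pre-)crossed module rather than anything stronger. Since in each case the required algebraic identity is one of these axioms---the compatibility \eqref{XY} of $\alpha$ with the action, the vanishing Peiffer condition \eqref{ YY'}, and the derivation and representation properties of $\vartriangleright$ built into the definition of a differential pre-crossed module---no genuinely nontrivial step remains once the decomposable reduction and the transposition sign $(-1)^{|a||b|}$ are in place.
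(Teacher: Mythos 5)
Your proof is correct. Note that the paper itself gives no proof of this proposition: it is stated in Appendix~\ref{LAVDF} as review material, with the details deferred to the cited reference \cite{SDH}, so there is no internal argument to compare against. Your reduction to decomposables $A=a\,X$, $B=b\,Y$, followed by the pre-crossed-module relation $\alpha(X\vartriangleright Y)=[X,\alpha(Y)]$ for \eqref{C1}, the vanishing-Peiffer condition $\alpha(Y)\vartriangleright Y'=[Y,Y']$ for \eqref{C2}, the derivation property of $\vartriangleright$ for \eqref{C3}, and the Lie-algebra action property $[X_1,X_2]\vartriangleright Y=X_1\vartriangleright(X_2\vartriangleright Y)-X_2\vartriangleright(X_1\vartriangleright Y)$ for \eqref{C4}, with the Koszul signs $(-1)^{kt_1}$ and $(-1)^{k_1k_2+1}$ arising exactly as you say from transposing the scalar factors, is the standard argument and correctly fills the omission.
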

	   \begin{proposition}
	   	\begin{enumerate}[1)]
	   		\setlength{\itemsep}{6pt}
	   		\item For $A\in \Lambda^k (U,\mathcal{g})$,$A'\in  \Lambda^{k'}(U,\mathcal{g})$ and $C\in \Lambda^* (U,\mathcal{l})$, have
	   		\begin{align}\label{C5}
	   			\beta(A\wedge^\vartriangleright C)=A\wedge^\vartriangleright \beta(C);
	   		\end{align}
	   		\begin{align}\label{C6}
	   			A\wedge^\vartriangleright A' = A\wedge A'+(-1)^{kk'+1} A'\wedge A.
	   		\end{align}
	   		
	   		\setlength{\itemsep}{6pt}
	   		\item For $A\in \Lambda^k (U,\mathcal{g})$,$B_1 \in \Lambda^{t_1}(U,\mathcal{h})$,$B_2 \in \Lambda^{t_2} (U,\mathcal{h})$ and $W \in \Lambda^* (U,\mathcal{w})$ $(\mathcal{w}=\mathcal{g},\mathcal{h},\mathcal{l})$, have
	   		\begin{align}\label{C7}
	   			d(A\wedge^\vartriangleright W)=dA\wedge^\vartriangleright W + (-1)^k A \wedge^\vartriangleright dW;
	   		\end{align}
	   		\begin{align}\label{C8}
	   			d(B_1\wedge^{\left\{,\right\}}B_2)=dB_1\wedge^{\left\{,\right\}}B_2+(-1)^{t_1} B_1 \wedge^{\left\{,\right\}}dB_2;
	   		\end{align}
	   		\begin{align}\label{C9}
	   			A\wedge^\vartriangleright (B_1\wedge^{\left\{,\right\}}B_2)=(A\wedge^\vartriangleright B_1)\wedge^{\left\{,\right\}}B_2 + (-1)^{kt_1}B_1\wedge^{\left\{,\right\}}(A\wedge^\vartriangleright B_2),
	   		\end{align}
	   	\end{enumerate}
	   \end{proposition}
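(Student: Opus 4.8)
The plan is to expand each Lie-algebra-valued form in a fixed basis of the relevant Lie algebra, reduce every identity to the corresponding pointwise relation among the structure maps of the differential 2-crossed module, and then account for the Koszul signs produced by re-ordering the scalar form-parts. Concretely, I would write $A = A^a X_a$ and $A' = A'^b X_b$ with $\{X_a\}$ a basis of $\mathcal{g}$ and $A^a, A'^b$ ordinary forms, and likewise $B_1 = B_1^i Y_i$, $B_2 = B_2^j Y_j$, $C = C^p Z_p$, $W = W^p E_p$ for bases $\{Y_i\}$, $\{Z_p\}$, $\{E_p\}$ of $\mathcal{h}$, $\mathcal{l}$ and $\mathcal{w}$. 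By bilinearity it suffices to verify each identity on single basis monomials, for which the decorated products read $A \wedge^\vartriangleright W = (A^a \wedge W^p)(X_a \vartriangleright E_p)$, $A \wedge^{[,]} A' = (A^a \wedge A'^b)[X_a, X_b]$ and $B_1 \wedge^{\{,\}} B_2 = (B_1^i \wedge B_2^j)\{Y_i, Y_j\}$, so that the algebraic operation acts only on the constant coefficients while the wedge acts only on the form-parts.

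For \eqref{C5} I would use that $\beta$ is linear and $\mathcal{g}$-equivariant (item 1 of the differential 2-crossed module, which says the relevant complex is one of $\mathcal{g}$-modules), so $\beta(X_a \vartriangleright Z_p) = X_a \vartriangleright \beta(Z_p)$; pulling the scalar coefficient $A^a \wedge C^p$ through $\beta$ then gives the claim with no sign. For \eqref{C6} the key input is that $\mathcal{g}$ acts on itself by the adjoint representation, hence $X_a \vartriangleright X_b = [X_a, X_b]$; interpreting the bare $A \wedge A'$ as the associative (enveloping-algebra) product $(A^a \wedge A'^b) X_a X_b$ and using the graded commutativity $A^a \wedge A'^b = (-1)^{kk'} A'^b \wedge A^a$ turns the $-X_b X_a$ summand into $(-1)^{kk'} A' \wedge A$, i.e. exactly the stated $(-1)^{kk'+1} A' \wedge A$ term. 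Identities \eqref{C7} and \eqref{C8} are graded Leibniz rules: since the structure elements $X_a \vartriangleright E_p$ and $\{Y_i, Y_j\}$ are constant, $d$ differentiates only the scalar form-parts, and the ordinary rule $d(A^a \wedge W^p) = dA^a \wedge W^p + (-1)^k A^a \wedge dW^p$ reassembles into the stated formulas once one recalls $dA = (dA^a) X_a$.

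The only identity demanding the full Peiffer structure is \eqref{C9}. Here I would invoke the derivation rule \eqref{12}, namely $X_a \vartriangleright \{Y_i, Y_j\} = \{X_a \vartriangleright Y_i, Y_j\} + \{Y_i, X_a \vartriangleright Y_j\}$, to split $A \wedge^\vartriangleright (B_1 \wedge^{\{,\}} B_2)$ into two summands. In the first, $A^a$ is already adjacent to $B_1^i$ and the term reassembles as $(A \wedge^\vartriangleright B_1) \wedge^{\{,\}} B_2$ with no sign; in the second, the factor $A^a$ (degree $k$) must be commuted past $B_1^i$ (degree $t_1$), producing the sign $(-1)^{k t_1}$ and yielding $(-1)^{k t_1} B_1 \wedge^{\{,\}} (A \wedge^\vartriangleright B_2)$. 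I expect this sign bookkeeping to be the main, though routine, obstacle, since \eqref{C9} is the only case in which the order of the form-parts genuinely changes; in every identity the algebraic content is a single defining relation of the differential 2-crossed module, and the remainder is the standard behavior of the wedge product and of $d$ on scalar-valued forms.
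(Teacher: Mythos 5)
Your proof is correct. Note that the paper itself contains no proof of this proposition: Appendix \ref{LAVDF} presents it as a review and defers all details to the earlier work \cite{SDH}, so there is no in-paper argument to diverge from. Your component-wise reduction is the standard (and surely the intended) one: expanding in Lie-algebra bases so that each identity collapses to a single defining axiom of the differential 2-crossed module --- $\mathcal{g}$-equivariance of $\beta$ for \eqref{C5}, the adjoint action together with the enveloping-algebra (matrix) reading of the bare product $A\wedge A'$ for \eqref{C6}, constancy of the structure elements for the Leibniz rules \eqref{C7}--\eqref{C8}, and the derivation property \eqref{12} of the Peiffer lifting for \eqref{C9} --- with the Koszul signs $(-1)^{kk'}$ and $(-1)^{kt_1}$ arising exactly where you place them; all of these inputs are available in the paper's Appendix \ref{CM}, and your sign bookkeeping checks out.
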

  \begin{proposition}
  	For $A_1 \in \Lambda^{k_1}(M,\mathcal{g})$, $A_2 \in \Lambda^{k_2}(M,\mathcal{g})$, $A_3 \in \Lambda^{k_3}(M,\mathcal{g})$, $B_1\in \Lambda^{t_1}(M,\mathcal{h})$, $B_2\in \Lambda^{t_2}(M,\mathcal{h})$, $B_3\in \Lambda^{t_3}(M,\mathcal{h})$, $C_1\in \Lambda^{q_1}(M,\mathcal{l})$, $C_2\in \Lambda^{q_2}(M,\mathcal{l})$, and $C_3\in \Lambda^{q_3}(M,\mathcal{l})$, we have 
  	\begin{align} \label{C10}
  		\langle A_1 \wedge^{[,]}A_2, A_3\rangle = (-1)^{k_1k_2+1}\langle A_2, A_1 \wedge^{[,]}A_3 \rangle;
  	\end{align}
  	\begin{align}
  		\langle B_1 \wedge^{[,]}B_2, B_3\rangle = (-1)^{t_1t_2+1}\langle B_2, B_1 \wedge^{[,]}B_3 \rangle;
  	\end{align}
  	\begin{align}
  		\langle C_1 \wedge^{[,]}C_2, C_3\rangle = (-1)^{q_1q_2+1}\langle C_2, C_1 \wedge^{[,]}C_3 \rangle.
  	\end{align}
  \end{proposition}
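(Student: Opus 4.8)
The plan is to reduce all three identities to their pointwise Lie-algebra counterparts \eqref{XXX}, \eqref{YYY}, \eqref{ZZZ}, and then to bookkeep the signs produced by graded-commuting the underlying ordinary forms. First I would fix a basis $\{X_a\}$ of $\mathcal{g}$ and expand $A_i = \omega_i^a \otimes X_a$, where each $\omega_i^a$ is an ordinary $k_i$-form and summation over $a$ is understood. By bilinearity of both $\wedge^{[,]}$ and $\langle -,-\rangle$ it suffices to verify the identity on such decomposable pieces, recalling that these operations are defined by combining the exterior product of the form factors with, respectively, the bracket and the invariant form of the algebra factors, so that $A_1 \wedge^{[,]} A_2 = (\omega_1^a \wedge \omega_2^b)\otimes [X_a, X_b]$ and $\langle A_1 \wedge^{[,]} A_2, A_3\rangle = (\omega_1^a \wedge \omega_2^b \wedge \omega_3^c)\,\langle [X_a, X_b], X_c\rangle_\mathcal{g}$.

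Next I would apply the algebraic invariance identity \eqref{XXX} factorwise, namely $\langle [X_a, X_b], X_c\rangle_\mathcal{g} = -\langle X_b, [X_a, X_c]\rangle_\mathcal{g}$, which turns the left-hand side into $-(\omega_1^a \wedge \omega_2^b \wedge \omega_3^c)\,\langle X_b, [X_a, X_c]\rangle_\mathcal{g}$. The right-hand side, expanded in the same way, reads $\langle A_2, A_1 \wedge^{[,]} A_3\rangle = (\omega_2^b \wedge \omega_1^a \wedge \omega_3^c)\,\langle X_b, [X_a, X_c]\rangle_\mathcal{g}$. The only discrepancy is the order of the first two form factors; graded commutativity gives $\omega_2^b \wedge \omega_1^a = (-1)^{k_1 k_2}\,\omega_1^a \wedge \omega_2^b$, and combining this with the $-1$ from antisymmetry produces exactly the claimed prefactor $(-1)^{k_1 k_2 + 1}$. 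This establishes \eqref{C10}.

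Finally, the remaining two identities are proved verbatim: the same computation carried out with a basis $\{Y_a\}\subset\mathcal{h}$ together with \eqref{YYY} yields the $\mathcal{h}$-statement with sign $(-1)^{t_1 t_2 + 1}$, and with a basis $\{Z_a\}\subset\mathcal{l}$ together with \eqref{ZZZ} yields the $\mathcal{l}$-statement with sign $(-1)^{q_1 q_2 + 1}$. I do not expect any genuine obstacle here, since the content is entirely the pointwise invariance of the bilinear forms; the only point requiring care is the sign bookkeeping, that is, checking that the single sign from reordering the two ordinary forms of degrees $k_1, k_2$ (respectively $t_1, t_2$ and $q_1, q_2$) combines with the sign from algebraic antisymmetry to give precisely $(-1)^{k_1 k_2 + 1}$ and not some other power.
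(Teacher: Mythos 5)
Your proof is correct: expanding in a Lie-algebra basis, applying the pointwise invariance identity \eqref{XXX} (respectively \eqref{YYY}, \eqref{ZZZ}), and picking up the single factor $(-1)^{k_1k_2}$ from graded-commuting the two ordinary form factors gives exactly the stated sign $(-1)^{k_1k_2+1}$, and the $\mathcal{h}$ and $\mathcal{l}$ cases do go through verbatim. Note that the paper itself offers no proof of this proposition --- it appears in Appendix B purely as a review, with the reader referred to the earlier work \cite{SDH} --- so your basis-expansion argument is simply the standard verification that fills in what the paper leaves implicit, and it matches the conventions (form-valued pairing and $\wedge^{[,]}$ defined factorwise) used throughout the paper.
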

\begin{proposition}
	For $A \in \Lambda^{k}(M,\mathcal{g})$, $B_1 \in \Lambda^{t_1}(M, \mathcal{h})$, $B_2 \in \Lambda^{t_2}(M, \mathcal{h})$, $C_1 \in \Lambda^{q_1}(M, \mathcal{l})$ and $C_2 \in \Lambda^{q_2}(M, \mathcal{l})$,  we have
	\begin{align} \label{C11}
		\langle B_1, A\wedge^\vartriangleright B_2\rangle =(-1)^{t_2(k+t_1)+kt_1+1}\langle B_2, A\wedge^\vartriangleright B_1 \rangle =(-1)^{kt_1+1}\langle A\wedge^\vartriangleright B_1, B_2\rangle;
	\end{align}
	\begin{align}\label{C_1AC_2}
		\langle C_1, A\wedge^\vartriangleright C_2\rangle =(-1)^{q_2(k+q_1)+kq_1+1}\langle C_2, A\wedge^\vartriangleright C_1 \rangle =(-1)^{kq_1+1}\langle A\wedge^\vartriangleright C_1,  C_2\rangle.
	\end{align}
\end{proposition}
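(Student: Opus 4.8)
The plan is to reduce both strings of equalities in \eqref{C11} to the pointwise invariance relations \eqref{YXY'}, tracking only the Koszul signs that arise when the scalar-valued form factors are permuted. First I would fix a local frame and write the homogeneous forms as $A = X_a\,\phi^a$, $B_1 = Y_i\,\omega^i$, $B_2 = Y'_j\,\eta^j$ (sums over the frame understood), where $X_a \in \mathcal{g}$ and $Y_i, Y'_j \in \mathcal{h}$ are treated as constant coefficients and $\phi^a, \omega^i, \eta^j$ are ordinary scalar forms of degrees $k$, $t_1$, $t_2$ respectively. With these conventions $A\wedge^\vartriangleright B_2 = (X_a \vartriangleright Y'_j)\,\phi^a\wedge\eta^j$, so that $\langle B_1, A\wedge^\vartriangleright B_2\rangle = \langle Y_i, X_a\vartriangleright Y'_j\rangle_\mathcal{h}\,\omega^i\wedge\phi^a\wedge\eta^j$; every sign in the argument then comes from the algebra identity \eqref{YXY'} together with a reordering of these three scalar factors. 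Since both sides of each claimed equality are bilinear, it suffices to argue for homogeneous forms.

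For the first equality I would invoke the antisymmetry $\langle Y, X\vartriangleright Y'\rangle_\mathcal{h} = -\langle Y', X\vartriangleright Y\rangle_\mathcal{h}$ from \eqref{YXY'} at the level of coefficients, which turns the pairing into $-\langle Y'_j, X_a\vartriangleright Y_i\rangle_\mathcal{h}\,\omega^i\wedge\phi^a\wedge\eta^j$. Reversing the order of the three scalar factors $\omega^i\wedge\phi^a\wedge\eta^j \to \eta^j\wedge\phi^a\wedge\omega^i$ produces the total sign $(-1)^{kt_1 + t_1 t_2 + k t_2}$, namely the sum of the three pairwise products of degrees, and recognizing the resulting expression as $\langle B_2, A\wedge^\vartriangleright B_1\rangle$ yields exactly the factor $(-1)^{t_2(k+t_1)+kt_1+1}$ appearing in \eqref{C11}.

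For the second equality I would instead use $\langle Y, X\vartriangleright Y'\rangle_\mathcal{h} = -\langle X\vartriangleright Y, Y'\rangle_\mathcal{h}$, the other half of \eqref{YXY'}, so that $\langle A\wedge^\vartriangleright B_1, B_2\rangle = -\langle Y_i, X_a\vartriangleright Y'_j\rangle_\mathcal{h}\,\phi^a\wedge\omega^i\wedge\eta^j$. Here only the single transposition $\phi^a\wedge\omega^i = (-1)^{kt_1}\omega^i\wedge\phi^a$ is needed, and after extracting this sign one reads off $\langle A\wedge^\vartriangleright B_1, B_2\rangle = (-1)^{kt_1+1}\langle B_1, A\wedge^\vartriangleright B_2\rangle$, equivalently the stated relation (the sign $(-1)^{kt_1+1}$ being its own inverse). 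The $\mathcal{l}$-valued identities \eqref{C_1AC_2} follow verbatim, replacing $\mathcal{h}$ by $\mathcal{l}$ and \eqref{YXY'} by \eqref{ZXZ'}.

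The only genuine obstacle is the sign bookkeeping: one must check that the $-1$ coming from the algebraic antisymmetry combines with the permutation signs so that the exponents collapse to those written in \eqref{C11}. The cleanest safeguard is to observe that $(-1)^{kt_1+t_1t_2+kt_2+1} = (-1)^{t_2(k+t_1)+kt_1+1}$, so the two ways of writing the exponent agree; once the reordering rule for scalar forms is applied consistently, there is no further content.
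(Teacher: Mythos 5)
Your proof is correct. The paper itself never proves this proposition — it is stated in Appendix \ref{LAVDF} as a review, with the details deferred to reference \cite{SDH} — and your argument is precisely the standard verification that citation relies on: decompose $A$, $B_1$, $B_2$ (resp. $C_1$, $C_2$) over bases of the Lie algebras with scalar-form coefficients, apply the pointwise invariance identities \eqref{YXY'} and \eqref{ZXZ'}, and track the Koszul reordering signs; your bookkeeping checks out, since $(-1)^{kt_1+t_1t_2+kt_2+1}=(-1)^{t_2(k+t_1)+kt_1+1}$ and $(-1)^{kt_1+1}$ is its own inverse.
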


There is a bilinear map $\overline{\sigma}: \Lambda^{t_1}(M, \mathcal{h}) \times \Lambda^{t_2}(M, \mathcal{h}) \longrightarrow \Lambda^{t_1+t_2}(M, \mathcal{g})$ defined by the rule
\begin{align}\label{C12}
	\langle \overline{\sigma}(B_1,B_2), A\rangle = (-1)^{kt_2+1} \langle B_1, A\wedge^\vartriangleright B_2\rangle,
\end{align}
for $A\in \Lambda^k(M, \mathcal{g}), B_1 \in \Lambda^{t_1}(M, \mathcal{h})$ and $B_2 \in \Lambda^{t_2}(M, \mathcal{h})$,
and the following identity holds
\begin{align}\label{C13}
	\langle  A,\overline{\sigma}(B_1,B_2)\rangle = (-1)^{t_1t_2+1} \langle  A\wedge^\vartriangleright B_2,B_1\rangle.
\end{align}
Similarly, there is a bilinear map $\overline{\kappa}: \Lambda^{q_1}(M, \mathcal{l}) \times \Lambda^{q_2}(M, \mathcal{l}) \longrightarrow \Lambda^{q_1+q_2}(M, \mathcal{g})$ defined by the rule
\begin{align}\label{C14}
	\langle \overline{\kappa}(C_1,C_2), A\rangle = (-1)^{kq_2+1} \langle C_1, A\wedge^\vartriangleright C_2\rangle,
\end{align} 
for $A\in \Lambda^k(M, \mathcal{g}), C_1 \in \Lambda^{q_1}(M, \mathcal{l})$ and $C_2 \in \Lambda^{q_2}(M, \mathcal{l})$,
and there is an identity
\begin{align}\label{C15}
	\langle  A,\overline{\kappa}(C_1,C_2)\rangle = (-1)^{q_1q_2+1} \langle  A\wedge^\vartriangleright C_2,C_1\rangle.
\end{align}
There are bilinear maps $\overline{\eta_i}: \Lambda^q (M,\mathcal{l}) \times \Lambda^t(M, \mathcal{h}) \longrightarrow \Lambda^{q+t}(M,\mathcal{h})$ defined by 
$$\overline{\eta_i}(C, B):= \sum\limits_{a,b} C^b \wedge B^a \eta_i(Z_b, Y_a),  \ \ \ \ i = 1,2,$$
for $B= \sum\limits_{a} B^a Y_a \in \Lambda^t(M, \mathcal{h})$, $C= \sum\limits_{b} C^b Z_b \in \Lambda^q(M, \mathcal{l})$,
and there are two identities
\begin{align}\label{C16}
	\langle  B_1 \wedge^{\left\{,  \right\}} B_2, C \rangle = (-1)^{t_1(t_2 + q) +1}\langle B_2, \overline{\eta_1}(C, B_1) \rangle
	= (-1)^{t_2 q +1}\langle B_1, \overline{\eta_2}(C, B_2) \rangle,
\end{align}
by using \eqref{YY'Z}.
Besides, for $A\in \Lambda^k (M, \mathcal{g})$, $B \in \Lambda^t (M, \mathcal{h})$ and $C \in \Lambda^q(M, \mathcal{l})$, there are
\begin{align}\label{C17}
	\langle B, \alpha^*(A)\rangle= \langle\alpha(B), A\rangle,
\end{align}
\begin{align}\label{C18}
	\langle C, \beta^*(B)\rangle= \langle\beta(C), B\rangle,
\end{align}
being induced by $\langle Y, \alpha^*(X)\rangle_\mathcal{h}=\langle \alpha(Y), X\rangle_\mathcal{g}$ and $\langle Z, \beta^*(Y)\rangle_\mathcal{l}=\langle \beta(Z), Y\rangle_\mathcal{h}$.

      \section{Higher Yang-Mills equations}\label{YME}
      	\begin{enumerate}
      		\item Generalized Yang-Mills equations  
      		
      		  We consider generalized Yang-Mills action
      		\begin{align}\label{GYM}
      			\bm{S_{GYM}}= \int_{\bm{\tilde{M}}} \ll \mathscr{F}, \star \mathscr{F} \gg.
      		\end{align}
      		Take the variational derivative of  \eqref{GYM}
      		\begin{align}
      			\delta \bm{S_{GYM}}= \int_{\bm{\tilde{M}}} \ll \delta \mathscr{F}, \star \mathscr{F} \gg +  \ll  \mathscr{F}, \delta \star \mathscr{F} \gg,
      		\end{align}
      		where 
      		\begin{align}
      			\delta \mathscr{F} = (\delta F + k \delta \overset{2}{A} , \delta D \overset{2}{A})
      			%                                &= (d \delta \overset{1}{A} + \overset{1}{A} \wedge^{[,]}\delta  \overset{1}{A}  + k \delta \overset{2}{A}, d \delta \overset{2}{A} + \delta \overset{1}{A} \wedge^{[,]}\overset{2}{A} - \delta \overset{2}{A} \wedge^{[,]} \overset{1}{A} ),
      		\end{align}
      		and
      		\begin{align}
      			\star \mathscr{F} = ((-1)^{n-1} \ast D \overset{2}{A}, \ast F + k \ast \overset{2}{A}).
      		\end{align}
      		Then we can see that
      		\begin{align}
      			\ll \delta \mathscr{F}, \star \mathscr{F} \gg = ( \langle \delta(F + k \overset{2}{A}), (-1)^n \ast D \overset{2}{A} \rangle, \langle \delta(F + k \overset{2}{A}), \ast (F + k \overset{2}{A})\rangle + \langle \delta D \overset{2}{A}, \ast D \overset{2}{A}\rangle),
      			%	((-1)^{n-1}\langle d \delta \overset{1}{A} + \overset{1}{A} \wedge^{[,]}\delta  \overset{1}{A}  + k \delta \overset{2}{A}, \ast D \overset{2}{A} \rangle, \langle d \delta \overset{1}{A} + \overset{1}{A} \wedge^{[,]}\delta  \overset{1}{A}  + k \delta \overset{2}{A}, \ast F + k \ast \overset{2}{A} \rangle \\
      			%	&\ \ \ \ + \langle d \delta \overset{2}{A} + \delta \overset{1}{A} \wedge^{[,]}\overset{2}{A} - \delta \overset{2}{A} \wedge^{[,]} \overset{1}{A}, \ast D \overset{2}{A} \rangle )，
      		\end{align}
      		\begin{align}
      			\ll  \mathscr{F}, \delta \star \mathscr{F} \gg = ( \langle F + k \overset{2}{A}, (-1)^n \delta \ast D \overset{2}{A} \rangle, \langle F + k \overset{2}{A},  \delta\ast (F + k \overset{2}{A})\rangle + \langle D \overset{2}{A}, \delta \ast D \overset{2}{A}\rangle).
      			%	((-1)^{n-1}\langle d \delta \overset{1}{A} + \overset{1}{A} \wedge^{[,]}\delta  \overset{1}{A}  + k \delta \overset{2}{A}, \ast D \overset{2}{A} \rangle, \langle d \delta \overset{1}{A} + \overset{1}{A} \wedge^{[,]}\delta  \overset{1}{A}  + k \delta \overset{2}{A}, \ast F + k \ast \overset{2}{A} \rangle \\
      			%	&\ \ \ \ + \langle d \delta \overset{2}{A} + \delta \overset{1}{A} \wedge^{[,]}\overset{2}{A} - \delta \overset{2}{A} \wedge^{[,]} \overset{1}{A}, \ast D \overset{2}{A} \rangle )
      		\end{align}
      		It follows that
      		\begin{align*}
      			\delta \bm{S_{GYM}}&= \int_{\bm{\tilde{M}}}  ( \langle \delta(F + k \overset{2}{A}), (-1)^n \ast D \overset{2}{A} \rangle, \langle \delta(F + k \overset{2}{A}), \ast (F + k \overset{2}{A})\rangle + \langle \delta D \overset{2}{A}, \ast D \overset{2}{A}\rangle) \\
      			&+ \int_{\bm{\tilde{M}}}  ( \langle F + k \overset{2}{A}, (-1)^n \delta \ast D \overset{2}{A} \rangle, \langle F + k \overset{2}{A},  \delta\ast (F + k \overset{2}{A})\rangle + \langle D \overset{2}{A}, \delta \ast D \overset{2}{A}\rangle)\\
      			&= \int_{M}  \langle \delta(F + k \overset{2}{A}), \ast (F + k \overset{2}{A})\rangle + \langle \delta D \overset{2}{A}, \ast D \overset{2}{A}\rangle + \int_{M} \langle F+ k \overset{2}{A},  \delta\ast (F + k \overset{2}{A})\rangle + \langle D \overset{2}{A}, \delta \ast D \overset{2}{A}\rangle\\
      			&= 2 \int_{M}\langle \delta(F + k \overset{2}{A}), \ast (F + k \overset{2}{A})\rangle + \langle \delta D \overset{2}{A}, \ast D \overset{2}{A}\rangle\\
      			&= 2\int_{M}  \langle d \delta \overset{1}{A} + \overset{1}{A} \wedge^{[,]}\delta  \overset{1}{A}  + k \delta \overset{2}{A}, \ast F + k \ast \overset{2}{A} \rangle + \langle d \delta \overset{2}{A} + \delta \overset{1}{A} \wedge^{[,]}\overset{2}{A} - \delta \overset{2}{A} \wedge^{[,]} \overset{1}{A}, \ast D \overset{2}{A} \rangle )\\
      			&= 2\int_{M} \langle \delta \overset{1}{A}, d \ast F \rangle + \langle \delta \overset{1}{A}, \overset{1}{A}\wedge^{[,]} \ast F \rangle + \langle \delta \overset{2}{A}, k \ast F \rangle + \langle \delta \overset{1}{A}, k d \ast  \overset{2}{A} \rangle + \langle \delta \overset{1}{A}, \overset{1}{A}\wedge^{[,]} k \ast \overset{2}{A} \rangle  \\
      			&+\langle \delta \overset{2}{A}, k^2 \ast \overset{2}{A} \rangle - \langle \delta \overset{2}{A}, d \ast D \overset{2}{A} \rangle + \langle \delta \overset{1}{A}, \overset{2}{A} \wedge^{[,]}\ast D \overset{2}{A} \rangle - \langle \delta \overset{2}{A}, \overset{1}{A}\wedge^{[,]} \ast D \overset{2}{A} \rangle\\
      			&= 2\int_{M} \langle \delta \overset{1}{A}, d \ast F + \overset{1}{A}\wedge^{[,]}\ast F + k d \ast \overset{2}{A} + k \overset{1}{A}\wedge^{[,]} \ast  \overset{2}{A} + \overset{2}{A}\wedge^{[,]}\ast D \overset{2}{A} \rangle + \langle \delta \overset{2}{A}, k \ast F + k^2 \overset{2}{A} \\
      			&- d \ast D \overset{2}{A} - \overset{1}{A} \wedge^{[,]}\ast D \overset{2}{A}\rangle\\
      			&= 2\int_{M} \langle \delta \overset{1}{A}, D \ast F + k D \ast \overset{2}{A} + \overset{2}{A}\wedge^{[,]}\ast D \overset{2}{A} \rangle + \langle \delta \overset{2}{A}, k \ast F + k^2 \ast \overset{2}{A} - D \ast D \overset{2}{A} \rangle,
      		\end{align*}
      		by using \eqref{in3} and \eqref{C10}.
     \item Generalized 2-form Yang-Mills equations
    
   We consider generalized 2-form Yang-Mills action
   \begin{align}\label{act2}
   	\bm{S_{G2YM}}= \int_{\bm{\tilde{M}}} \ll \mathscr{F}, \star \mathscr{F} \gg +  \ll \mathscr{G}, \star \mathscr{G} \gg.
   \end{align}
It is well known that the generalized 2-form Yang-Mills fields are 
\begin{align*}
	&\mathscr{F} = (\overset{2}{\mathbb{F}} + k \overset{2}{A}, D \overset{2}{A}- \alpha(\overset{3}{B})),\\
	&\mathscr{G} = (\overset{3}{\mathbb{G}} - k \overset{3}{B}, D \overset{3}{B} + \overset{2}{A} \wedge^{\vartriangleright} \overset{2}{B}),
\end{align*}
and their dual forms are as follows using \eqref{star}
   \begin{align}
	&\star \mathscr{F} = ((-1)^{n-1} \ast (D \overset{2}{A} - \alpha (\overset{3}{B})), \ast (\overset{2}{\mathbb{F}}+ k  \overset{2}{A})),\\
	&\star \mathscr{G} = ((-1)^n \ast (D \overset{3}{B} + \overset{2}{A}\wedge^{\vartriangleright}\overset{2}{B}), \ast (\overset{3}{\mathbb{G}}- k  \overset{3}{B})).
\end{align}
Hence, the action \eqref{act2} can be rewrite as
  \begin{align}\label{act22}
	\bm{S_{G2YM}}&= \int_{\bm{\tilde{M}}} \ll (\overset{2}{\mathbb{F}} + k \overset{2}{A}, D \overset{2}{A}- \alpha(\overset{3}{B})),  ((-1)^{n-1} \ast (D \overset{2}{A} - \alpha (\overset{3}{B})), \ast (\overset{2}{\mathbb{F}}+ k \overset{2}{A})) \gg \notag \\
	&+ \int_{\bm{\tilde{M}}} \ll  (\overset{3}{\mathbb{G}} - k \overset{3}{B}, D \overset{3}{B} + \overset{2}{A} \wedge^{\vartriangleright} \overset{2}{B}), ((-1)^n \ast (D \overset{3}{B} + \overset{2}{A}\wedge^{\vartriangleright}\overset{2}{B}), \ast (\overset{3}{\mathbb{G}}- k  \overset{3}{B}))\gg  \notag \\
	&= \int_{M} \langle \overset{2}{\mathbb{F}} + k \overset{2}{A}, \ast ( \overset{2}{\mathbb{F}} + k \overset{2}{A}) \rangle + \langle D \overset{2}{A} - \alpha(\overset{3}{B}), \ast (D \overset{2}{A} - \alpha(\overset{3}{B}))\rangle \notag\\
	&+  \int_{M} \langle \overset{3}{\mathbb{G}} - k \overset{3}{B}, \ast(\overset{3}{\mathbb{G}} - k \overset{3}{B})\rangle + \langle D \overset{3}{B} + \overset{2}{A}\wedge^{\vartriangleright}\overset{2}{B}, \ast(D \overset{3}{B} + \overset{2}{A}\wedge^{\vartriangleright}\overset{2}{B}) \rangle.
\end{align}

   Take the variational derivative $\delta \bm{S_{G2YM}}$,
   \begin{align}
   	\delta \bm{S_{G2YM}}&=2\int_{M} \langle \delta(\overset{2}{\mathbb{F}} + k \overset{2}{A}), \ast ( \overset{2}{\mathbb{F}} + k \overset{2}{A}) \rangle + \langle \delta (D \overset{2}{A} - \alpha(\overset{3}{B})), \ast (D \overset{2}{A} - \alpha(\overset{3}{B}))\rangle \notag\\
   	&+ 2\int_{M} \langle \delta( \overset{3}{\mathbb{G}} - k \overset{3}{B}), \ast(\overset{3}{\mathbb{G}} - k \overset{3}{B})\rangle + \langle \delta (D \overset{3}{B} + \overset{2}{A}\wedge^{\vartriangleright}\overset{2}{B}), \ast(D \overset{3}{B} + \overset{2}{A}\wedge^{\vartriangleright}\overset{2}{B}) \rangle \notag\\
   	&= 2 \int_{M}   \langle d \delta \overset{1}{A} + \overset{1}{A} \wedge^{[,]} \delta \overset{1}{A} - \alpha(\delta \overset{2}{B}) + k \delta \overset{2}{A}, \ast (\overset{2}{\mathbb{F}} + k \overset{2}{A})\rangle  + \langle  d \delta \overset{2}{A} + \delta \overset{1}{A}  \wedge^{[,]} \overset{2}{A} - \delta \overset{2}{A} \wedge^{[,]}  \overset{1}{A} \notag\\
   	&- \alpha (\delta \overset{3}{B}), \ast (D \overset{2}{A} - \alpha (\overset{3}{B}))\rangle + \langle  d \delta \overset{2}{B} + \delta \overset{1}{A} \wedge^{\vartriangleright} \overset{2}{B} + \overset{1}{A}\wedge^{\vartriangleright}\delta \overset{2}{B}  - k \delta \overset{3}{B}, \ast (\overset{3}{\mathbb{G}} - k \overset{3}{B}) \rangle + \langle d \delta \overset{3}{B} \notag\\
   	&+ \delta \overset{1}{A}\wedge^{\vartriangleright} \overset{3}{B} + \overset{1}{A}\wedge^{\vartriangleright}\delta  \overset{3}{B}  + \delta \overset{2}{A} \wedge^{\vartriangleright}\overset{2}{B} + \overset{2}{A} \wedge^{\vartriangleright}\delta \overset{2}{B}, \ast (D \overset{3}{B} + \overset{2}{A}\wedge^{\vartriangleright} \overset{2}{B})\rangle \notag \\
   	&= 2 \int_{M}   \langle \delta \overset{1}{A}, d \ast (\overset{2}{\mathbb{F}} + k \overset{2}{A})\rangle + \langle  \delta \overset{1}{A}, \overset{1}{A} \wedge^{[,]} \ast (\overset{2}{\mathbb{F}} + k \overset{2}{A})\rangle - \langle \delta \overset{2}{B}, \alpha^{\ast}\ast (\overset{2}{\mathbb{F}} + k \overset{2}{A}) \rangle + \langle \delta \overset{2}{A}, \ast (k \overset{2}{\mathbb{F}} \notag\\
   	&+ k^2 \overset{2}{A}) \rangle - \langle \delta \overset{2}{A}, d \ast (D \overset{2}{A} - \alpha (\overset{3}{B})) \rangle + \langle \delta \overset{1}{A}, \overset{2}{A}\wedge^{[,]}\ast (D \overset{2}{A}- \alpha (\overset{3}{B})\rangle - \langle  \delta \overset{2}{A}, \overset{1}{A}\wedge^{[,]} \ast (D \overset{2}{A} \notag \\
   	&- \alpha (\overset{3}{B})) \rangle - \langle \delta \overset{3}{B}, \alpha^{\ast}\ast (D \overset{2}{A} - \alpha (\overset{3}{B})) \rangle - \langle \delta \overset{2}{B}, d \ast (\overset{3}{\mathbb{G}} - k \overset{3}{B})\rangle - \langle \delta \overset{1}{A}, \overline{\sigma}(\ast (\overset{3}{\mathbb{G}} - k \overset{3}{B}), \overset{2}{B})\rangle \notag \\
   	&- \langle \delta \overset{2}{B},  \overset{1}{A} \wedge^{\vartriangleright} \ast (\overset{3}{\mathbb{G}} - k \overset{3}{B})\rangle- \langle \delta \overset{3}{B}, \ast (k \overset{3}{\mathbb{G}} - k^2 \overset{3}{B})\rangle + \langle \delta \overset{3}{B}, d \ast (D \overset{3}{B} + \overset{2}{A}\wedge^{\vartriangleright} \overset{2}{B})\rangle \notag\\
   	&+ (-1)^{n+1} \langle \delta \overset{1}{A}, \overline{\sigma}(\ast (D \overset{3}{B} + \overset{2}{A}\wedge^{\vartriangleright} \overset{2}{B}),  \overset{3}{B}) \rangle   +  \langle \delta \overset{3}{B}, \overset{1}{A}\wedge^{\vartriangleright}\ast (D \overset{3}{B} +\overset{2}{A}\wedge^{\vartriangleright} \overset{2}{B}) \rangle- \langle \delta \overset{2}{A}, \overline{\sigma}(\ast (D \overset{3}{B} \notag\\
   	&+ \overset{2}{A}\wedge^{\vartriangleright}\overset{2}{B}), \overset{2}{B}) \rangle - \langle \delta \overset{2}{B},  \overset{2}{A} \wedge^{\vartriangleright}\ast (D \overset{3}{B}+ \overset{2}{A}\wedge^{\vartriangleright} \overset{2}{B})\rangle \notag\\
   	&= 2\int_{M} \langle \delta \overset{1}{A}, D \ast (\overset{2}{\mathbb{F}} + k \overset{2}{A}) + \overset{2}{A}\wedge^{[,]}\ast (D \overset{2}{A}- \alpha (\overset{3}{B}))- \overline{\sigma}(\ast(\overset{3}{\mathbb{G}}- k \overset{3}{B}), \overset{2}{B}) + (-1)^{n+1}\overline{\sigma}(\ast (D \overset{3}{B} \notag\\
   	&+ \overset{2}{A}\wedge^{\vartriangleright}\overset{2}{B}), \overset{3}{B})\rangle - \langle\delta \overset{2}{A}, D \ast (D \overset{2}{A} - \alpha (\overset{3}{B}) )- k \ast (\overset{2}{\mathbb{F}} + k \overset{2}{A}) + \overline{\sigma} (\ast (D \overset{3}{B}+ \overset{2}{A}\wedge^{\vartriangleright} \overset{2}{B}), \overset{2}{B})\rangle \notag \\
   	&- \langle \delta \overset{2}{B}, D \ast (\overset{3}{\mathbb{G}}- k \overset{3}{B} )+ \alpha^{\ast}\ast(\overset{2}{\mathbb{F}} + k \overset{2}{A}) + \overset{2}{A}\wedge^{\vartriangleright}\ast (D \overset{3}{B}+ \overset{2}{A}\wedge^{\vartriangleright}\overset{2}{B})\rangle \notag \\
   	&+ \langle \delta \overset{3}{B}, D \ast (D \overset{3}{B} + \overset{2}{A}\wedge^{\vartriangleright} \overset{2}{B})- \alpha^{\ast} \ast (D \overset{2}{A} - \alpha (\overset{3}{B})) -k \ast (\overset{3}{\mathbb{G}} - k \overset{3}{B}) \rangle,
   \end{align}
by using \eqref{C10}, \eqref{C11}, \eqref{C13} and \eqref{C17}.

  \item Generalized 3-form Yang-Mills equations
  
    We consider generalized 3-form Yang-Mills action
   \begin{align}\label{G3YM1}
   	\bm{S_{G3YM}}= \int_{\bm{\tilde{M}}} \ll \mathscr{F}, \star \mathscr{F} \gg +  \ll \mathscr{G}, \star \mathscr{G} \gg + \ll \mathscr{H}, \star \mathscr{H} \gg.
   \end{align}
The generalized 3-form Yang-Mills fields are given by 
	\begin{align*}
	&	\mathscr{F} = (\overset{2}{\mathbb{F}} + k \overset{2}{A}, D \overset{2}{A}- \alpha(\overset{3}{B}));\\
	&	\mathscr{G} = (\overset{3}{\mathbb{G}} - k \overset{3}{B}, D \overset{3}{B} + \overset{2}{A} \wedge^{\vartriangleright} \overset{2}{B} - \beta(\overset{4}{C}));\\
	&	\mathscr{H} = (\overset{4}{\mathbb{H}} + k \overset{4}{C}, D \overset{4}{C} - \overset{2}{A} \wedge^{\vartriangleright}\overset{3}{C} + \overset{2}{B}\wedge^{\{, \}}\overset{3}{B} +  \overset{3}{B}\wedge^{\{, \}}\overset{2}{B}),
\end{align*}
and their dual forms are as follows using  \eqref{star}
\begin{align}
	&\star \mathscr{F} = ((-1)^{n-1} \ast (D \overset{2}{A} - \alpha (\overset{3}{B})), \ast (\overset{2}{\mathbb{F}}+ k  \overset{2}{A}));\\
	&\star \mathscr{G} = ((-1)^n \ast (D \overset{3}{B} + \overset{2}{A}\wedge^{\vartriangleright}\overset{2}{B} - \beta(\overset{4}{C})), \ast (\overset{3}{\mathbb{G}}- k  \overset{3}{B}));\\
	&\star \mathscr{H} = ((-1)^{n-1} \ast( D \overset{4}{C} - \overset{2}{A} \wedge^{\vartriangleright}\overset{3}{C} + \overset{2}{B}\wedge^{\{, \}}\overset{3}{B} +  \overset{3}{B}\wedge^{\{, \}}\overset{2}{B}), \ast( \overset{4}{\mathbb{H}} + k \overset{4}{C})).
\end{align}
Hence, the action \eqref{G3YM1} can be rewrite as
   \begin{align*}\label{G3YM1}
	\bm{S_{G3YM}}&= \int_{\bm{\tilde{M}}} \ll  (\overset{2}{\mathbb{F}} + k \overset{2}{A}, D \overset{2}{A}- \alpha(\overset{3}{B})),((-1)^{n-1} \ast (D \overset{2}{A} - \alpha (\overset{3}{B})), \ast (\overset{2}{\mathbb{F}}+ k  \overset{2}{A})) \gg \\
	&+  \ll (\overset{3}{\mathbb{G}} - k \overset{3}{B}, D \overset{3}{B} + \overset{2}{A} \wedge^{\vartriangleright} \overset{2}{B} - \beta(\overset{4}{C})), ((-1)^n \ast (D \overset{3}{B} + \overset{2}{A}\wedge^{\vartriangleright}\overset{2}{B} - \beta(\overset{4}{C})), \ast (\overset{3}{\mathbb{G}}- k  \overset{3}{B})) \gg \\
	&+ \ll (\overset{4}{\mathbb{H}} + k \overset{4}{C}, D \overset{4}{C} - \overset{2}{A} \wedge^{\vartriangleright}\overset{3}{C} + \overset{2}{B}\wedge^{\{, \}}\overset{3}{B} +  \overset{3}{B}\wedge^{\{, \}}\overset{2}{B}), ((-1)^{n-1} \ast( D \overset{4}{C} - \overset{2}{A} \wedge^{\vartriangleright}\overset{3}{C}\\
	& + \overset{2}{B}\wedge^{\{, \}}\overset{3}{B} +  \overset{3}{B}\wedge^{\{, \}}\overset{2}{B}), \ast( \overset{4}{\mathbb{H}} + k \overset{4}{C}))\gg\\
	&= \int_{M} \langle \overset{2}{\mathbb{F}} + k \overset{2}{A}, \ast (\overset{2}{\mathbb{F}} + k \overset{2}{A})\rangle + \langle D \overset{2}{A}- \alpha(\overset{3}{B}), \ast (D \overset{2}{A}- \alpha(\overset{3}{B}))\rangle \\
	&+ \langle \overset{3}{\mathbb{G}} - k \overset{3}{B}, \ast (\overset{3}{\mathbb{G}} - k \overset{3}{B})\rangle + \langle D \overset{3}{B} + \overset{2}{A} \wedge^{\vartriangleright} \overset{2}{B} - \beta(\overset{4}{C}), \ast(D \overset{3}{B} + \overset{2}{A} \wedge^{\vartriangleright} \overset{2}{B} - \beta(\overset{4}{C}))\rangle \\
	&+ \langle \overset{4}{\mathbb{H}} + k \overset{4}{C}, \ast (\overset{4}{\mathbb{H}} + k \overset{4}{C})\rangle + \langle D \overset{4}{C} - \overset{2}{A} \wedge^{\vartriangleright}\overset{3}{C} + \overset{2}{B}\wedge^{\{, \}}\overset{3}{B} +  \overset{3}{B}\wedge^{\{, \}}\overset{2}{B}, \ast  ( D\overset{4}{C} - \overset{2}{A} \wedge^{\vartriangleright}\overset{3}{C} \\
	&+ \overset{2}{B}\wedge^{\{, \}}\overset{3}{B} +  \overset{3}{B}\wedge^{\{, \}}\overset{2}{B}) \rangle.
\end{align*}

   Take the variational derivative $\delta \bm{S_{G3YM}}$,
   \begin{align*}
   	\delta \bm{S_{G3YM}}&= 2\int_{M} \langle \delta(\overset{2}{\mathbb{F}} + k \overset{2}{A}), \ast (\overset{2}{\mathbb{F}} + k \overset{2}{A})\rangle + \langle \delta (D \overset{2}{A}- \alpha(\overset{3}{B})), \ast (D \overset{2}{A}- \alpha(\overset{3}{B}))\rangle \\
   	&+ \langle \delta (\overset{3}{\mathbb{G}} - k \overset{3}{B}), \ast (\overset{3}{\mathbb{G}} - k \overset{3}{B})\rangle + \langle \delta (D \overset{3}{B} + \overset{2}{A} \wedge^{\vartriangleright} \overset{2}{B} - \beta(\overset{4}{C})), \ast(D \overset{3}{B} + \overset{2}{A} \wedge^{\vartriangleright} \overset{2}{B} \\
   	&- \beta(\overset{4}{C}))\rangle + \langle \delta (\overset{4}{\mathbb{H}} + k \overset{4}{C}), \ast (\overset{4}{\mathbb{H}} + k \overset{4}{C})\rangle + \langle \delta (D \overset{4}{C} - \overset{2}{A} \wedge^{\vartriangleright}\overset{3}{C} + \overset{2}{B}\wedge^{\{, \}}\overset{3}{B}  \\
   	&+  \overset{3}{B}\wedge^{\{, \}}\overset{2}{B}), \ast  ( D\overset{4}{C} - \overset{2}{A} \wedge^{\vartriangleright}\overset{3}{C}+ \overset{2}{B}\wedge^{\{, \}}\overset{3}{B} +  \overset{3}{B}\wedge^{\{, \}}\overset{2}{B}) \rangle \\
   	&= 2 \int_{M} \langle \delta \overset{1}{A}, d \ast (\overset{2}{\mathbb{F}} + k \overset{2}{A})\rangle  + \langle \delta \overset{1}{A},  \overset{1}{A}\wedge^{[, ]}\ast (\overset{2}{\mathbb{F}} + k \overset{2}{A})\rangle  + \langle \langle \delta \overset{2}{A}, k \ast (\overset{2}{\mathbb{F}} + k \overset{2}{A})\rangle \\
   	&- \langle \delta \overset{2}{B}, \alpha^{\ast} \ast (\overset{2}{\mathbb{F}} + k \overset{2}{A})\rangle- \langle \delta \overset{2}{A},  d \ast (D \overset{2}{A} - \alpha(\overset{3}{B}))\rangle  - \langle  \delta \overset{2}{A}, \overset{1}{A}\wedge^{[, ]}\ast (D \overset{2}{A} - \alpha(\overset{3}{B}))\rangle \\
   	&+ \langle \delta \overset{1}{A}, \overset{2}{A}\wedge^{[, ]}\ast (D \overset{2}{A} - \alpha(\overset{3}{B}))\rangle - \langle \delta \overset{3}{B}, \alpha^{\ast}(\ast (D \overset{2}{A} - \alpha(\overset{3}{B})))\rangle \\
   	&- \langle \delta \overset{2}{B}, d \ast (\overset{3}{\mathbb{G}} - k \overset{3}{B})\rangle - \langle \delta \overset{1}{A}, \overline{\sigma}(\ast (\overset{3}{\mathbb{G}} - k \overset{3}{B}), \overset{2}{B})\rangle - \langle \delta \overset{2}{B}, \overset{1}{A}\wedge^{\vartriangleright}\ast (\overset{3}{\mathbb{G}} - k \overset{3}{B})\rangle \\
   	&- \langle \delta \overset{3}{C}, \beta^{\ast}(\ast \overset{3}{\mathbb{G}} - k \ast \overset{3}{B}) \rangle - \langle \delta \overset{3}{B}, k \ast (\overset{3}{\mathbb{G}} - k \overset{3}{B})\rangle + \langle \delta \overset{3}{B}, d \ast (D \overset{3}{B} + \overset{2}{A}\wedge^{\vartriangleright}\overset{2}{B} - \beta(\overset{4}{C}))\rangle \\
   	&+ (-1)^{n+1}\langle \delta \overset{1}{A}, \overline{\sigma}(\ast(D \overset{3}{B} + \overset{2}{A}\wedge^{\vartriangleright}\overset{2}{B} - \beta(\overset{4}{C})), \overset{3}{B})\rangle + \langle \delta \overset{3}{B}, \overset{1}{A}\wedge^{\vartriangleright}\ast(D \overset{3}{B} + \overset{2}{A}\wedge^{\vartriangleright}\overset{2}{B} \\
   	&- \beta(\overset{4}{C}))\rangle - \langle \delta \overset{2}{A},  \overline{\sigma}(\ast (D \overset{3}{B} + \overset{2}{A}\wedge^{\vartriangleright}\overset{2}{B} - \beta(\overset{4}{C})), \overset{2}{B})\rangle - \langle \delta \overset{2}{B}, \overset{2}{A}\wedge^{\vartriangleright}\ast(D \overset{3}{B} + \overset{2}{A}\wedge^{\vartriangleright}\overset{2}{B}\\
   	& - \beta(\overset{4}{C}))\rangle - \langle \delta \overset{4}{C}, \beta^{\ast}(\ast(D \overset{3}{B} + \overset{2}{A} \wedge^{\vartriangleright}\overset{2}{B} - \beta(\overset{4}{C})))\rangle\\
   	&+  \langle \delta \overset{3}{C}, d \ast (\overset{4}{\mathbb{H}} + k \overset{4}{C}) \rangle + (-1)^{n+1}\langle \delta \overset{1}{A}, \overline{\kappa}(\ast (\overset{4}{\mathbb{H}} + k \overset{4}{C}), \overset{3}{C})\rangle + \langle \delta \overset{3}{C}, \overset{1}{A}\wedge^{\vartriangleright}\ast(\overset{4}{\mathbb{H}} + k \overset{4}{C})\rangle \\
   	&- \langle \delta \overset{2}{B}, \overline{\eta_2}(\ast (\overset{4}{\mathbb{H}} + k \overset{4}{C}), \overset{2}{B})\rangle - \langle \delta \overset{2}{B}, \overline{\eta_1}(\ast (\overset{4}{\mathbb{H}} + k \overset{4}{C}), \overset{2}{B})\rangle + \langle \delta \overset{4}{C}, k \ast (\overset{4}{\mathbb{H}} + k \overset{4}{C})\rangle - \langle \delta \overset{4}{C}, \\
   	&d \ast (D \overset{4}{C} - \overset{2}{A}\wedge^{\vartriangleright}\overset{3}{C} + \overset{2}{B}\wedge^{\{, \}}\overset{3}{B} + \overset{3}{B}\wedge^{\{, \}}\overset{2}{B}) \rangle - \langle \delta \overset{1}{A}, \overline{\kappa}(\ast (D \overset{4}{C} - \overset{2}{A}\wedge^{\vartriangleright}\overset{3}{C} + \overset{2}{B}\wedge^{\{, \}}\overset{3}{B} \\
   	&+ \overset{3}{B}\wedge^{\{, \}}\overset{2}{B}) , \overset{4}{C})- \langle \delta \overset{4}{C} , \overset{1}{A}\wedge^{\vartriangleright}\ast (D \overset{4}{C} - \overset{2}{A}\wedge^{\vartriangleright}\overset{3}{C} + \overset{2}{B}\wedge^{\{, \}}\overset{3}{B} + \overset{3}{B}\wedge^{\{, \}}\overset{2}{B})  \rangle \\
   	&+ (-1)^{n +1}\langle \delta\overset{2}{A}, \overline{\kappa}(\ast (D \overset{4}{C} - \overset{2}{A}\wedge^{\vartriangleright}\overset{3}{C} + \overset{2}{B}\wedge^{\{, \}}\overset{3}{B} + \overset{3}{B}\wedge^{\{, \}}\overset{2}{B}) , \overset{3}{C})\rangle  + \langle \delta \overset{3}{C}, \overset{2}{A}\wedge^{\vartriangleright}\ast (D \overset{4}{C} -\\
   	& \overset{2}{A}\wedge^{\vartriangleright}\overset{3}{C} + \overset{2}{B}\wedge^{\{, \}}\overset{3}{B} + \overset{3}{B}\wedge^{\{, \}}\overset{2}{B})  \rangle   + (-1)^n \langle \delta \overset{2}{B}, \overline{\eta_2}(\ast (D \overset{4}{C} - \overset{2}{A}\wedge^{\vartriangleright}\overset{3}{C} + \overset{2}{B}\wedge^{\{, \}}\overset{3}{B} \\
   	&+ \overset{3}{B}\wedge^{\{, \}}\overset{2}{B}) ,  \overset{3}{B})\rangle -  \langle \delta \overset{3}{B}, \overline{\eta_1}(\ast (D \overset{4}{C} - \overset{2}{A}\wedge^{\vartriangleright}\overset{3}{C} + \overset{2}{B}\wedge^{\{, \}}\overset{3}{B} + \overset{3}{B}\wedge^{\{, \}}\overset{2}{B}) , \overset{2}{B})\rangle -  \langle \delta \overset{3}{B}, \\
   	&\overline{\eta_2}(\ast (D \overset{4}{C} - \overset{2}{A}\wedge^{\vartriangleright}\overset{3}{C}  + \overset{2}{B}\wedge^{\{, \}}\overset{3}{B} + \overset{3}{B}\wedge^{\{, \}}\overset{2}{B}) , \overset{2}{B})\rangle  + (-1)^n \langle \delta \overset{2}{B}, \overline{\eta_1}(\ast (D \overset{4}{C} - \overset{2}{A}\wedge^{\vartriangleright}\overset{3}{C} \\
   	&+ \overset{2}{B}\wedge^{\{, \}}\overset{3}{B} + \overset{3}{B}\wedge^{\{, \}}\overset{2}{B}) ,  \overset{3}{B})\rangle\\
   	&= 2 \int_{M} \langle \delta \overset{1}{A}, D \ast (\overset{2}{\mathbb{F}} + k \overset{2}{A}) + \overset{2}{A}\wedge^{[, ]}\ast (D \overset{2}{A} - \alpha(\overset{2}{B}) ) - \overline{\sigma}(\ast (\overset{3}{\mathbb{G}} - k \overset{3}{B}), \overset{2}{B}) + (-1)^{n+1} \overline{\sigma}(\ast (D \overset{3}{B} \\
   	&+ \overset{2}{A}\wedge^{\vartriangleright}\overset{2}{B} - \beta(\overset{4}{C})), \overset{3}{B}) + (-1)^{n+1}\overline{\kappa}(\ast (\overset{4}{\mathbb{H}}+ k \overset{4}{C}), \overset{3}{C}) - \overline{\kappa}(\ast (D \overset{4}{C} - \overset{2}{A}\wedge^{\vartriangleright}\overset{3}{C} + \overset{2}{B}\wedge^{\{, \}}\overset{3}{B} \\
   	&+ \overset{3}{B}\wedge^{\{, \}}\overset{2}{B}) , \overset{4}{C}) \rangle+ \langle \delta \overset{2}{A}, k \ast (\overset{2}{\mathbb{F}} + k \overset{2}{A}) - D \ast (D \overset{2}{A} - \alpha(\overset{3}{B})) - \overline{\sigma}(\ast (D \overset{3}{B} + \overset{2}{A}\wedge^{\vartriangleright}\overset{2}{B} - \beta(\overset{4}{C})),  \\
   	&\overset{2}{B})+ (-1)^{n+1} \overline{\kappa}(\ast (D \overset{4}{C}- \overset{2}{A}\wedge^{\vartriangleright}\overset{3}{C} + \overset{2}{B}\wedge^{\{, \}}\overset{3}{B} + \overset{3}{B}\wedge^{\{, \}}\overset{2}{B}) , \overset{3}{C})\rangle \\
   	&- \langle \delta \overset{2}{B}, \alpha^{\ast}(\ast \overset{2}{\mathbb{F}} + k \ast \overset{2}{A}) + D \ast (\overset{3}{\mathbb{G}} - k \overset{3}{B}) + \overset{2}{A}\wedge^{\vartriangleright}\ast (D \overset{3}{B} + \overset{2}{A}\wedge^{\vartriangleright}\overset{2}{B} - \beta(\overset{4}{C}))+ \overline{\eta_2}(\ast (\overset{4}{\mathbb{H}}  \\
   	&+ k \overset{4}{C}), \overset{2}{B}) +  \overline{\eta_1}(\ast (\overset{4}{\mathbb{H}}+ k \overset{4}{C}), \overset{2}{B}) + (-1)^{n+1}\overline{\eta_1}(\ast (D \overset{4}{C} - \overset{2}{A}\wedge^{\vartriangleright}\overset{3}{C} + \overset{2}{B}\wedge^{\{, \}}\overset{3}{B} + \overset{3}{B}\wedge^{\{, \}}\overset{2}{B}) , \\
   	&\overset{3}{B}) + (-1)^{n+1}\overline{\eta_2}(\ast (D \overset{4}{C} - \overset{2}{A}\wedge^{\vartriangleright}\overset{3}{C}  + \overset{2}{B}\wedge^{\{, \}}\overset{3}{B} + \overset{3}{B}\wedge^{\{, \}}\overset{2}{B}) , \overset{3}{B}) \rangle \\
   	&- \langle \delta \overset{3}{B}, \alpha^{\ast}(\ast D \overset{2}{A} - \ast \alpha(\overset{3}{B})) + k \ast(\overset{3}{\mathbb{G}}- k \overset{3}{B}) - D\ast(D \overset{3}{B} + \overset{2}{A}\wedge^{\vartriangleright}\overset{2}{B} - \beta(\overset{4}{C})) + \overline{\eta_1}(\ast (D \overset{4}{C} -\\
   	&\overset{2}{A}\wedge^{\vartriangleright}\overset{3}{C} + \overset{2}{B}\wedge^{\{, \}}\overset{3}{B} + \overset{3}{B}\wedge^{\{, \}}\overset{2}{B}) , \overset{2}{B})+ \overline{\eta_2}(\ast (D \overset{4}{C}  - \overset{2}{A}\wedge^{\vartriangleright}\overset{3}{C} + \overset{2}{B}\wedge^{\{, \}}\overset{3}{B} + \overset{3}{B}\wedge^{\{, \}}\overset{2}{B}) , \overset{2}{B}) \rangle\\
   	&+ \langle \delta \overset{3}{C}, D \ast(\overset{4}{\mathbb{H}} + k \overset{4}{C}) + \overset{2}{A}\wedge^{\vartriangleright}(\ast (D \overset{4}{C} - \overset{2}{A}\wedge^{\vartriangleright}\overset{3}{C} + \overset{2}{B}\wedge^{\{, \}}\overset{3}{B} + \overset{3}{B}\wedge^{\{, \}}\overset{2}{B}) ) - \beta^{\ast}(\ast \overset{3}{\mathbb{G}} \\
   	&- k \ast \overset{3}{B})\rangle+\langle \delta\overset{4}{C}, k \ast (\overset{4}{\mathbb{H}} + k \overset{4}{C})- D \ast (D \overset{4}{C} - \overset{2}{A}\wedge^{\vartriangleright}\overset{3}{C} + \overset{2}{B}\wedge^{\{, \}}\overset{3}{B} + \overset{3}{B}\wedge^{\{, \}}\overset{2}{B})  -  \beta^{\ast}\ast (D \overset{3}{B} \\
   	&+ \overset{2}{A}\wedge^{\vartriangleright} \overset{2}{B} - \beta(\overset{4}{C}))\rangle,
   \end{align*}
by using \eqref{C10}, \eqref{C11}, \eqref{C_1AC_2}, \eqref{C13}, \eqref{C15}, \eqref{C16}, \eqref{C17} and \eqref{C18}.
	\end{enumerate}
	    \end{appendices}
    \section*{Acknowledgment}
    This work is supported by the National Natural Science Foundation of China (Nos.11871350, NSFC no. 11971322).
    
		\bibliographystyle{plain}
	
\end{document}